\let\chapter\@undefined\makeatother 
\newcommand{\E}{{\mathbb{E}}}
\newcommand{\F}{{\mathcal{F}}}
\newcommand{\A}{{\mathcal{A}}}
\newcommand{\D}{{\mathfrak{D}}}
\newtheorem{proposition}{\textbf{Proposition}}
\newtheorem{assume}{\textbf{Assumption}}
\renewenvironment{proof}[1][Proof]{\begin{trivlist}
\item[\hskip \labelsep {\bfseries #1}]}{\end{trivlist}}
\newcommand{\diff}{\mathrm{d}}
\newcommand{\dt}{{\diff t}}
\newcommand{\ds}{{\diff s}}
\newcommand{\du}{{\diff u}}
\definecolor{myblue}{rgb}{0.8,0.8,1}
\definecolor{myred}{rgb}{1,0.8,0.8}
\definecolor{mygreen}{rgb}{0.8,1,0.8}
\definecolor{mygrey}{rgb}{220,220,220}
\DeclareMathAlphabet{\xcal}{OMS}{cmsy}{m}{n}
\definecolor{dbblue}{RGB}{10,65,155}				
\definecolor{dbred}{RGB}{215,0,50}					
\definecolor{blue}{RGB}{0,113.9850,188.9550} 			
\definecolor{red}{RGB}{216.7500,82.8750,24.9900} 		
\definecolor{green}{RGB}{118.8300,171.8700,47.9400} 	
\definecolor{grey}{RGB}{110,110,110}				
\definecolor{lgrey}{RGB}{210,210,210}				
\definecolor{c1}{RGB}{0,113.9850,188.9550}			
\definecolor{c2}{RGB}{216.7500,82.8750,24.9900}		
\definecolor{c3}{RGB}{236.8950,176.9700,31.8750}		
\definecolor{c4}{RGB}{125.9700,46.9200,141.7800}		
\definecolor{c5}{RGB}{118.8300,171.8700,47.9400}		
\definecolor{c6}{RGB}{76.7550,189.9750,237.9150}		
\definecolor{c7}{RGB}{161.9250,19.8900,46.9200}		
\definecolor{c14}{RGB}{0,102,102}
\newtheorem{theorem}{\textbf{Theorem}}
\newtheorem{lemma}{\textbf{Lemma}}
\newtheorem{remark}{\textbf{Remark}}
\newtheorem{corollary}{\textbf{Corollary}}
\def\E{\mathbb{E}}
\def\ps@pprintTitle{%
  \let\@oddhead\@empty
  \let\@evenhead\@empty
  \def\@oddfoot{\reset@font\hfil\thepage\hfil}
  \let\@evenfoot\@oddfoot
}
\pgfplotsset{compat=1.18}
\providecommand\sf@counterlist{}
\begin{document}
\setlist{noitemsep} 
\onehalfspacing   

\author{Fayçal Drissi\thanks{\rm F. Drissi is with the Oxford-Man Institute, University of Oxford.  Email: \href{mailto:faycal.drissi@omi.ox.ac.uk}{faycal.drissi@omi.ox.ac.uk}. } 
\and Sebastian Jaimungal\thanks{\rm S. Jaimungal is with the Department of Statistical Sciences, University of Toronto and  the Oxford-Man Institute, University of Oxford.   Email:  \href{mailto:sebastian.jaimungal@utoronto.ca}{sebastian.jaimungal@utoronto.ca}.} \and Xuchen Wu\thanks{\rm X. Wu is with the Department of Mathematics, University of Toronto.   Email: \href{mailto:xuchen.wu@mail.utoronto.ca}{xuchen.wu@mail.utoronto.ca}.
\\
\textbf{Acknowledgments}: We are grateful to Álvaro Cartea, Patrick Chang, Zachary Feinstein, participants at the Research in Options 2025 and DeFi \& Crypto 2026 conferences  for insightful discussions. SJ would like to acknowledge support from the Natural Sciences and Engineering Research Council of Canada through grant RGPIN-2024-04317.}}

\title{\textbf{
Equilibrium Liquidity and Risk Offsetting in Decentralised Markets
}}



\maketitle

\thispagestyle{empty}

\bigskip

\begin{center}
\small
\begin{minipage}{\textwidth}
\centerline{\bf ABSTRACT}

We study the economic viability of liquidity provision in decentralised exchanges (DEXs) within a structural framework in which market outcomes are endogenous. We formulate strategic interactions as a sequential game: a risk-averse liquidity provider (LP) sets the supply of liquidity in the DEX and a costly dynamic replication strategy in a centralised exchange (CEX), price-sensitive traders determine trading volumes, and arbitrageurs align prices. We establish existence of equilibrium under general trading functions. We show that DEX liquidity depth is a central instrument for risk management, because the LP adjusts liquidity ex ante to manage exposure. In addition to the classical trade-off between liquidity demand and adverse selection, we identify two further determinants of the viability of liquidity provision: the ratio of risk aversion to replication costs and private information. The ratio governs the aggressiveness of replication: greater relative risk aversion reduces risk but also lowers equilibrium liquidity and its mean profitability. Private information has a non-monotonic effect. For moderate price movements, speculative benefits increase liquidity. For large price movements, anticipated adverse selection and replication costs lead to thinner markets.

\end{minipage}

\end{center}

\section{Introduction}

Blockchains provide the technology to operate  decentralised exchanges (DEXs), which have become a central component of decentralised finance,\footnote{See \cite{cong2019blockchain,john2023smart,harvey2024evolution}.} with monthly trading volumes reaching $\$420$ billion in 2025. The widespread adoption of DEXs hinges on whether liquidity provision remains viable in the current market structure, where DEXs operate alongside centralised exchanges (CEXs) that dominate price discovery and offer continuous trading.

The extant literature abstracts from key economic and practical mechanisms when evaluating the returns and risks of liquidity provision: the size of liquidity supply and trading conditions are typically treated as exogenous, and profitability is assessed under the assumption that the risk of DEX positions can be replicated without friction in a CEX. This paper departs from this approach. We develop a structural model in which DEX market outcomes are endogenised. Our analysis characterises how the size, rewards, and risks of DEX liquidity provision arise from the interplay between risk preferences, costly risk management, fundamental price volatility, access to private information, and strategic liquidity demand.

We formalise these mechanisms in a tractable partial equilibrium framework with three types of agents: a representative (monopolist) liquidity provider, price-sensitive noise liquidity takers (LTs), and arbitrageurs. Interactions proceed in three stages. In stage one, the LP chooses liquidity depth by trading off anticipated fee revenue against adverse selection costs, taking into account the expected proceeds and costs of replication in the CEX. In stage two, the LP determines a dynamic trading strategy in the CEX by solving a continuous-time control problem under inventory risk aversion, trading costs, and private information about future prices. Trading costs discourage replication, risk aversion encourages it, and private information incentivises directional speculation and informs about future inventory exposure. In stage three, noise LTs arrive stochastically and choose optimal trade sizes given prevailing trading costs in the DEX, while arbitrageurs align the  price in the DEX with fundamentals. The model is solved by backward induction.

Our first main finding is that the scale of liquidity is as important a risk management tool as dynamic replication. 

We find that the intensity of replication per unit of capital supplied is determined by the ratio of risk aversion to trading costs. Risk aversion encourages replication to reduce net exposure, whereas trading costs discourage it. For a fixed ratio, higher absolute levels of both risk aversion and trading costs reduce the optimal liquidity supply, because the LP anticipates the frictions associated with trading in the CEX. In equilibrium, the LP preserves the viability of DEX liquidity provision primarily by reducing liquidity supply and exposure ex ante.

When setting the scale of liquidity, the LP also anticipates that adverse selection costs increase with supplied capital, that revenue from price-sensitive demand increases with supplied capital, and that expected price movements can induce reserve adjustments, inventory exposure, or directional profits. Effects that increase profitability increase equilibrium liquidity, whereas effects that increase costs reduce it. Equilibrium liquidity therefore reflects the balance between these forces.

Our second contribution is to show how the determinants that govern equilibrium liquidity supply also shape the distribution of LP returns. 

Greater relative risk aversion induces aggressive replication, which reduces return variance by reducing price exposure, but also lowers mean returns because replication is costly. Two limiting cases arise. When risk aversion is sufficiently high relative to trading costs, the optimal strategy in the CEX approaches perfect replication of DEX reserves and equilibrium DEX liquidity falls to the lowest level consistent with nonnegative expected returns. At the other extreme, under risk neutrality, the LP allocates the entire available budget to liquidity provision provided  expected fee revenue from price-sensitive demand exceeds expected adverse selection costs. Otherwise, liquidity collapses and the DEX shuts down.

The role of liquidity demand in shaping the distribution of LP returns is described next. When liquidity demand is weak, LPs act defensively. This reduces both expected LP returns and their variance. When demand is strong, LPs supply more liquidity and accept greater exposure by trading less in the CEX, because higher fee revenue compensates for the additional risk on average. This increases both expected returns and their variance. Finally, consistent with the existing literature, higher fundamental volatility increases adverse selection costs, reduces expected returns, and increases their dispersion.

Our last finding concerns private information. Access to information about future price movements does not systematically increase liquidity or returns. Private signals generate speculative gains, but they also imply larger expected reserve adjustments and higher anticipated replication costs. For moderate expected price movements, speculative benefits dominate and liquidity increases. For sufficiently large expected drifts, however,  anticipated adverse selection and replication costs outweigh speculative gains, leading to thinner markets.

Our theoretical contribution is to develop a tractable structural model that endogenises the viability of liquidity provision and DEX market outcomes. In particular, we employ variational tools to reduce the LP's problem in the CEX to a coupled system of forward and backward stochastic differential equations. We show that the system admits a closed-form representation under general convex DEX trading functions and stochastic price signals. The optimal strategy decomposes into a replication component, governed by the ratio of risk aversion to trading costs, and a speculative component driven by private information.

Moreover, we establish existence of equilibrium DEX liquidity in stage one for general trading functions and we derive closed-form expressions in the case of constant product markets. In the latter setting, we further obtain an explicit decomposition of equilibrium liquidity into (i) a component that balances fee revenue, adverse selection, and price risk, and (ii) a speculative component that exploits private information.

Finally, we extend the model to a setting in which the LP's trading activity in the CEX is sufficiently large to affect prices and generate transient price impact. In this case, the LP's optimal replication problem reduces to a differential Riccati equation (DRE), whose solution exists, is unique, and can be computed efficiently. This characterisation ensures well-posedness of the replication problem and allows us to establish existence of equilibrium liquidity in the DEX.

\textbf{Literature review.} Our paper is part of the literature that uses quantitative methods to study the viability of DEXs. \cite{milionis2022automated} provide a foundational work introducing loss-versus-rebalancing (LVR) 
which describes the losses in DEXs when LPs hedge their price exposure in a continuous market.  \cite{cartea2023predictable,cartea2024decentralized} study strategic liquidity provision and discuss specificities of concentrated liquidity markets. \cite{milionis2024automated} consider the frictions of fees in arbitraging DEXs. \cite{bichuch2024defi} study replication of DEX liquidity positions. These works consider frictionless markets for LPs who manage the risk of their liquidity supply.

Numerous works explore the microstructure of DEXs.  
\cite{angeris2021replicating2, capponi2023decentralized,fabi2025economics} show that DEXs generate losses for LPs; see also \cite{agarwal2025optimal,echenim2024quantitative}.   
\cite{jaimungal2023optimal,miori2024clustering} study liquidity taking in DEXs.  \cite{lehar2021decentralized} describe competition between DEXs and order books. 
\cite{hasbrouck2022need} study  DEX fees.  \cite{bichuch2025axioms} formalise the axioms governing DEX design.  
\cite{klein2023price} examine the role of informed liquidity supply.  \cite{park2023conceptual} discuss the types of trading costs in DEXs. 
\cite{malinova2024learning} investigate the potential of DEXs to organise equity trading.  \cite{goyal2023finding,cartea2024strategic, he2024optimal} propose DEX designs to mitigate LP losses.\footnote{Recent works examine optimal behavior of LPs and optimal dynamic fee structures; see \cite{bergault2025optimal, baggiani2025optimal}.  \cite{campbell2025optimal} discuss the costs of replication in the CEX.   Finally, \cite{capponi2025longer, he2025arbitrage} characterise the microstructure of DEXs within the consensus protocol of blockchains.}

The strategy of the LP in stage two of our model relates to the literature on trading using stochastic control tools.\footnote{See \cite{cartea2015algorithmic}, \cite{gueant2016financial}, and \cite{donnelly2022optimal}.}  
Signals are introduced in \cite{cartea2016incorporating}. Latent models with signals are in \cite{casgrain2019trading}, while variational approaches to trading with multiple heterogeneous agents are in \cite{casgrain2018mean, casgrain2020mean, wu2024broker}.  Finally, inventory targets are analysed in \cite{cartea2016closed,bank2017hedging}.

The remainder of this paper proceeds as follows. Section~\ref{sec:DEXs} describes the economic trade-offs faced by LPs and introduces the general features of the model.  
Section~\ref{sec:LTs} solves for the trading volumes of noise LTs in stage three. 
Section~\ref{sec:hedging} solves the replication problem of the LP in stage two.  
Section~\ref{sec:liqprovuniswap} derives the optimal liquidity supply in stage one.  
Section~\ref{sec:numerics} examines the equilibrium in the case of a constant product market such as Uniswap and presents numerical experiments. Finally, Appendix \ref{apx:proofs} collects the proofs, and Appendix \ref{sec:transient} extends our model to transient impact.

\section{General features of the model}\label{sec:DEXs}

We consider a decentralised exchange (DEX) that operates as a liquidity pool on a blockchain: a liquidity provider (LP) deposits reserves into a liquidity pool, and execution prices for liquidity takers (LTs) are determined algorithmically as a function of these reserves according to a pricing rule. The pricing rule maps reserves into execution prices. Therefore, it governs both trading costs for LTs and risks and returns for LPs. This section describes the mechanics of price and liquidity in DEXs, and introduces the general features of our model.

The mathematical foundations of DEXs have been extensively studied in the literature. We refer for instance to \cite{bichuch2025axioms}, and references therein, who provide a general axiomatic framework for DEXs and formalise their properties. Rather than revisiting these fundamental results, we introduce below the specific notation and assumptions required for our model to characterise equilibrium market outcomes in DEXs in a tractable manner.

Consider a DEX for a pair of assets $\{X, Y\}$, where $X$ is a reference asset used by  agents to value their wealth, and $Y$ is a risky asset.  A representative LP deposits initial reserves $X_0$ and $Y_0$ of assets $X$ and $Y$ into the pool at time $t=0.$ The LP commits to maintaining this liquidity position until a terminal horizon $T$, i.e., the LP neither adds reserves to the pool nor withdraws reserves from the pool over the interval $[0,T]$.

We motivate this assumption as follows. In contrast to traditional markets with continuous trading, where market makers can adjust quoted spreads at negligible cost, modifying liquidity in a DEX requires submitting on-chain transactions that incur gas fees. These fees make frequent rebalancing of liquidity positions economically costly. In addition, clearing on blockchains is discrete, so liquidity adjustments can generally occur once per block.\footnote{Liquidity adjustments may also be executed strategically through bot activity aimed at extracting miner extractable value (MEV). Blockchain designers are actively developing architectures that mitigate or eliminate MEV, so our theoretical model abstracts from these strategic frictions. We refer the reader to \cite{wan2022just, capponi2023paradox} and \cite{daian2020flash, oz2024wins} for foundational analyses of these topics.} As a consequence, liquidity positions are typically maintained for several consecutive blocks; see \cite{cartea2025decentralised} for empirical evidence on the duration of liquidity provision.\footnote{In Ethereum, the most widely used blockchain for DEXs, a block is produced approximately every $12$ seconds.}

As trading unfolds over the interval $[0,T]$, the reserves in the DEX act as counterparty to trades by liquidity takers (LTs). Each transaction modifies the composition of reserves, so the quantities of both assets adjust over time. We denote by $(X_t)_{t \ge 0}$ and $(Y_t)_{t \ge 0}$ the processes describing the evolution of reserves in assets $X$ and $Y$, respectively.

\subsection{DEX price and liquidity.}

The mechanics of price and liquidity in DEXs are defined by \textit{iso-liquidity curves}.  Once the LP establishes the pool, the reserves satisfy, for all $t \in [0, T]$,
\begin{equation}\label{trading_condi}
    f(X_t, Y_t) = \kappa^2 = f(X_0, Y_0)\,,
\end{equation}
where $\kappa > 0$ is the \emph{depth} of liquidity, and 
$f : (0, \infty)^2 \to (0, \infty)$ is the DEX's \emph{trading function}.

The function $f$ characterises the feasible combinations of reserves in assets $X$ and $Y$ that preserve the depth $\kappa$ of liquidity in the DEX. Thus, movements along the level set $f(X,Y)=\kappa^2$ correspond to trades that alter the composition of reserves without changing the scale of liquidity provision $\kappa$. For the analysis that follows, we impose the following assumptions.\footnote{$\partial_i f(\cdot,\cdot)$ denotes the first-order partial derivative with respect to the $i$th argument of $f$ for $i\in{1,2}$, and $\partial_{ii} f(\cdot,\cdot)$ denotes the corresponding second-order partial derivative.}

\begin{assume}\label{assume:1}
\begin{enumerate}[label=\textnormal{(\roman*)}, ref=\theassume(\roman*)]
    \item\label{assume:1:i}
    $f\in C^3((0,\infty)^2)$ and has strictly positive partial derivatives.
    
    \item\label{assume:1:ii}
    For each $y>0$, $f(\cdot,y):(0,\infty)\to (0,\infty)$ is surjective. Thus, for each $\kappa > 0$, the level set
    $f(x, y) = \kappa^2$ admits a unique solution
    $x = \varphi(y, \kappa)\,.$
    
    \item\label{assume:1:iii}
    $R := \dfrac{\partial_2 f}{\partial_1 f}$ satisfies
    $R\,\partial_1 R - \partial_2 R > 0$ everywhere, and is decreasing in $\kappa.$ 
    
    \item\label{assume:1:iv}
    $\partial_1\varphi$ satisfies the limits $\lim_{y\downarrow 0}\partial_1\varphi(y,\kappa)=-\infty\quad$ and $\quad\lim_{y\uparrow \infty}\partial_1\varphi(y,\kappa)=0\,.$
\end{enumerate}
\end{assume}

Assumption~\ref{assume:1:i} ensures that the  depth of liquidity $\kappa$ increases with the level of reserves deposited in the DEX.  We refer to $\varphi$ in Assumption~\ref{assume:1:ii} as the \emph{level function}.  
By the implicit function theorem, and since $f$ has strictly positive partial derivatives under Assumption~\ref{assume:1:i}, the mapping $\varphi$ is $C^3$ on $(0,\infty)^2$. Using the iso-liquidity condition \eqref{trading_condi}, and in the absence of additional liquidity injections or withdrawals, reserves in the reference asset \(X\) can be expressed as a function of reserves in the risky asset \(Y\) and the liquidity depth \(\kappa\) as
\begin{equation}\label{eq:levelfunc}
    X_t=\varphi(Y_t,\kappa).
\end{equation}

In a DEX, when an LT purchases a quantity~$\Delta y$ of the risky asset,  
the iso-liquidity condition \eqref{trading_condi}, or equivalently the identity \eqref{eq:levelfunc}, determines the payment~$\Delta x$ in the reference asset paid to the DEX. Following the trade, reserves must satisfy $X_t + \Delta x = \varphi(Y_t - \Delta y, \kappa)$. Accordingly, the execution price, i.e., the amount of the reference asset paid per unit of the risky asset purchased, is
\begin{equation}\label{eq:askprice}
    \frac{\Delta x}{\Delta y} 
    = \frac{\varphi(Y_t - \Delta y, \kappa) - X_t}{\Delta y} 
    = \frac{\varphi(Y_t - \Delta y, \kappa) - \varphi(Y_t, \kappa)}{\Delta y}.
\end{equation}
Similarly, if an LT sells a quantity $\Delta y$ of the risky asset, the execution price, i.e., the amount of the reference asset received per unit of the risky asset sold, is
\begin{equation}\label{eq:bidprice}
    \frac{\Delta x}{\Delta y} 
    = \frac{\varphi(Y_t, \kappa) - \varphi(Y_t + \Delta y, \kappa)}{\Delta y}.
\end{equation}

We illustrate the mechanism of iso-liquidity curves in Figure \ref{fig:iso}. As the traded quantity $\Delta y$ tends to zero, the execution prices in \eqref{eq:askprice}-\eqref{eq:bidprice} converge to the infinitesimal price $-\partial_1 \varphi(Y_t, \kappa),$ which we refer to as the \emph{marginal price}. The marginal price plays a role analogous to the midprice in limit order books and serves as the benchmark around which trading costs are measured.

\begin{figure}[H]
    \centering
    \includegraphics[width=.7\linewidth]{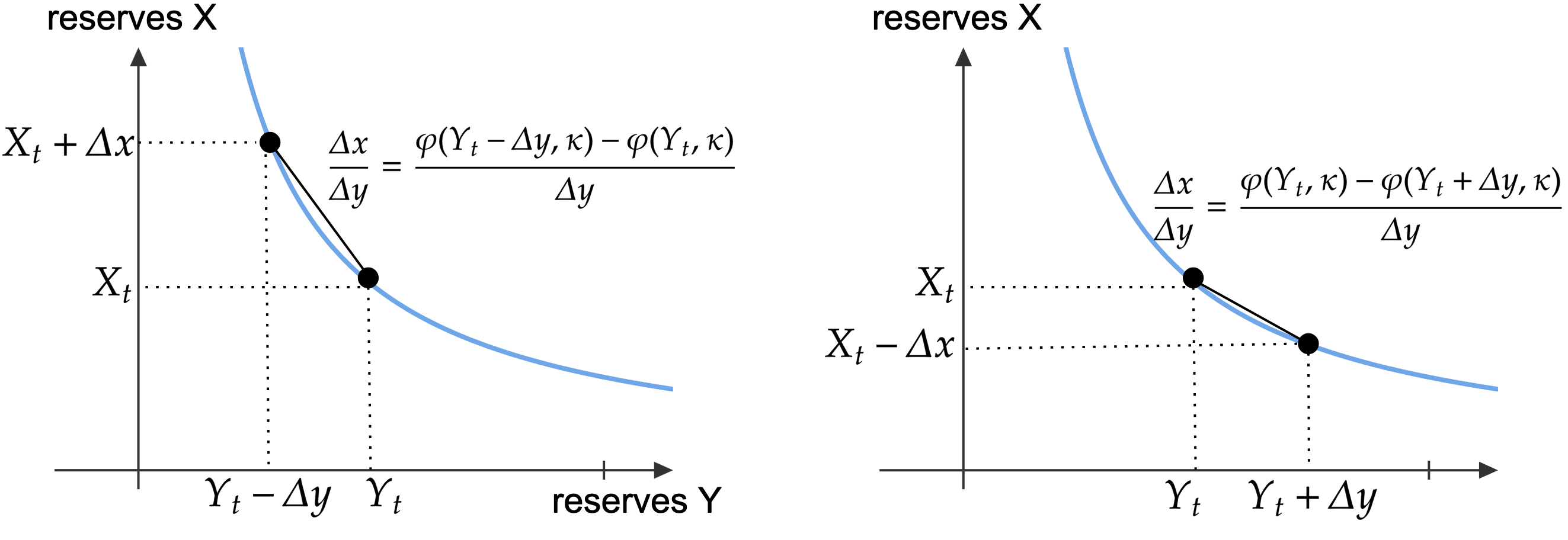}
    \caption{Illustration of how iso-liquidity curves map reserve levels into execution prices.}
    \label{fig:iso}
\end{figure}

The difference between the marginal price and the execution prices in \eqref{eq:askprice}-\eqref{eq:bidprice} captures the trading cost associated with executing a finite quantity in the DEX. These costs are given by
\begin{equation}\label{eq:costs}
    \frac{\varphi(Y_t - \Delta y, \kappa) - \varphi(Y_t, \kappa)}{\Delta y} 
    + \partial_1 \varphi(Y_t, \kappa)
    \qquad \text{and} \qquad
    \frac{\varphi(Y_t, \kappa) - \varphi(Y_t + \Delta y, \kappa)}{\Delta y}
    + \partial_1 \varphi(Y_t, \kappa)\,.
\end{equation}
These trading costs are strictly positive when $\varphi$ is convex in the reserves $Y_t$, which is ensured by Assumption~\ref{assume:1:iii}. This assumption further implies that the marginal price $-\partial_1 \varphi$ is strictly decreasing in reserves. Indeed,
\[
    \partial_1 \varphi(y, \kappa)
    = -R(\varphi(y, \kappa), y) \quad \text{and}\quad \partial_{11} \varphi(y, \kappa)
    = \partial_1 R(\varphi(y, \kappa), y) \, R(\varphi(y, \kappa), y)
       - \partial_2 R(\varphi(y, \kappa), y).
\]

Under our assumptions, the latter derivative is negative, which implies that the marginal price decreases as reserves in asset $Y$ increase. Economically, when an LT sells (resp. buys) the asset $Y$ to the pool, reserves in $Y$ increase (resp.~decrease), and the marginal price correspondingly decreases (resp.~increases).  Moreover, the convexity of the level function implies that the trading costs in~\eqref{eq:costs} increase with the traded quantity~$\Delta y$. Executing larger orders therefore requires moving further along the iso-liquidity curve, generating higher marginal costs. This effect is analogous to walking the book in limit order markets.

For economically valid specifications of the DEX trading function, the convexity term $\partial_{11}\varphi(y,\kappa)$ decreases with liquidity depth $\kappa$.  Higher reserves flatten the iso-liquidity curve and reduce execution costs, whereas lower reserves steepen the curve and make trading more expensive for LTs. This property of DEXs constitutes the  economic channel through which liquidity taking activity interacts with the LP's capacity choice in our model. Higher liquidity depth reduces trading costs, which attracts greater trading flow and fee revenue. However, as we show below, higher liquidity also increases exposure to arbitrage and inventory risk.

\subsection{The model} 

The aim of our model is to characterise equilibrium outcomes in the DEX, including the size and viability of liquidity supply and the resulting trading volumes. The setup is motivated by the current structure for many digital assets, where centralised platforms such as Binance account for the majority of trading volume and price discovery, while DEXs function primarily as secondary venues.\footnote{A model of joint price discovery across centralised and decentralised venues is beyond the scope of this paper. Such a framework would need to address differences in continuous versus discrete clearing and the role of priority fees. We refer the reader to \cite{malinova2024learning,aoyagi2025coexisting,capponi2025longer} for in-depth analyses of market structure differences and price discovery on blockchains.} Accordingly, we consider a DEX operating alongside a CEX with continuous trading where LPs manage inventory risk. In contrast to existing literature, and consistent with practice, risk is not managed frictionlessly but through costly dynamic trading and according to risk preferences and potentially private information.

This section introduces our partial equilibrium model. Strategic interactions in the DEX are formulated as a sequential competitive game.  There are three types of agents: a representative LP, arbitrageurs, and noise LTs. The LP deposits initial reserves $X_0$ and $Y_0$ of assets $X$ and $Y$ into the pool at time $t=0$. Over the investment horizon $[0, T]$, (i) the LP trades dynamically in the CEX to manage inventory exposure, and (ii) the two types of LTs interact with the DEX: arbitrageurs and noise LTs. Arbitrageurs align the pool's marginal price $-\partial_{1}\varphi(Y_t,\kappa)$ with the fundamental value of asset $Y$. Noise LTs with price-sensitive demand trade against the pool, generating fee revenue.

These interactions are formulated as a three-stage optimization problem and solved recursively by backward induction. In stage three, noise LTs take the liquidity depth in the DEX as given and solve a utility maximisation problem to determine their trading volumes. These volumes generate fee revenue for the LP. In stage two, the LP determines a dynamic strategy in the CEX, taking the liquidity level in the DEX as given. In stage one, anticipating both the response of liquidity demand and the costs of replication in the CEX, the LP chooses the initial liquidity depth. We now describe our assumptions and the economic tradeoffs at play in each stage in more detail.

\subsubsection{Stage one} In stage one, the LP chooses the liquidity depth $\kappa$, or equivalently the level of reserves deposited in the DEX. This choice determines the scale of inventory exposure. The LP anticipates two opposing forces: fee revenue generated by price-sensitive liquidity demand, and adverse selection costs arising from arbitrage-driven price adjustments.

First, we describe the dynamics of the LP's reserves in the DEX and the adverse selection costs due to arbitrage.\footnote{In the remainder of this paper, we work on a filtered probability space $(\Omega, \mathcal{F}, \mathbb{F} = (\mathcal{F}_t)_{t \in [0, T]}, \mathbb{P})$ satisfying the usual conditions.} Let $F_t$ denote the fundamental price of the risky asset in units of the reference asset $X$. We assume that arbitrageurs align the marginal price $-\partial_1 \varphi(Y_t,\kappa)$ in the DEX with the fundamental value $F_t$. 
Accordingly, $F_t = -\partial_1 \varphi(Y_t, \kappa)$. Under Assumption~\ref{assume:1:iv}, the mapping $-\partial_1 \varphi(\cdot,\kappa)$ is a $C^2$-diffeomorphism from $(0,\infty)$ to $(0,\infty)$. Thus, it admits a $C^2$ inverse $h(\cdot,\kappa)$:
\begin{equation}\label{eq:arbCEXDEX}
    F_t = -\partial_1 \varphi(Y_t, \kappa)
    \iff 
    Y_t = h(F_t, \kappa).  
\end{equation}

By It\^o's formula, the dynamics of the DEX reserves in units of the reference asset $X$ are
\begin{align}\label{eq:PL}
\diff(X_t + Y_t\,F_t)   & = \diff\left(\varphi(Y_t,\kappa) - Y_t\,\partial_1\varphi(Y_t,\kappa)\right) 
 = Y_t\,\diff F_t - \underbrace{\tfrac12 \partial_{11} \varphi(h(F_t,\kappa),\kappa)\,(\partial_{1} h(F_t,\kappa))^2\,\diff  [F]_t}_{\text{LVR, convexity cost}}\,.
\end{align}

Because arbitrageurs  align the DEX price continuously with fundamentals, fluctuations in $F_t$ translate directly into fluctuations in DEX reserves. The first term in~\eqref{eq:PL}, $Y_t\, \diff F_t$, is the exposure of the liquidity position to fundamental price movements. The second term is a predictable adverse selection component arising from the convexity of the iso-liquidity curve. This term is referred to as the loss-versus-rebalancing (LVR) (see \cite{milionis2022automated}), and it reflects the mechanical losses incurred as reserves adjust along a convex pricing schedule in response to arbitrageurs.

Crucially, the LVR component scales with DEX liquidity depth $\kappa$ and  price volatility $\sigma$. Deeper liquidity $\kappa$ also increases the magnitude of reserves $Y_t$ and therefore amplifies exposure to price fluctuations $\diff F_t$. When inventory risk management through dynamic trading is costly, greater exposure implies higher expected future replication costs. Consequently, both the LVR component and the inventory exposure term $Y_t\, \diff F_t$ create incentives to reduce liquidity provision $\kappa$.

In equilibrium, this incentive to reduce liquidity is balanced against expected fee revenue. We now describe how fee revenue is generated in DEXs. In addition to the trading costs induced by the convexity of the level function $\varphi$, liquidity takers pay a proportional fee $\pi \in (0,1)$ when transacting in the DEX. For a purchase of $\Delta y$ units of asset $Y$, an additional amount $\pi\,\Delta y\,F_t$ in the reference asset is paid to the pool. For a sale of $\Delta y$, a fraction $\pi\,\Delta y\,F_t$ of the proceeds is retained by the LP.

Fee revenue is therefore proportional to trading volume and depends directly on the depth of liquidity  in the DEX through the effect of $\kappa$ on trading costs. Because liquidity demand is price-sensitive, higher liquidity reduces execution costs and increases trading flow  and fee income. 

Thus, in stage one, the LP anticipates that liquidity depth simultaneously (i) amplifies inventory exposure, risk-offsetting costs, and adverse selection costs, and (ii) increases fee revenue. The equilibrium level of $\kappa$ at this stage reflects the balance between these opposing forces.

\subsubsection{Stage two}  

In stage two of the sequential optimization problem, the LP takes the liquidity depth $\kappa$ as given and determines an optimal trading strategy in the CEX based on risk preferences, trading costs, and private information. The LP maximises terminal wealth across both the CEX and the DEX, subject to convex inventory  costs.

Let $Q_t$ denote the LP's inventory in asset $Y$ held in the CEX at time $t$. The net risky exposure of the LP across venues is given by $Q_t + Y_t$. To model aversion to net inventory exposure, we introduce a penalty in the LP's objective on deviations from perfect replication. This penalty is scaled by a parameter governing the  degree of risk aversion of the LP. Higher risk aversion induces more aggressive replication, while lower aversion tolerates exposure to price risk.

A perfect replication strategy would satisfy $Q_t = -Y_t$ for all $t$, which eliminates net inventory exposure in \eqref{eq:PL}. However, trading in the CEX incurs convex execution costs that reflect finite market liquidity. The LP therefore faces a fundamental trade-off: risk aversion penalises deviations from zero net exposure, while trading costs penalise aggressively achieving this target. The optimal trading strategy balances these opposing forces, and feeds back into stage one through its impact on the profitability and viability of liquidity provision.

Beyond risk preferences, we allow the LP to possess private information about the drift of fundamentals. Private information introduces a speculative component into the optimal trading strategy. Importantly, however, private information also allows the LP to anticipate large future reserve adjustments and the associated costs of adverse selection  and offsetting of non-zero net exposure. Anticipated price movements can therefore lead to thinner DEX markets. This mechanism is analogous to traditional limit order markets, where expected price moves induce market makers to quote wider spreads in order to mitigate adverse selection risk and inventory exposure.

\subsubsection{Stage three} 
In stage three, we solve the problem of price-sensitive  noise LTs  who  trade against the pool. We model the arrival of noise LTs over the interval $[0,T]$ as stochastic. Each arriving LT has an exogenous liquidity need, represented by a utility for buying or selling the asset that is unrelated to fundamentals.\footnote{As in classical microstructure models, these traders may represent institutional asset managers rebalancing portfolios or uninformed demand disconnected from price signals.}  Upon arrival, an LT observes the prevailing trading costs, determined by liquidity depth $\kappa$, the current fundamental price $F_t$ (see \eqref{eq:costs}), and the proportional fee $\pi$. The LT solves a static optimisation problem to determine the optimal trading quantity.

The cumulative trading volume over the LP's trading horizon generates fee revenue. In stage one, the LP forms rational expectations over the behavior of noise LTs and anticipates how liquidity depth affects trading costs, volumes, and ultimately fee income.

We solve the model by backward induction. Section~\ref{sec:LTs} solves stage three, where LTs take the liquidity depth~$\kappa$ as given and determine their optimal trading volumes. Section~\ref{sec:hedging} solves stage two, where the LP takes the liquidity depth~$\kappa$ as given and determines the optimal  trading strategy in the CEX. Finally, Section~\ref{sec:liqprovuniswap} solves stage one, where the LP anticipates the effects of trading in the CEX and the activity of both arbitrageurs and noise LTs, to determine the level of DEX reserves.

\section{Stage three: trading volumes and fee revenue}\label{sec:LTs}

\subsection{Assumptions}

The timing of stage three corresponds to the LP's investment horizon $[0,T]$. Throughout this window, arbitrageurs align prices, and noise LTs with price-sensitive demand trade against the pool.  

Consider a noise LT arriving at time $t$ and who demands to buy a quantity $\delta>0$ of asset $Y$. The execution cost consists of (i) the price impact implied by liquidity depth $\kappa$ through \eqref{eq:askprice}, and (ii) the proportional fee $\pi \in (0,1)$. The effective execution price per unit of the risky asset is therefore
\begin{equation}\label{eq:askprice with fee}
    \frac{\varphi(Y_t - \delta, \kappa) - \varphi(Y_t, \kappa) + \pi\,\delta\,F_t}{\delta}.
\end{equation}

In our model, individual noise LTs' liquidity needs are assumed to be small relative to the size of reserves. Thus, noise LTs use a second-order approximation of the execution price. Specifically,
\begin{equation}\label{eq:askpricefee}
\frac{\varphi(Y_{t}-\delta,\kappa)-\varphi(Y_{t},\kappa)+\pi\delta F_{t}}{\delta}
\approx
\frac{-\delta\,\partial_{1}\varphi(Y_{t},\kappa)
+\tfrac{1}{2}\delta^{2}\partial_{11}\varphi(Y_{t},\kappa)}{\delta}+\pi F_{t} = F_{t} + \pi\,F_{t} + \tfrac{1}{2}\delta\,\partial_{11}\varphi(Y_{t},\kappa)\,.
\end{equation}

As shown in \cite{cartea2025decentralised,drissi2023models}, this approximation is accurate in practice. Importantly, it preserves the key economic mechanism: execution prices worsen as liquidity depth $\kappa$ decreases. 
{Specifically, by Assumption~\ref{assume:1:iii},} the convexity term $\partial_{11}\varphi$ is decreasing in $\kappa$, so lower liquidity increases trading costs. Similarly, if the noise LT wishes to sell the quantity $\delta > 0$ of asset~$Y$,  the effective execution price is 
\begin{equation}\label{eq:bidpricefee}
    \frac{\varphi(Y_t, \kappa) - \varphi(Y_t + \delta, \kappa) - \pi\,\delta\,F_t}{\delta} \approx F_{t}-\pi\,F_t-\dfrac{1}{2}\delta\,\partial_{11}\varphi\left(Y_{t},\kappa\right)\,.
\end{equation}

\subsection{Liquidity needs and trading volumes}

The trading volumes of noise LTs solve an optimisation problem. We model liquidity needs as random over the interval $[0,T]$. When an LT arrives at time $t \in [0,T]$, the LT has a utility $F_t(1+V)$ for holding one unit of the asset, where $V$ is the realization of a random variable that is independent of all other processes and it is symmetrically distributed around zero. 
The symmetry of $V$ implies that the expected number of buyers equals the expected number of sellers.

We assume that the distribution of $|V|$ is supported on the compact interval $[\pi,1]$, and we denote
$v = \mathbb{E}[|V|].$ An LT with private utility $V \ge \pi$ (resp. $V \le -\pi$) demands to buy (resp. sell) the asset. The lower bound $\pi$ ensures that the private utility exceeds the proportional trading fee $\pi$, which guarantees nonnegative optimal trading volumes. 

Noise LTs take the liquidity depth $\kappa$ in the DEX, determined by the LP in stage~one, as given. An LT arriving  at time $t$  determines the optimal trading volume $\delta_t^\star$ by balancing  private utility for the asset against execution costs \eqref{eq:askpricefee}--\eqref{eq:bidpricefee}.  The LT's objective, when buying or selling $\delta>0$, is
\begin{align}\label{eqn:criterionLT}
\delta\,\big(|V|-\pi\big)\,F_{t}
-\frac{1}{2}\,\delta^{2}\,\partial_{11}\varphi\left(Y_{t},\kappa\right),
\end{align}
which is maximised with
\begin{equation}\label{eq:LTtradingvolume}
\delta^{\star}_t
= F_t\,\frac{|V| - \pi}{\partial_{11}\varphi\left(Y_{t},\kappa\right)}\,.
\end{equation}
Using the identity \eqref{eq:arbCEXDEX} enforced by arbitrageurs, the trading volume $\delta^\star_t$ in \eqref{eq:LTtradingvolume} can be written  as
\begin{equation}\label{eq:deltastar}
\delta_t^{\star}
= \delta^{\star}(F_t,\kappa)
= \frac{|V|-\pi}{\partial_{11}\varphi\left(h(F_t,\kappa)\,\kappa\right)}\,F_{t}\,.
\end{equation}

\subsection{Fee revenue}
We assume that, over $[0,T]$, arrivals of noise LTs follow a Poisson process $(N_t)_{t \in [0,T]}$ with constant intensity $\lambda$. Each arrival generates fee revenue proportional to the traded volume. The cumulative fee revenue is therefore stochastic. The LP's expected fee revenue, anticipated in stage one, is
\begin{equation}\label{eq:Pi}
\mathbb{E}\left[\int_{0}^{T}\pi\,\delta_{t}^{\star}\,F_{t}\,dN_{t}\right]
= \lambda\,\pi\,(v-\pi)\,
\mathbb{E}\left[\int_{0}^{T}\frac{F_{t}^{2}}{\partial_{11}\varphi\left(h(F_{t},\kappa),\kappa\right)}\,dt\right]=
\mathbb{E}\left[\int_{0}^{T}\Pi(F_{t},\kappa)\,dt\right].
\end{equation}
where we refer to $\Pi_t := \Pi(F_{t},\kappa)$ as the fee revenue rate, expressed in units of the reference asset $X$.

The economic mechanism is immediate: higher depth $\kappa$ reduces convexity $\partial_{11}\varphi$, increases optimal trade sizes $\delta_t^\star$, and therefore raises fee revenue \eqref{eq:Pi}. Moreover,  larger mean private utility $v$ increases trade size. In stage one, the LP anticipates the positive effect of liquidity on fee income. However, deeper liquidity also amplifies inventory exposure and losses to arbitrageurs. The equilibrium liquidity of stage one  balances these opposing effects.

\section{Stage two: risk offsetting}\label{sec:hedging}

In stage two, the LP takes the liquidity depth $\kappa$ chosen in stage one as given. As shown in \eqref{eq:PL}, the DEX position is exposed to inventory risk and losses from arbitrage.  Thus, the LP  manages risk through trading in the CEX.  This section derives the LP's optimal dynamic strategy. The key departure from existing models is that replication is costly, so risk cannot be eliminated without incurring trading costs. The optimal strategy therefore reflects a trade-off between reducing exposure and limiting execution costs.

\subsection{Assumptions} 

Our model is formulated in partial equilibrium: the DEX operates as a secondary market alongside the CEX where price discovery takes place. Accordingly, the LP's trading activity in either venue does not affect liquidity conditions in the CEX.

\paragraph{DEX holdings.} The LP deposits reserves $(X_0, Y_0)$ at time $0$ into the DEX and withdraws reserves $(X_T, Y_T)$ at the terminal time $T>0$. As discussed earlier, the LP remains passive in the DEX over the interval $[0,T]$, so liquidity depth $\kappa$ is fixed throughout.  During the trading window $[0,T]$, the LP holds reserves $X_t$ in the reference asset and $Y_t$ in the risky asset, and arbitrageurs align the pool's marginal price $-\partial_{1}\varphi(Y_t,\kappa)$ with the fundamental value $F_t$, that is, they enforce the equality~\eqref{eq:arbCEXDEX}. 

The LP may also possess private information about the evolution of fundamentals. We assume that the fundamental price $F_t$ follows the stochastic differential equation
\begin{equation}\label{eq: dyn F}
\diff F_t = A_t \,F_t \, \diff t + \sigma \,F_t \, \diff W_t,
\end{equation}
where $F_0 > 0$ is known, $W$ is a Brownian motion, $\sigma > 0$ is a volatility parameter.\footnote{Formally, we consider processes $A = (A_t)_{t \in [0, T]}$ that are progressively measurable and that satisfy $\E\left[\int_0^T|A_t|^p\,\dt\right]<\infty$ for some $p>2$.}
The process $A_t$ represents the LP's stochastic private signal about the drift of fundamentals.\footnote{The fundamental price $F$ satisfies the SDE \eqref{eq: dyn F}, whose solution is
$F_t=F_0\,\exp\left\{\int_0^t(A_s-\tfrac{\sigma^2}{2})\,\diff s+\sigma\,W_t\right\}$,
so the identity \eqref{eq:arbCEXDEX} is well defined.}

\begin{remark}
Our results are agnostic to the information structure of the LP's signal $A_t$. The signal may be fully observable, partially observable, or latent. Examples include signals inferred from filtered order flow or obtained from external price oracles. In the numerical illustrations, we focus on either zero or constant non-zero signals in order to isolate and highlight the implications of private information on equilibrium outcomes.
\end{remark}

Define $G_t := \partial_1 h(F_t,\kappa)\,A_t
+ \dfrac{\sigma^2}{2}\,\partial_{11}h(F_t,\kappa)\,F_t .$ The reserves $X_t$ evolve according to \eqref{eq:levelfunc}, and the reserves $Y_t$ evolve as
\begin{equation}\label{eqn:dyn Y}
\diff Y_t
= G_t\,F_t\,\dt
+ \sigma\,\partial_1 h(F_t,\kappa)\,F_t\,\diff W_t .
\end{equation}


\begin{remark}
In practice, aligning the marginal price of the DEX with its fundamental value involves costs in both markets in which arbitrageurs operate; see \cite{milionis2024automated} for a detailed exposition of these frictions. Moreover, LPs may earn some revenue from arbitrage activity. In our model, we abstract from these institutional details to isolate the central economic mechanism: passive liquidity provision in the DEX exposes the LP to predictable losses when fundamentals move and arbitrageurs rebalance the reserves. Regardless of the specific costs of arbitrage, arbitrageurs only trade when profitable, implying that these trades are, in expectation, loss-making for the LP. We refer  to \cite{bichuch2025price} for a detailed discussion on the fees generated by arbitrageurs.
\end{remark}

We denote the LP's wealth  in the DEX, in units of the reference asset $X$, by $L_t$, and define it as
\begin{equation}\label{eq:wealth DEX}
L_t
:= \int_{0}^{t} \Pi(F_u,\kappa)\,du + X_t + Y_t\,F_t.
\end{equation}
The first term on the right-hand side of~\eqref{eq:wealth DEX} represents the  cumulative fee revenue generated by noise LTs, as determined by their  stage-three best response~\eqref{eq:Pi} to the liquidity level $\kappa$.  The  second and third terms correspond to the marked-to-market value of the LP's liquidity position, evaluated at the fundamental price $F_t$.

\paragraph{CEX holdings.} The LP also trades the risky asset   on the CEX. The LP begins with an initial inventory $Q_0 = -Y_0$ in the CEX and initial cash holdings $\tilde X_0 = 0$ in the reference asset $X$.

\begin{remark}
For simplicity, we assume that the LP begins with a CEX position $Q_0 = -Y_0 = -h(F_0,\kappa)$, which offsets the initial DEX inventory. This normalisation simplifies comparative statics of performance and risk across different model primitives by eliminating initial net exposure. Our results  extend to alternative initial inventory or cash  without altering the main economic mechanisms. In particular, if the LP starts with a CEX position that deviates  from $-Y_0$, then replicating the DEX position requires building a position close to $-Y_0$ in the CEX first, which incurs additional trading costs. This strengthens the incentive of LPs to reduce liquidity supply in anticipation of higher costs, and reinforces the results described below.
\end{remark}

In contrast to the discrete liquidity adjustments in the DEX at times $0$ and $T$, trading in the CEX occurs continuously. Let $\nu_t$ denote the LP's trading rate in the risky asset at time $t$.\footnote{Formally, we consider trading strategies $\nu$ from the admissible set $\mathcal A_2$ of $\mathbb F$-progressively measurable processes satisfying $\mathbb E\left[\int_0^T |\nu_t|^2, dt\right] < \infty$.} The resulting inventory process $Q_t^\nu$ evolves according to\footnote{The superscript $\nu$ indicates that the process $Q_t^\nu$ depends on the LP's trading activity.}
\begin{equation}
Q_t^\nu = Q_0 + \int_0^t \nu_s \,\ds .
\label{eqn: dyn Q}
\end{equation}

In contrast to the extant literature, which assumes costless replication when characterising LP returns, we assume that the CEX has finite liquidity. As a result, trades are not executed at the fundamental price $F_t$. Instead, trading incurs an execution cost that increases with trading intensity. Specifically, the LP faces a linear price impact in the trading rate $\nu_t$. This cost captures the expense of walking the book in a limit order market or, the deterioration of quoted prices when trading large volumes in over-the-counter markets. In our model, the execution price at time $t$ is given by $F_t + \eta\,\nu_t$, where $\eta > 0$ is a cost parameter that measures CEX liquidity. Accordingly, the LP's cash position $\tilde X_t^\nu$ in the reference asset evolves as
$\diff \tilde X_t^\nu = -\left(F_t+\eta\,\nu_t\right)\,\nu_t\,dt\,.$

\begin{remark}
Our model assumes that the LP's trading activity in the CEX incurs instantaneous execution costs but does not affect the midprice. This corresponds to a setting in which the LP's trades are small relative to overall CEX market activity, so that liquidity replenishes instantaneously and prices are not persistently impacted. In practice, if the DEX were sufficiently large, LP trades could generate persistent price impact. 
Appendix~\ref{sec:transient} extends the model to incorporate transient price impact and analyse how this modifies the risk-offsetting problem.
\end{remark}

\subsection{The performance criterion}

At the terminal time $T$, the LP holds inventory $Q_T^\nu$ in the CEX, valued at the terminal price $F_T$, and cash holdings $\tilde X_T^\nu$ in the reference asset. The LP's terminal wealth in the DEX is $L_T$ (see \eqref{eq:wealth DEX}), which consists of: (i) terminal reserves $Y_T$ in the risky asset, valued at $F_T$, (ii) terminal reserves $X_T$ in the reference asset, and (iii) cumulative fee revenue generated by noise LTs.

To capture aversion to inventory risk, we penalise deviations from a perfect replication strategy satisfying $Q_t^\nu = -Y_t$, which would eliminate net exposure. The LP therefore chooses a trading strategy $\nu$ to maximise expected terminal wealth subject to a running deviation penalty. The performance criterion, when employing an admissible trading strategy $\nu$, is
\begin{align}\label{eqn:criterion}
\mathbb{E}\Bigg[
\underbrace{L_T}_{\text{DEX wealth}}
+ \underbrace{Q_T^\nu\,F_T + \tilde X_T^\nu}_{\text{CEX wealth}}
- \underbrace{\dfrac{\phi}{2}\int_0^T (Q_t^\nu + Y_t)^2\,\diff t}_{\text{deviation penalty}}
\Bigg]\,.
\end{align}

Note that $L_T$ does not depend on the strategy $\nu$, so the problem is equivalently written as
\begin{equation}\label{eqn:criterion2}\tag{P}
\E\bigg[\underbrace{Q_{T}^{\nu}\,F_{T}}_\text{CEX position}-\underbrace{\int_{0}^{T}\left(F_t+\eta\,\nu_{t}\right)\,\nu_{t}\,\dt}_\text{risk offsetting}-\underbrace{\dfrac{\phi}{2}\int_{0}^{T}\left(Q_{t}^{\nu}+Y_{t}\right)^{2}\,\dt\bigg]}_\text{deviation penalty}\,.
\end{equation}

The first term in \eqref{eqn:criterion2} is the terminal value of the LP's holdings in the CEX. It creates an incentive to exploit private information. To manage exposure and respond to trading signals, the LP implements the strategy $\nu$ in the CEX. The second term in \eqref{eqn:criterion2} captures the cumulative trading proceeds and costs incurred when executing this strategy. The third term in \eqref{eqn:criterion2} is the key component of the objective. It imposes a running penalty on deviations from perfect replication. The parameter $\phi>0$ scales this penalty: larger values correspond to greater aversion to holding non-zero net exposure $Q_t^\nu+Y_t$. In the limit as $\phi \to \infty$, the optimal strategy converges to perfect replication of the DEX reserves, and as $\phi \to 0$, the LP is risk-neutral.

\subsection{Risk-offsetting strategy}

To solve the LP's optimisation problem in stage two, we employ convex analysis tools for variational problems.\footnote{See Section $5$ of \cite{ekeland1999convex} for a detailed exposition of the tools used in this work.} First, we impose the following mild technical assumptions, which ensure that the optimisation problem is well posed, and we show that the objective~\eqref{eqn:criterion2}  is a real-valued linear-quadratic functional.\footnote{We work on the  admissible set $\mathcal A_2$ of $\mathbb F$-progressively measurable processes, which is exactly the real Hilbert space
\(L^2\!\left(\Omega\times[0,T],\mathcal P,\diff\mathbb P\otimes\dt\right),\) equipped with the inner product $\langle \nu,\zeta\rangle \coloneqq \mathbb E\!\left[\int_0^T \nu_t\,\zeta_t\,\dt\right]$ and the associated norm $\|\nu\| \coloneqq \langle\nu,\nu\rangle^{1/2}$. $\mathcal P$ denotes the progressive $\sigma$-algebra.}

\begin{assume}\label{assume:A and h}
\begin{enumerate}[label=\textnormal{(\roman*)}, ref=\theassume(\roman*)]
    \item\label{assume:A-h:i}
    The private signal satisfies $\E\left[\exp\left(r\int_0^T |A_s|\,\ds\right)\right] < \infty$ for all $r \in \mathbb{R}$.
    \item\label{assume:A-h:ii}
    For each $\kappa > 0$, there exist real numbers
    $C_\kappa, q_\kappa, p_\kappa$ such that, for all $x > 0$, 
    \begin{equation}
        |h(x,\kappa)|
        + |\partial_1 h(x,\kappa)|
        + |\partial_{11} h(x,\kappa)|
        \le C_\kappa \left(x^{q_\kappa} + x^{p_\kappa}\right)\,.
    \end{equation}
\end{enumerate}
\end{assume}

\begin{remark}
Signals that satisfy Assumption~\ref{assume:A-h:i} include all continuous Gaussian processes. Constant product markets such as Uniswap are an example of a market that satisfies Assumption~\ref{assume:A-h:ii}.     
\end{remark}

\begin{lemma}\label{lem:perfcrit no transient}
The performance criterion \eqref{eqn:criterion2} can be written as $J[\nu]+H\,$, where $H=Q_0\,F_0+\E\!\Big[\int_{0}^{T}\big\{Q_0\,A_t\,F_t-\tfrac{\phi}{2}\,(Y_t+Q_0)^2\big\}\, \dt\Big]$ is a well-defined real number that does not depend on $\nu$, 
and $J$ is the bounded linear-quadratic functional
\begin{equation}\label{eq:functional J}
    J[\nu]=-\tfrac{1}{2}\,\langle\Lambda\nu,\nu\rangle+\langle b,\nu\rangle\,,
\end{equation}
where $\Lambda$ is a symmetric bounded linear operator on $\A_2$ and $b$ is an element of $\A_2$, defined by\footnote{$\mathfrak Q$ is well defined for all admissible $\nu$ as $
\E\big[\int_0^T\left|\int_0^t\nu_s\,\ds\right|^2\,\dt\big]\leq\E\big[\int_0^Tt\int_0^t|\nu_s|^2\,\ds\,\dt\big]\leq \frac{T^2}{2}\E\big[\int_0^T|\nu_t|^2\,\dt\big]<\infty.$ $\mathfrak{Q}^\top$ denotes the \emph{transpose} of $\mathfrak{Q}$, defined to be the unique bounded linear operator on $\mathcal{A}_2$ such that $\langle\mathfrak{Q}^\top\nu,\zeta\rangle=\langle\nu,\mathfrak{Q}\zeta\rangle$ for all $\nu,\zeta\in\mathcal{A}_2$.} 
\begin{equation}
    \Lambda=2\,\eta+\phi\,\mathfrak{Q}^\top\mathfrak{Q}\,, \qquad   b=\mathfrak{Q}^\top(A\,F-\phi\,(Y+Q_0))\,, \qquad \text{and} \qquad (\mathfrak{Q}\nu)_t=\int_0^t\nu_s\,\ds\,.\label{defeq:Lambda no transient}
\end{equation}
\end{lemma}

\begin{proof}
See Appendix \ref{proof:lem:perfcrit no transient}.
\end{proof} 

We solve the LP’s optimisation problem explicitly for general convex trading functions and stochastic price signals. The analysis proceeds in three steps. First, we show that the LP’s objective is differentiable in the sense of the Gâteaux directional derivative with respect to admissible stochastic trading strategies, compute its directional derivative, and establish strict concavity (Proposition~\ref{proposition strict concavity no transient}). Strict concavity implies that an admissible strategy for which the derivative vanishes is the unique maximiser. Second, we characterise the first-order condition by identifying where the derivative vanishes. This yields a representation of the critical points of the objective as a coupled system of forward-backward stochastic differential equations (FBSDE) (Theorem~\ref{theorem optimal speed no transient}). Finally, we solve this FBSDE system explicitly and obtain a closed-form expression for the LP's optimal  trading strategy in the CEX  (Proposition~\ref{prop:notrasient}).

\begin{proposition}\label{proposition strict concavity no transient} 
The objective $J$ defined in Lemma~\ref{lem:perfcrit no transient} is strictly concave. Moreover, $J$ is G\^ateaux differentiable, and its G\^ateaux derivative $\D J[\nu]$ at $\nu\in\A_2$ is an element of $\A_2$, given by
\begin{equation}\label{eq:gateauxder no transient}
    \D J[\nu]_t=-2\,\eta\,\nu_{t}+\E\!\Big[\int_{t}^{T}\left(A_{s}\,F_s-\phi\left(Y_{s}+Q_{s}^{\nu}\right)\right)\,\ds\,\mid\,\F_t\Big]\,.
\end{equation}
\end{proposition}

\begin{theorem} \label{theorem optimal speed no transient}
The Gâteaux derivative \eqref{eq:gateauxder no transient} vanishes at $\nu^\star \in \mathcal A_2$ if and only if $\nu^\star$ solves the FBSDE
\begin{align}\label{eq:FBSDE no transient}
\begin{cases}
2\,\eta\,\diff\nu^\star_{t}  & =\left(-A_t\,F_t+\phi\,\left(Y_t+Q_t\right)\right)\,\dt+\diff M_{t},  \qquad \nu^\star_{T}  =0\,, 
\\
\diff Q_{t}  & =\nu^\star_t\,\dt\,, 
\end{cases}
\end{align}
for some martingale $M$ such that $M_T\in L^2(\Omega)$.
\end{theorem}

\begin{proposition}\label{prop:notrasient} 
The optimal risk offsetting strategy in the CEX is 
\begin{equation}\label{eq:nugeneral2}
\nu_{t}^\star=P(t)\,Q_t+\ell_t=\underbrace{P(t)\,Q_{t}-\psi^2\,\E\!\Big[\int_{t}^{T}\tilde{P}(t,s)\,Y_{s}\,\ds\,\mid\,\F_{t}\Big]}_{\text{replication component}}+\underbrace{\tfrac{1}{2\,\eta}\E\!\Big[\int_{t}^{T}\tilde{P}(t,s)\,A_{s}\,F_{s}\,\ds\,\mid\,\F_{t}\Big]}_{\text{speculative component}}.
\end{equation}

\begin{equation}\label{eq:solP no transient}
\text{where}\qquad P(t)=\psi\tanh\left(\psi(t-T)\right) \, \text{,} \qquad  \tilde P(s,t) = \frac{\cosh\left(\psi(t-T)\right)}{\cosh\left(\psi(s-T)\right)}\,,\qquad \text{and}\qquad \psi=\sqrt{\tfrac{\phi}{2\,\eta}}\,.
\end{equation}
\end{proposition}

\begin{proof}
See Appendix   \ref{proof:proposition strict concavity no transient} for the proof of Proposition \ref{proposition strict concavity no transient}, Appendix \ref{proof:theorem optimal speed no transient} for the proof of Theorem \ref{theorem optimal speed no transient}, and Appendix~\ref{proof:prop:notrasient} for the proof of Proposition \ref{prop:notrasient} .
\end{proof}

\paragraph{Replication component.} The first two terms in \eqref{eq:nugeneral2} constitute the replication component. They drive the strategy towards reducing net exposure between the DEX and the CEX.  To see this, note that the kernel $\tilde P(t,s)$ defined in \eqref{eq:solP no transient} is positive and increasing in $s$ for fixed $t$, while $P(t)$ is negative and increasing in $t$. Moreover, recall that DEX reserves $Y_t$ are strictly positive, whereas the LP's initial CEX inventory is negative. Thus, for fixed DEX reserves path $\{Y_s\}_{s\ge t}$, a larger absolute value of the CEX inventory $Q_t$ increases the magnitude of the term $P(t)\,Q_t$, and induces the LP to trade more aggressively to reduce net exposure. Conversely, for fixed CEX inventory $Q_t$, higher expected DEX reserves $\{Y_s\}_{s\ge t}$ lead the LP to sell more in anticipation of future positive exposure.

Because future DEX reserves depend on the evolution of the fundamental price, driven by the signal process $A$, and on the DEX trading function $\varphi$, the optimal strategy in the CEX is inherently forward-looking. The LP therefore targets a weighted average of expected future reserve levels
\begin{equation}\label{eq:hatY}
\hat Y_t := \E\Big[\int_{t}^{T}\tilde{P}(t,s)\,Y_{s}\,\ds\,\mid\,\F_{t}\Big]= \E\Big[\int_{t}^{T}\tilde{P}(t,s)\,h(F_t, \kappa)\,\ds\,\mid\,\F_{t}\Big]\,,
\end{equation}
where the kernel $\tilde P(t,s)$ assigns greater weight to reserves closer to the terminal time $T$. This weighting reflects the fact that net CEX-DEX exposure closer to maturity is more costly to unwind due to convex costs. We make this mechanism explicit below in the constant product case.

Both $\phi$ and $\eta$ represent sources of disutility for the LP:  trading costs in the CEX discourage active replication, whereas aversion to net exposure encourages it. The replication component in \eqref{eq:nugeneral2} depends only on the ratio of the CEX trading cost $\eta$ to the LP’s risk-aversion parameter $\phi$, and not on their absolute levels separately.\footnote{That is, conditional on a given level of inventory $Q_t$.} 
It is therefore the ratio $\psi=\sqrt{\phi/2\,\eta}$ that governs the aggressiveness of risk management. As we discuss below, when risk aversion $\phi$ is small relative to trading costs $\eta$, or equivalently when trading costs are high relative to risk aversion, replication becomes less aggressive and the LP tolerates greater net exposure. 

When the risk-aversion parameter $\phi$ tends to zero, or when trading in the CEX becomes prohibitively costly ($\eta \to \infty$), the replication component vanishes. In particular, as $\phi \to 0$, the LP becomes risk neutral, and trading in the CEX is purely speculative and independent of the DEX position. In this limit, the optimal trading rate converges to
$
\nu_t\ \underset{\phi\rightarrow0}{\longrightarrow} \tfrac{1}{2\,\eta}\E\left[\left.\int_{t}^{T}A_{s}\,F_{s}\,\ds\,\right|\,\F_{t}\right]\,.
$

Finally, the LP's private signal has an indirect but economically significant effect on the replication component. In particular, the expected weighted future level of reserves $\hat Y_t$ in \eqref{eq:hatY} 
depends on the signal process driving fundamentals. When strong expected price drifts are present, anticipated future reserves may increase substantially in one direction. This mechanically amplifies the replication component, as the LP must trade more aggressively to manage the expected future exposure. In stage one, the LP anticipates this effect: large expected reserve movements raise anticipated risk-offsetting costs, which can lead to lower endogenous liquidity. Private information therefore affects liquidity not only through speculation, but also through anticipated replication costs.

\paragraph{Speculative component.}  The speculative component of the optimal strategy \eqref{eq:nugeneral2} is driven by expected future values of the trading signal, weighted through time by the kernel $\tilde P$. Positive (resp.~negative) expected signals induce the LP to buy (resp.~sell) the risky asset in anticipation. The strength of this speculative motive is dampened by trading costs: as $\eta$ increases, trading becomes more expensive and speculation declines. When anticipated in stage one, this component incentivises greater endogenous liquidity.

\section{Stage one: liquidity supply}\label{sec:liqprovuniswap}

In the previous section, we derived the optimal stage-two trading strategy $\nu_t^{\star}$ in the CEX for an arbitrary liquidity depth $\kappa$, corresponding to initial DEX reserves $Y_0 = h(F_0,\kappa)$ and $X_0 = \varphi(Y_0,\kappa)$.

To determine the optimal liquidity depth $\kappa^{\star}$, the LP anticipates three forces. First, the stage-two risk-offsetting strategy is executed in the CEX at a cost. In particular, perfect replication minimises risk but maximises costs. Second, trading volumes of price-sensitive noise LTs follow the  best response \eqref{eq:LTtradingvolume} to the  liquidity supplied. Third, adverse selection costs,  characterised in \eqref{eq:PL}, increase with the liquidity depth $\kappa$.

In this section, we consider general price signals $A$ satisfying Assumption~\ref{assume:A-h:i} and general DEX trading functions satisfying Assumption~\ref{assume:A-h:ii}. We establish the existence of an equilibrium level of liquidity provision under this broad specification. Section~\ref{sec:numerics} then derives the equilibrium explicitly in the case of constant product markets.

Let $Q_t^\star$ denote the  inventory in the CEX resulting from the optimal stage-two strategy $\nu_t^\star$ in \eqref{eq:nugeneral2}. In stage one, the LP maximises the profit and loss (PnL) from providing liquidity. The initial value of the LP's combined holdings in the CEX and the DEX, expressed in units of the reference asset $X$, is
$
X_0 +Y_0\,F_0 + \tilde X_0 + Q_0\,F_0 = X_0\,, 
$
where the equality follows from the normalisation $Q_0 = -Y_0$ and $\tilde X_0=0$. The optimisation problem of the LP in stage one is therefore
\begin{align}\label{eqn:valuefn no transient}\tag{K}
\sup_{\kappa\in[\underline{\kappa},\overline\kappa]}\mathbb{E}\!
\Bigg[
\underbrace{\int_0^T\Pi(F_t,\kappa)\,\dt+X_T -X_0+ Y_T\, F_T}_\text{PnL in the DEX} 
\ + \ \underbrace{Q_T^{\star}\,F_T
- \int_0^T \left(F_t + \eta\, \nu^\star_t\right)\, \nu^\star_t \, \diff t}_\text{PnL in the CEX}
\Bigg]\,,
\end{align}
where $\Pi$ is defined in \eqref{eq:Pi}, and $\underline{\kappa}$ (resp. $\overline{\kappa}$) denotes the minimum (resp. maximum) admissible liquidity depth implied by the LP's budget constraint.\footnote{The LP may choose $\underline{\kappa}$ arbitrarily small. We do not set it equal to zero in order to avoid the singularity at $\kappa = 0$ in Assumptions~\ref{assume:1} and to simplify the analysis.}

The next result shows that the LP's objective is well defined and establishes mild conditions under which equilibria exist. These conditions are satisfied by popular trading functions including the constant product and the geometric mean market makers.

\begin{proposition}\label{prop:valuefndfd no transient}
The objective in  \eqref{eqn:valuefn no transient} 
is well-defined for all $\kappa>0$. 
Moreover, suppose there exist $\mathfrak{p},\mathfrak{q}\in\mathbb{R}$ and a continuous function $\mathfrak{C}:(0,\infty)\to(0,\infty)$ such that, for all $x,\kappa,\kappa^\prime>0$,
\begin{equation}
    |h(x,\kappa)-h(x,\kappa^\prime)|+|\partial_1h(x,\kappa)-\partial_1h(x,\kappa^\prime)|+|\partial_{11}h(x,\kappa)-\partial_{11}h(x,\kappa^\prime)|\leq \left(x^{\mathfrak{p}}+x^{\mathfrak{q}}\right)\,|\mathfrak{C}(\kappa)-\mathfrak{C}(\kappa^\prime)|\,.
\end{equation}
Then the LP's objective \eqref{eqn:valuefn no transient} is continuous in $\kappa$ and attains its maximum over the compact set $[\underline{\kappa},\overline{\kappa}]$.
\end{proposition}
\begin{proof}
    See Appendix~\ref{proof:prop:valuefndfd no transient}.
\end{proof}

\section{Constant product markets}\label{sec:numerics}

To illustrate the implications of our model, we examine equilibrium outcomes in constant product markets (CPMs), such as Uniswap. This specification allows for explicit characterisation of equilibrium liquidity and for comparative statics. In a CPM, the level function is
$\varphi(Y,\kappa) = \frac{\kappa^{2}}{Y},$ 
and the corresponding fundamental price and reserves satisfy
\begin{equation}\label{eq:uniswap2}
F_t = -\partial_{1}\varphi(Y_t,\kappa)
= \frac{\kappa^{2}}{Y_t^{2}}
\qquad\text{and}\qquad
Y_t = h(F_t,\kappa)
= \frac{\kappa}{\sqrt{F_t}}\,.
\end{equation}
Under these assumptions, and for a given depth $\kappa$, the fee revenue rate \eqref{eq:Pi} simplifies to
\begin{equation}\label{eq:Pi CPM}
\Pi\left(F_{t},\kappa\right)
= \gamma\,\kappa
\sqrt{F_{t}}\,.
\end{equation}

Next, we derive the equilibrium liquidity in the CPM. 

\begin{proposition}\label{prop:liq no hedge}
Assume~\eqref{eq:uniswap2} holds and let $P$ and $\tilde P$ be defined as in \eqref{eq:solP no transient}. Define  the processes 
\begin{align}\label{eq:def Ct Dt}
C_{t}^{\ell} = -\psi^2\,\,
\E\Big[\int_{t}^{T}\tilde{P}(t,s)\,F_{s}^{-1/2}\,\ds\,\mid\,\F_{t}\Big],
\qquad
D_{t}^{\ell} = \tfrac{1}{2\,\eta}\,
\E\Big[\int_{t}^{T}\tilde{P}(t,s)\,A_{s}\,F_{s}\,\ds\,\mid\,\F_{t}\Big], \\
C_{t}^{Q} = - \sqrt{F_{0}}\,\tilde{P}(0,t)+\int_{0}^{t}\tilde{P}(s,t)\,C_{s}^{\ell}\,\ds, \quad 
 D_{t}^{Q}  = \int_{0}^{t}\tilde{P}(s,t)\,D_{s}^{\ell}\,\ds,
\quad 
C_{t}^{\nu}  = P(t)\,C_{t}^{Q} + C_{t}^{\ell},
\end{align}
and $D_{t}^{\nu}  = P(t)\,D_{t}^{Q} + D_{t}^{\ell}$. The equilibrium liquidity is
$\max{\left(\underline{\kappa},\min{\left(\overline{\kappa},\kappa^\star\right)}\right)}\,,$
where
\begin{align}\label{eq:kappahedgeA}
    \kappa^{\star}&=\underbrace{\left(\gamma-\sigma^{2}/4\right)\frac{\int_{0}^{T}\E\!\left[F_{t}^{1/2}\right]\dt}{2\,\eta\,\int_{0}^{T}\E\!\left[\left(C_{t}^{\nu}\right)^{2}\right]\,\dt}}_{\text{Fee revenue / adverse selection / risk}}+\underbrace{\frac{\E\!\left[\int_{0}^{T}\left(A_{t}\,F_{t}^{1/2}+C_{t}^{Q}\,A_{t}\,F_{t}-2\,\eta\,C_{t}^{\nu}\,D_{t}^{\nu}\right)\,\dt\right]}{2\,\eta\,\int_{0}^{T}\E\!\left[\left(C_{t}^{\nu}\right)^{2}\right]\,\dt}}_{\text{speculative component}}\,,
\end{align}
and $\gamma$ is called the profitability parameter and is defined as
\begin{equation}\label{eq:gamma def}
\gamma := \frac{\lambda\,\pi\,(v-\pi)}
{2}\,.
\end{equation}
\end{proposition}
\begin{proof}
See Appendix~\ref{proof:prop:liq no hedge}.
\end{proof}

The first component of the equilibrium liquidity~\eqref{eq:kappahedgeA} is proportional to the net profitability of price-sensitive liquidity demand relative to adverse selection costs, given by $\gamma - \frac{1}{4}\sigma^2$.  The LP anticipates that the ratio $\psi$ of risk aversion $\phi$ to trading costs $\eta$ governs the aggressiveness of replication and, consequently, the expected cost of executing the optimal strategy \eqref{eq:nugeneral2}. The denominator of this component therefore captures the expected CEX trading costs incurred when implementing the strategy. In equilibrium, liquidity supply decreases when anticipated replication costs are high, either because CEX liquidity is scarce (high $\eta$) or because risk aversion is strong (high $\phi$).

The speculative component in~\eqref{eq:kappahedgeA} is driven by private information and operates through two distinct economic channels: a return motive and a cost motive. 

First, trading signals generate expected capital gains from executing the speculative component of the optimal strategy in the CEX, which takes directional positions in anticipation of future price movements. These gains can, in equilibrium, support a larger liquidity supply.

Second,  price changes also increase anticipated net inventory exposure across the CEX and the DEX. Larger expected reserve movements require more aggressive replication in the CEX, which increases expected trading costs. This cost channel reduces equilibrium liquidity.

As shown in the comparative statics below, the net effect of private information on equilibrium liquidity depends on the relative strength of these two forces. For moderate positive expected drifts, speculative gains dominate and equilibrium liquidity increases. However, when expected price movements are  sufficiently large, higher anticipated replication costs  outweigh speculative benefits, leading to a contraction in liquidity supply. 

In traditional electronic markets, market makers who anticipate adverse price movements can widen spreads or skew quotes to avoid accumulating inventory and mitigate adverse selection. By contrast, liquidity provision in DEXs is passive and constrained by the slow and costly infrastructure of blockchains.  As a result, the LP's primary strategic response to anticipated adverse price movements is to reduce the level of liquidity supplied.

Next, Section~\ref{subsec:no info} analyses how CEX market conditions affect the viability of DEX liquidity provision when the LP does not rely on private information. Section~\ref{subsec:info} then examines how private information modifies equilibrium liquidity and performance.

\subsection{Comparative statics: aversion, costs, volatility, and fees}\label{subsec:no info}

Assume the liquidity provider does not exploit private information and that the fundamental price follows \eqref{eq: dyn F} with $A\equiv 0$. In this case the speculative component in the equilibrium liquidity expression \eqref{eq:kappahedgeA} vanishes. The corollary below characterises the equilibrium liquidity supply under these assumptions; it is a special case of Proposition~\ref{prop:liq no hedge}, so we omit the proof.

\begin{corollary}\label{cor:Azero}
Assume~\eqref{eq:uniswap2} holds and $A\equiv0.$ Let $\psi=\sqrt{\tfrac{\phi}{2\,\eta}}$ and $\Sigma=\frac38\sigma^{2}$. If $\gamma>\frac{1}{4}\sigma^2$, then the equilibrium liquidity supply is
\begin{equation}\label{eq:kappahedgeCPM}
\kappa^{\star}=\frac{4\,\sqrt{F_{0}}}{\sigma^{2}}\left(\gamma-\frac{\sigma^{2}}{4}\right)\frac{1-e^{-\frac18\sigma^{2}\,T}}{\eta\,\int_{0}^{T}\mathbb{E}\left[\left(C_{t}^{\nu}\right)^{2}\right]\dt}\,,
\end{equation}
where $C_t^\nu$ is defined in \eqref{eq:def Ct Dt} and $C_t^\ell$ becomes
$C_{t}^{\ell}=-\frac{\psi^{2}\,e^{\Sigma\,(T-t)}}{2\,\sqrt{F_{t}}\,\cosh\!\left(\psi\,(t-T)\right)}\left(\frac{1-e^{\left(\Sigma+\psi\right)(t-T)}}{\Sigma+\psi}+\frac{1-e^{\left(\Sigma-\psi\right)(t-T)}}{\Sigma-\psi}\right)\,.$
\end{corollary}

We now examine how the model primitives shape equilibrium outcomes in CPMs. In particular, we study the effects of CEX trading costs $\eta$, risk aversion $\phi$, fundamental volatility $\sigma$, and the profitability parameter $\gamma$.

\vspace{.3cm}
\noindent
\textbf{Liquidity supply.} Liquidity provision is viable only when the profitability parameter $\gamma$ exceeds the adverse selection threshold $\frac{1}{4}\sigma^2$. When $\gamma > \frac{1}{4}\sigma^2$, liquidity is strictly positive and decreasing in fundamental volatility $\sigma$, consistent with the existing literature; see \cite{milionis2022automated} and \cite{cartea2023predictable}. Beyond this threshold, liquidity depends critically on both the ratio $\psi$ of risk aversion $\phi$ to replication costs $\eta$, and the absolute level of trading costs $\eta$. 

\begin{figure}[!h]
    \centering
    \includegraphics[width=.7\linewidth]{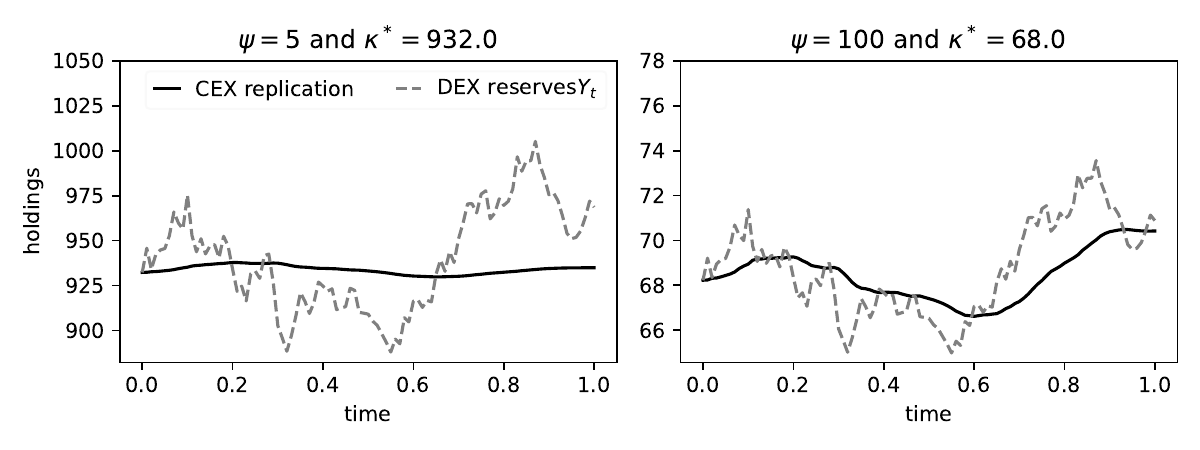}
    \caption{Sample path of  the LP's reserves $Y_t$ held in the DEX and the inventory $Q_t$ held in the CEX. The left panel corresponds to a ratio of risk aversion to trading costs $\psi = 5$, while the right panel corresponds to $\psi = 100$. Other default parameter values are trading costs $\eta=0.01$, profitability $\gamma = 0.05$, fundamental volatility $\sigma = 0.2$, and investment horizon $T = 1$.}
    \label{fig:paths}
\end{figure}

Figure~\ref{fig:paths} illustrates this mechanism in the constant product setting. For fixed trading costs $\eta$, when $\psi$ is low (left panel), the LP adjusts the CEX position only gradually in response to changes in DEX reserves. As a result, expected trading costs remain moderate and equilibrium liquidity is relatively high. When $\psi$ increases (right panel), the LP tracks fluctuations in DEX reserves more closely. Anticipating the associated increase in CEX trading costs, the LP optimally reduces liquidity provision ex ante to reduce price exposure. 

Thus, the viability of DEX liquidity is preserved  through a reduction in DEX liquidity depth, and not by replication itself. Liquidity itself becomes a key risk-management tool which controls  exposure, and the associated expected risk and returns.

Figure~\ref{fig:liquiditysupply} illustrates equilibrium liquidity~\eqref{eq:kappahedgeCPM} as a function of market conditions and risk aversion. The first panel plots liquidity against the ratio $\psi$ of risk aversion to CEX trading costs. As $\psi$ increases, disutility from net exposure becomes more significant relative to trading costs, leading to more aggressive replication of the DEX position (see Figure~\ref{fig:paths}). Anticipating the higher replication costs associated with this behaviour, the LP optimally reduces liquidity depth as $\psi$ increases.

As $\psi \to 0$, the LP is neutral to risk. In this limit, if liquidity provision is profitable in expectation, that is, if $\gamma > \frac{1}{4}\sigma^2$, the LP allocates the entire available budget $\overline{\kappa}$ to liquidity provision. Conversely, if $\gamma \leq \frac{1}{4}\sigma^2$, liquidity collapses and the DEX shuts down.

\begin{figure}[!h]
    \centering
    \includegraphics[width=.8\linewidth]{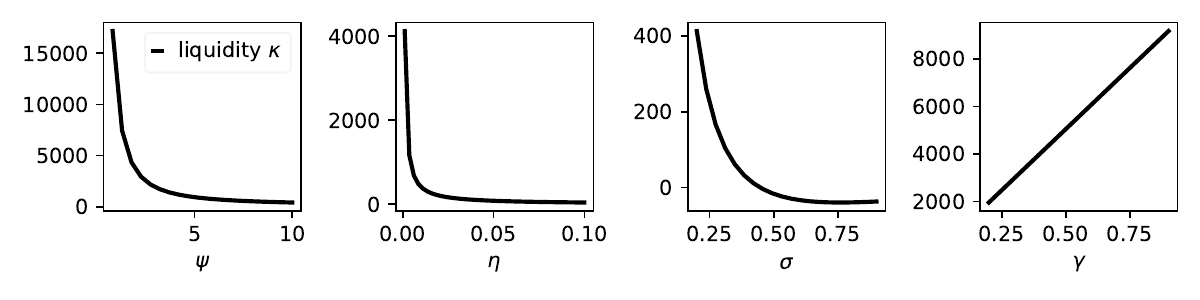}
    \caption{Equilibrium liquidity supply $\kappa^{\star}$ in~\eqref{eq:kappahedgeCPM} plotted as functions of the model primitives. Default parameter values are: fundamental volatility $\sigma = 0.2$, investment horizon $T = 1$, private signal $A = 0$, CEX trading cost $\eta = 10^{-2}$, ratio $\psi = \sqrt{\phi/2\eta} = 10$, and profitability $\gamma = 0.05$.
    }
    \label{fig:liquiditysupply}
\end{figure}

The second panel of Figure~\ref{fig:liquiditysupply} shows that, for a fixed ratio $\psi$, higher trading costs $\eta$ reduce equilibrium DEX liquidity. The economic mechanism is as follows. Even if the intensity of replication is governed by $\psi$,\footnote{As shown in  Figure~\ref{fig:paths} which shows that the intensity of replication in the CEX is governed solely by the ratio $\psi$.} the absolute cost of replication increases with $\eta$. As a result, liquidity provision becomes less attractive and equilibrium $\kappa$ declines. In the limit as $\eta \to 0$, risk management becomes effectively costless and equilibrium liquidity reaches the budget constraint $\overline\kappa$.

The third panel of Figure~\ref{fig:liquiditysupply} shows that higher fundamental volatility $\sigma$ reduces equilibrium liquidity. The final panel shows that greater profitability of price-sensitive liquidity demand increases liquidity. In the model, profitability increases with the fee rate $\pi$, the arrival intensity of noise LTs $\lambda$, and the average absolute liquidity need $v$.

\vspace{.3cm}
\noindent
\textbf{Viability of DEXs.} We next examine the equilibrium profitability of liquidity provision in a CPM as a function of market conditions and risk aversion. Specifically, we compute the LP's profit and loss when risk is optimally offset in the CEX, as given by~\eqref{eqn:valuefn no transient}. We also report returns both in absolute terms, expressed in units of the reference asset $X$, and in relative terms. For the latter, we normalise the PnL by the initial combined DEX-CEX position $X_{0}+C_0+(Y_0+Q_0)F_0 = X_0 = \kappa^{\star}\sqrt{F_{0}},$ and express profitability as a percentage of this initial capital.

The top panels of Figure~\ref{fig:perfs} show that absolute profits from liquidity provision mirrors equilibrium liquidity in Figure~\ref{fig:liquiditysupply}. This is mechanical: absolute profits scale with the level of liquidity supplied. Liquidity declines when risk management is more aggressive (high $\psi$), when CEX trading is costly (high $\eta$), or when adverse selection risk is elevated (high $\sigma$). In each case, lower liquidity reduces fee revenue and therefore absolute returns. Conversely, stronger price-sensitive liquidity demand expands the liquidity base, increasing both absolute profits and the dispersion of outcomes.

The bottom panels of Figure~\ref{fig:perfs} report returns relative to the LP's initial investment, $X_0 = \kappa^\star \sqrt{F_0}$, which scales with the liquidity deposit. Normalising by initial capital removes scale effects and isolates the risk-return profile per unit of deployed capital.

\begin{figure}[!h]
    \centering
    \includegraphics[width=.8\linewidth]{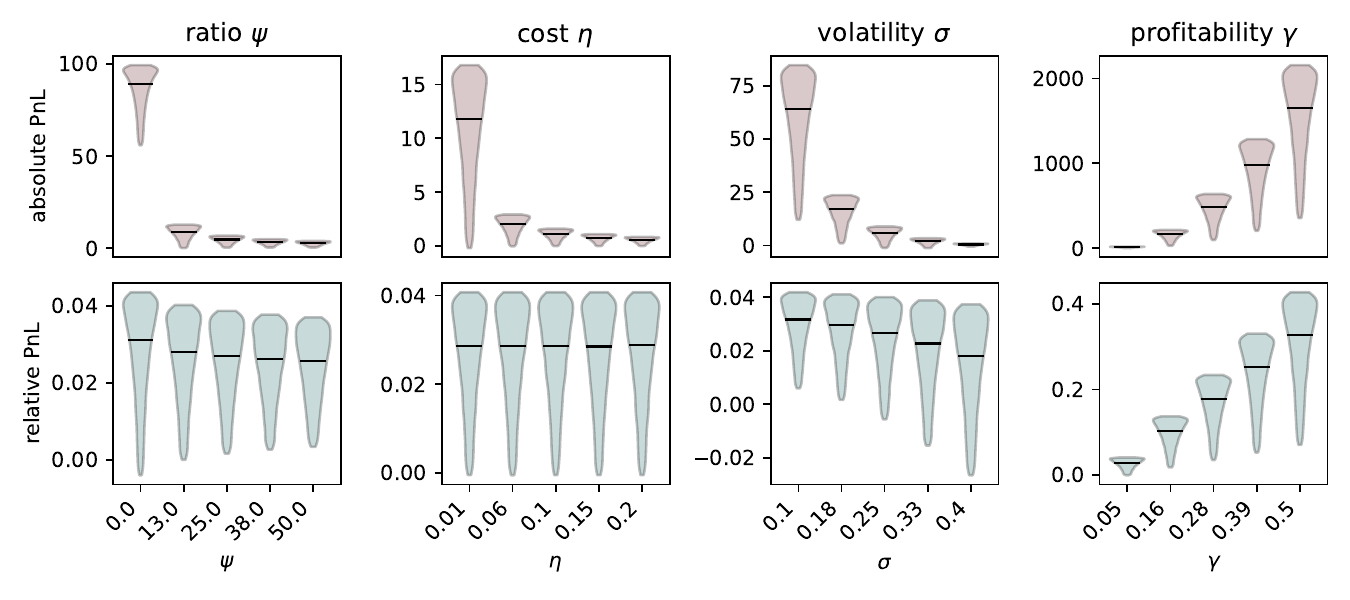}
    \caption{Violin plots of the empirical distributions of the equilibrium profits of liquidity provision expressed in units of the reference asset $X$ (top panels) and in percentage term (bottom panels).  The distribution is obtained from $100$,$000$ market simulations, with the time interval discretised into $100$ steps.  Default parameter values are $\sigma = 0.2$, $T = 1$, $A = 0$, $\eta =  10^{-2}$, $\psi = 10$, and $\gamma = 0.05$.}
\label{fig:perfs}
\end{figure}

A key structural implication emerges once liquidity is endogenised. The viability of liquidity provision is not determined solely by the trade-off between fee revenue and adverse selection; rather, it also depends critically on replication costs and risk aversion. Specifically, the ratio $\psi$  influences the risk-return trade-off per unit of deployed capital. A lower ratio increases expected returns towards the risk-neutral benchmark when $\gamma > \frac{1}{4}\sigma^2$, but at the cost of greater return variance. Conversely, higher values of $\psi$ reduce equilibrium  variance of returns but at the cost of lower profitability.

The second bottom panel of Figure~\ref{fig:perfs} shows that although $\eta$ determines the cost of trading in the CEX, equilibrium liquidity adjusts such that the distribution of returns per unit of capital remains invariant as $\eta$ varies with all else held fixed.  Trading costs therefore affect the scale of liquidity provision, but not the per-unit risk-return trade-off in equilibrium.

As emphasised in the existing literature, the viability of DEX liquidity also critically depends on fundamental volatility and the profitability of liquidity demand. The last two bottom panels of Figure~\ref{fig:perfs} show that higher volatility substantially increases the dispersion of relative returns and lowers their mean, reflecting greater adverse selection risk per unit of capital. In contrast, greater profitability of price-sensitive liquidity demand increases both expected relative returns and risk: as $\gamma$ rises, the LP optimally supplies more liquidity (see Figure~\ref{fig:liquiditysupply}) and increases exposure to price fluctuations; however, higher fee revenue compensates for this additional risk on average.

\subsection{Comparative statics: private information}\label{subsec:info}

Assume that the LP observes a constant private signal $A$ driving the drift of the fundamental price of the risky asset $Y$. As discussed earlier, private information affects equilibrium outcomes through two distinct economic channels. First, it creates a speculative motive. Second, anticipated price changes imply larger inventory exposure  requiring more intensive risk management.

\begin{figure}[!h]
    \centering
    \includegraphics[width=.65\linewidth]{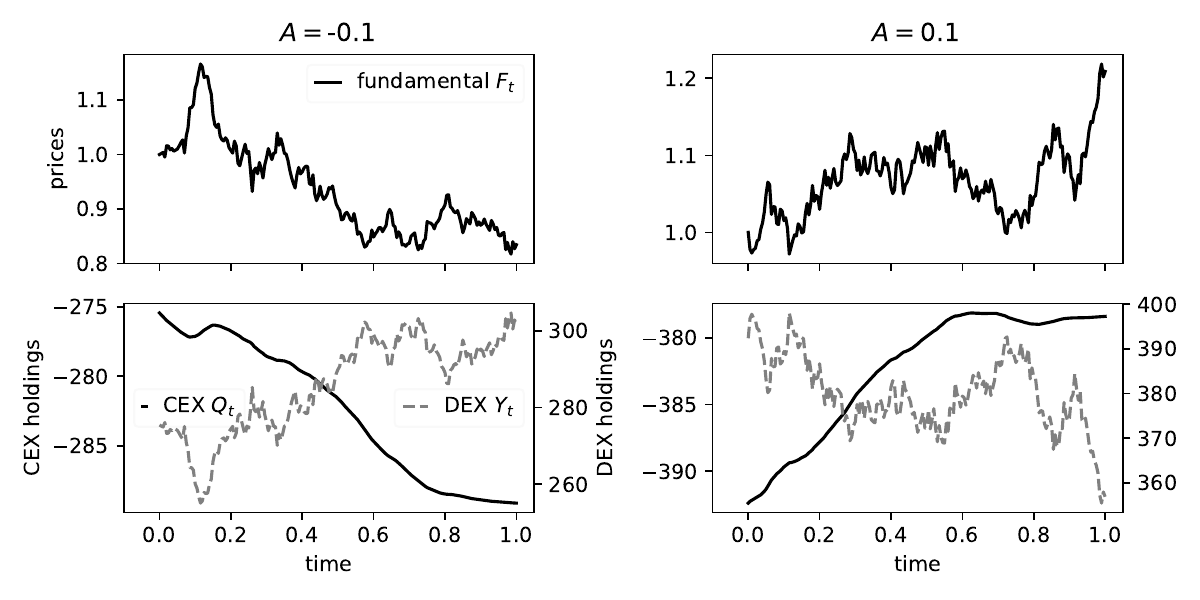}
    \caption{Sample paths of the fundamental price $F_t$ (top panels), and the LP's reserves $Y_t$ held in the DEX and the inventory $Q_t$ held in the CEX (bottom panels). The left panels correspond to $A=-0.2$ and the right panels to $A=0.2.$ Other default parameter values are ratio $\psi=0.1,$ profitability $\gamma = 0.05$, fundamental volatility $\sigma = 0.2$, and investment horizon $T = 0.3$.}
    \label{fig:pathsA}
\end{figure}

The first channel is illustrated in Figure~\ref{fig:pathsA}, where the LP has low relative risk aversion ($\psi=0.1$). In this case, the  position $Q_t$ in the CEX is tilted towards exploiting the expected price movement rather than closely tracking the DEX reserves, which move in the opposite direction to price changes. Speculative positioning therefore dominates replication.

The positive effect of speculative incentives on liquidity provision is, however, generally weaker than the adverse effects arising from increased adverse selection costs and higher replication costs.

Figure~\ref{fig:distrA} illustrates how private information affects equilibrium liquidity and its viability. Private information does not systematically lead to higher performance or deeper markets. Rather than serving purely as a tool to enhance liquidity (see \cite{klein2023price}), private information in the context of passive DEX liquidity provision also operates as a defensive mechanism: it alerts the LP to anticipated adverse selection and future replication costs.

\begin{figure}[!h]
    \centering
    \includegraphics[width=0.65\linewidth]{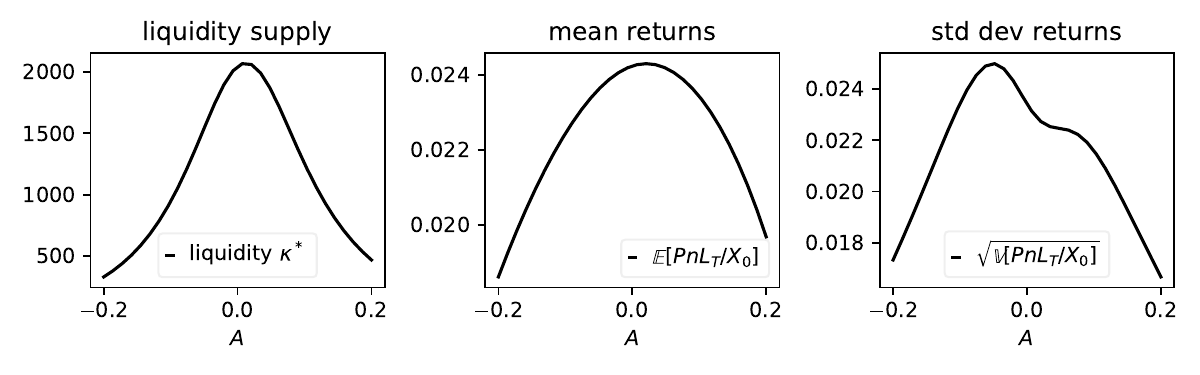}
    \caption{Left panel: equilibrium DEX liquidity $\kappa^\star$. Middle panel: average equilibrium profits of liquidity provision expressed in percentage terms. Right panel: standard deviation of equilibrium profits of liquidity provision expressed in percentage terms. The distribution is obtained from $100$,$000$ market simulations, with the time interval discretised into $100$ steps.     
    Default parameter values are $\sigma = 0.2$, $T = 1$, $\eta =  10^{-2}$, $\psi = 10$, and $\gamma = 0.05$.}
\label{fig:distrA}
\end{figure}

For moderate positive values of the drift signal, the LP anticipates that expected capital gains from speculative trading in the CEX will complement fee revenue, and liquidity supply increases. However, when the signal becomes sufficiently large in absolute value, anticipated reserve adjustments increase substantially. Replicating the DEX position then requires more intensive trading in the CEX, raising expected trading costs. Anticipating these higher costs, the LP optimally reduces liquidity, leading to thinner markets.

Finally, an asymmetry between moderate positive and negative signals is visible in Figure~\ref{fig:distrA}. This asymmetry arises from the nonlinearity of the reserve mapping $Y = h(F_t,\kappa)$. In the constant product case, $\partial_{FF}Y_t > 0$, so $Y(F)$ is convex. As a result, downward movements in $F_t$ increase reserves more rapidly than comparable upward movements reduce them. Consequently, negative price signals generate larger reserve expansions and require more intensive replication in the CEX. 

\section{Conclusions}

This paper builds a structural model of liquidity provision in DEXs in which arbitrageurs align DEX prices with fundamentals, which generates adverse selection costs for LPs, while noise price-sensitive liquidity demand generates fee revenue. We show that, once trading volumes and liquidity supply are endogenised, the losses and risks borne by LPs are not summarised by any single measure. Instead, they depend on (i) market conditions, including CEX liquidity depth, fundamental volatility, and noise trading activity, and on (ii) the LP's risk aversion, which ultimately shapes the distribution of returns from DEX liquidity provision.

\appendix

\allowdisplaybreaks

\section{Proofs}\phantomsection\label{apx:proofs}

\paragraph{Preliminary result.} To prove the results below, we first a prove a useful Lemma.

\begin{lemma}\label{lem:FhA_2}
    The following inequalities hold:
\begin{itemize}
    \item $\E\Big[\sup_{t\leq T}F_t^q\Big]<\infty\,,\quad\E\Big[\int_0^TF_t^q\,\dt\Big]<\infty\,,\quad\forall q\in\mathbb R.$
    \item $\E\Big[\sup_{t\leq T}Y_t^q\Big]<\infty\,,\quad\E\Big[\int_0^tY_t^q\,\dt\Big]<\infty\,,\quad \forall q\in[1,\infty).$
    \item $\E\Big[\int_0^T|G_t|^q\,\dt\Big]<\infty\,,\quad \forall q\in[1,p)\,.$
\end{itemize}
Then,
    $F^q\in\A_2$,\  $\forall q\in\mathbb{R}$.        Moreover, $\forall \kappa>0$ and $q\geq 1$, $h(F,\kappa)^q$, $(\partial_1h(F,\kappa))^q$, $(\partial_{11}h(F,\kappa))^q \in \A_2$.
\end{lemma}
\begin{proof}
    \noindent
    For each $q\in\mathbb{R}$, consider the exponential martingale $M(q)_t:=e^{q\,\sigma\,W_t-\tfrac{1}{2}\,q^2\,\sigma^2\,t},$ and write $F_t^q=F_0^q\,e^{\tfrac{1}{2}\,(q^2-q)\,\sigma^2\,t}\;e^{q\int_0^tA_s\,\ds}\,M(q)_t\,.$ By Cauchy-Schwarz inequality, Doob's inequality, and Assumption~\ref{assume:A-h:i}, we obtain
\begin{align}
    &\E\Big[\sup_{t\leq T}F_t^q\Big]\leq F_0^q\,e^{\tfrac{1}{2}\,|q^2-q|\,\sigma^2\,T}\,\E\big[e^{|q|\int_0^T|A_s|\,\ds}\;\sup_{t\leq T}M(q)_t\big]
    \\
    &\leq F_0^q\,e^{\tfrac{1}{2}\,|q^2-q|\,\sigma^2\,T}\,\E\big[e^{2\,|q|\int_0^T|A_s|\,\ds}\big]^{\tfrac12}\,\E\big[\sup_{t\leq T}(M(q)_t)^{2}\big]^{\tfrac12}
    \\
    &\leq 2\,F_0^q\,e^{\tfrac{1}{2}\,|q^2-q|\,\sigma^2\,T}\,\E\big[e^{2\,|q|\int_0^T|A_s|\,\ds}\big]^{\tfrac12}\,\E\big[(M(q)_T)^2\big]^{\tfrac12}
    \leq 2\,F_0^q\,e^{\tfrac{1}{2}\,(|q^2-q|+q^2)\,\sigma^2\,T}\,\E\big[e^{2\,|q|\int_0^T|A_s|\,\ds}\big]^{\tfrac12}<\infty
\end{align}
and $\E\Big[\int_0^TF_t^q\,\dt\Big]\leq T\,\E\Big[\sup_{t\leq T}F_t^q\Big]<\infty\,.$ By Assumption~\ref{assume:A-h:ii},  and using $Y_t=h(F_t,\kappa)$, we obtain,
\begin{align}
    \forall q\in[1,\infty),\quad \E\Big[\sup_{t\leq T}Y_t^q\Big]=\E\Big[\sup_{t\leq T}h(F_t,\kappa)^q\Big]&\leq C_\kappa^q\,\E\Big[\sup_{t\leq T}\left(F_t^{q_\kappa}+F_t^{p_\kappa}\right)^q\Big]
    \\
    &\leq C_\kappa^q\,2^{q-1}\,\left(\E\Big[\sup_{t\leq T}F_t^{q_\kappa\,q}\Big]+\E\Big[\sup_{t\leq T}F_t^{p_\kappa\,q}\Big]\right)
    <\infty\,.
\end{align}
Moreover, we also obtain $\E\Big[\int_0^tY_t^q\,\dt\Big]\leq T\,\E\Big[\sup_{t\leq T}Y_t^q\Big]<\infty\,,$ and $\forall q\in[1,p)$, we have
\begin{align}
    \E&\Big[\int_0^T|G_t|^q\,\dt\Big]
    =\E\Big[\int_0^T\left|\partial_1 h(F_t,\kappa)\,A_t+\tfrac{\sigma^2}{2}\,\partial_{11}h(F_t,\kappa)\,F_t\right|^q\,\dt\Big]
    \\
    &\lesssim\E\Big[\int_0^T\left(F_t^{q_\kappa\,q}+F_t^{p_\kappa\,q}\right)\,|A_t|^q\,\dt+\int_0^T\left(F_t^{q_\kappa\,q+q}+F_t^{p_\kappa\,q+q}\right)\,\dt\Big]
    \\
    &\lesssim\E\Big[\int_0^T\left(F_t^{\tfrac{p\,q_\kappa\,q}{p-q}}+F_t^{\tfrac{p\,p_\kappa\,q}{p-q}}\right)\,\dt\Big]^{\tfrac{p-q}{p}}\,\E\Big[\int_0^T|A_t|^p\,\dt\Big]^{\tfrac{q}{p}}
     +\E\Big[\int_0^T\left(F_t^{q_\kappa\,q+q}+F_t^{p_\kappa\,q+q}\right)\,\dt\Big]
    <\infty\,.
\end{align}

\noindent
The inequalities above and Assumption~\ref{assume:2} immediately imply the result of the lemma. \qed
\end{proof}

\subsection{Proof of Lemma \ref{lem:perfcrit no transient} }\phantomsection\label{proof:lem:perfcrit no transient}

\noindent
Lemma \ref{lem:FhA_2} shows that the processes involved in the objective are well defined. The rest of the proof proceeds in four steps. First, we show that the performance criterion \eqref{eqn:criterion2} is well-defined and continuous. Next, we show that the functional $J$ in \eqref{eq:functional J} is also well-defined and continuous. Next, we show that the performance criterion  \eqref{eqn:criterion2} and $J$ in \eqref{eq:functional J} agree up to a constant on bounded processes. Finally, we conclude.

\noindent
\textbf{Step 1.} First, we show that the performance criterion \eqref{eqn:criterion2} is well-defined and continuous. Note that $Q^\nu=Q_0+\mathfrak{Q}\nu$. Take $\nu\in\A_2$. Then
\begin{equation}
    \E\!\Big[\left|Q^\nu_T\right|^2\Big]=\E\!\Big[\left|Q_0+\int_0^T\nu_t\,\dt\right|^2\Big]\leq 2\,\left(|Q_0|^2+T\,\E\!\Big[\int_0^T|\nu_t|^2\,\dt\Big]\right)<\infty\label{ineq:Q_t}
\end{equation}
This together with Lemma~\ref{lem:FhA_2} and Cauchy-Schwarz inequality implies $\E\!\Big[Q^\nu_T\,F_T\Big]$ is well-defined. Because \eqref{eqn:criterion2} can be written as
\begin{equation}
    \E\!\Big[Q^\nu_T\,F_T\Big]-\eta\,\Vert\nu\Vert^2-\frac{\phi}{2}\,\Vert\mathfrak{Q}\nu\Vert^2-\langle F,\nu\rangle-\phi\,\langle Y+Q_0,\mathfrak{Q}\nu\rangle-\frac{\phi}{2}\,\Vert Y+Q_0\Vert^2\,,\label{P lemma 1}
\end{equation}
where $Y\in\A_2$ by Lemma~\ref{lem:FhA_2} and $\mathfrak{Q}$ is bounded linear operator on $\A_2$, it is well-defined.

\vspace{.3cm}\noindent
Because $\nu\mapsto-\eta\,\Vert\nu\Vert^2-\frac{\phi}{2}\,\Vert\mathfrak{Q}\nu\Vert^2-\langle F,\nu\rangle-\phi\,\langle Y+Q_0,\mathfrak{Q}\nu\rangle$ is a linear-quadratic form on $\A_2$, it is continuous, it remains to show $\E\![Q^\nu_T\,F_T]$ is continuos in $\nu$. Indeed, if $\nu^{\prime}\in\A_2$, then
\begin{align}
    \left|\E\Big[F_T\,\left(Q^{\nu^{\prime}}_T-Q^\nu_T\right)\Big]\right|&\leq\E\Big[|F_T|^2\Big]^{\tfrac12}\,\E\Big[\left|Q^{\nu^{\prime}}_T-Q^\nu_T\right|^2\Big]^{\tfrac12}
    \leq \sqrt{2\,T}\,\E\Big[|F_T|\Big]^{\tfrac12}\,\E\Big[\int_0^T\left|\nu^{\prime}_t-\nu_t\right|^2\,\dt\Big]^{\tfrac12}\,.
\end{align}
This implies $\E\!\Big[Q^\nu_T\,F_T\Big]$ is continuous in $\nu$, as desired.

\vspace{.3cm}\noindent
\textbf{Step 2.} Next, we show that $J$ is well-defined and continuous. Since $\mathfrak{Q}$ is well-defined, $\Lambda$ is also well-defined. Because we know $Y\in\A_2$ by Lemma~\ref{lem:FhA_2}, it remains to show the product $A\,F$ is in $\A_2$, which implies $b\in\A_2$. Indeed, by Lemma~\ref{lem:FhA_2}, $\E\!\Big[\int_0^T\left|A_{t}\,F_t\right|^2\,\dt\Big]\leq\E\!\Big[\int_0^T\left|A_t\right|^p\,\dt\Big]^{\tfrac{2}{p}}\,\E\!\Big[\int_0^TF_t^{\tfrac{2\,p}{p-2}}\,\dt\Big]^{\tfrac{p-2}{p}}<\infty\,.$

\vspace{.3cm}\noindent
\textbf{Step 3.} Next, we show that the performance criterion  \eqref{eqn:criterion2} and $J$ in \eqref{eq:functional J} agree up to a constant on bounded processes. Take $\nu\in\A_2$ such that $|\nu|\leq N$ for some constant $N$. Then $\left|Q^\nu_t\right|=\left|Q_0+\int_0^t\nu_s\,\ds\right|\leq |Q_0|+T\,N\,.$
By \eqref{eq: dyn F} and \eqref{eqn: dyn Q}, we write $Q_{T}^{\nu}\,F_T$ as 
\begin{align}
Q_0\,F_0+\int_0^T Q^\nu_t\diff F_t+\int_0^TF_t\,\diff Q^\nu_t
=Q_0\,F_0+\int_0^TQ^\nu_t\,A_t\,F_t\dt+\int_0^TF_t\,\nu_t\dt+\sigma\int_0^TQ^\nu_t\,F_t\diff W_t\,.
\end{align}
Because $\E\Big[\int_0^TF_t^2\,\left|Q^\nu_t\right|^2\,\dt\Big]
    \leq(|Q_0|+T\,N)^2\,\E\Big[\int_0^TF_t^2\,\dt\Big]<\infty\,,$ the process
$\int_0^tQ^\nu_s\,F_s\,\diff W_s$
is a martingale, so $\E\Big[\int_0^TQ^\nu_t\,F_t\,\diff W_t\Big]=0\,.$ Thus we may further write \eqref{P lemma 1} as

\begin{align}
    &\E\!\Big[Q^\nu_T\,F_T\Big]-\eta\,\Vert\nu\Vert^2-\frac{\phi}{2}\,\Vert\mathfrak{Q}\nu\Vert^2-\langle F,\nu\rangle-\phi\,\langle Y+Q_0,\mathfrak{Q}\nu\rangle-\frac{\phi}{2}\int_{0}^{T}(Y_t+Q_0)^2\,\dt - Q_0\,F_0
    \\
    &=\E\!\Big[\int_{0}^{T}\big\{Q_0A_tF_t-\tfrac{\phi}{2}(Y_t+Q_0)^2\big\}\, \dt\Big]+\langle A\,F,\mathfrak{Q}\nu\rangle-\eta\,\langle\nu,\nu\rangle-\tfrac{\phi}{2}\,\langle\mathfrak{Q}\nu,\mathfrak{Q}\nu\rangle-\phi\langle Y+Q_0,\mathfrak{Q}\nu\rangle
    \\
    &=\E\!\Big[\int_{0}^{T}\big\{Q_0\,A_t\,F_t-\tfrac{\phi}{2}\,(Y_t+Q_0)^2\big\}\, \dt\Big]-\frac{1}{2}\,\langle\Lambda\nu,\nu\rangle+\langle b,\nu\rangle\,,
\end{align}
that is, we may rewrite the performance criterion \eqref{eqn:criterion2} as
\begin{align}
Q_0\,F_0+\E\!\Big[\int_{0}^{T}\left\{Q_0\,A_t\,F_t-\tfrac{\phi}{2}\,(Y_t+Q_0)^2\right\}\, \dt\Big]+J[\nu]\,.\label{eqn:agree}
\end{align}

\noindent
\textbf{Step 4.} Because bounded processes are dense in $\A_2$, by continuity, \eqref{eqn:agree} holds for all $\nu\in\A_2$.
\qed

\subsection{Proof of Proposition \ref{proposition strict concavity no transient}}\phantomsection\label{proof:proposition strict concavity no transient}

\vspace{.3cm}\noindent
First, we show $J$ is strictly concave. Take $\nu,\zeta\in\A_2$ and $\rho\in(0,1)$. By Lemma~\ref{lem:perfcrit no transient},
\begin{align}
    J&[\rho\,\nu+(1-\rho)\,\zeta]=-\frac{1}{2}\,\langle\Lambda(\rho\,\nu+(1-\rho)\,\zeta),\rho\,\nu+(1-\rho)\,\zeta\rangle+\langle b,\rho\,\nu+(1-\rho)\,\zeta\rangle
    \\
    &=-\frac{1}{2}\left(\rho^2\,\langle\Lambda\nu,\nu\rangle+2\,\rho\,(1-\rho)\,\langle\Lambda\nu,\zeta\rangle+(1-\rho)^2\,\langle\Lambda\zeta,\zeta\rangle\right) +\rho\,\langle b,\nu\rangle+(1-\rho)\,\langle b,\zeta\rangle
    \\
    &=\frac{1}{2}\,\rho\,(1-\rho)\,\left(\langle\Lambda\nu,\nu\rangle-2\,\langle\Lambda\nu,\zeta\rangle+\langle\Lambda\zeta,\zeta\rangle\right)+\rho\,J[\nu]+(1-\rho)\,J[\zeta]
    \\
    &=\frac{1}{2}\,\rho\,(1-\rho)\,\left(\eta\,\Vert\nu-\zeta\Vert^2+\frac{\phi}{2}\,\Vert\mathfrak{Q}(\nu-\zeta)\Vert^2\right)+\rho\,J[\nu]+(1-\rho)\,J[\zeta]
    \;\geq\rho\,J[\nu]+(1-\rho)\,J[\zeta]\,,
\end{align}
with equality if and only if $\nu=\zeta$. Hence, $J$ is strictly concave. Next, we show differentiability of $J$. Take $\nu,\delta\in\mathcal{A}_2$ and $\epsilon>0$. Then
\begin{align}
    \frac{1}{\epsilon}\,(J[\nu+\epsilon\,\delta]-J[\nu])&=\frac{1}{\epsilon}\,\left(-\frac{1}{2}\,\langle\Lambda(\nu+\epsilon\,\delta),\nu+\epsilon\,\delta\rangle+\langle b,\nu+\epsilon\,\delta\rangle+\frac{1}{2}\langle\Lambda\nu,\nu\rangle-\langle b,\nu\rangle\right)
    \\
    &=\frac{1}{\epsilon}\,\left(-\frac{1}{2}\,\langle\Lambda\nu,\nu\rangle-\epsilon\,\langle\Lambda\nu,\delta\rangle-\frac{\epsilon^2}{2}\,\langle\Lambda\delta,\delta\rangle+\langle b,\epsilon\,\delta\rangle+\frac{1}{2}\langle\Lambda\nu,\nu\rangle\right)
    \\
    &=-\langle\Lambda\nu,\delta\rangle-\frac{\epsilon}{2}\,\langle\Lambda\delta,\delta\rangle+\langle b,\delta\rangle
\end{align}
It follows that the G\^ateaux derivative $\D J[\nu](\delta)$ at $\nu\in\A_2$ is $\lim_{\epsilon\downarrow0}\dfrac{J[\nu+\epsilon\delta]-J[\nu]}{\epsilon}=\langle-\Lambda\nu+b,\delta\rangle\,.$
We identify $\D J[\nu]$ with $-\Lambda\nu+b$. From \eqref{defeq:Lambda no transient} and \eqref{defeq:Lambda no transient}, we get
\begin{equation}
    \D J[\nu]=-\Lambda\nu+b=-2\,\eta\,\nu+\mathfrak{Q}^\top(A\,F-\phi\,(Y+Q_0+\mathfrak{Q}\nu))\,.
    \label{eqn:DJHilbertelement no transient}
\end{equation}
To obtain an expression for $\mathfrak{Q}^\top$, write
\begin{align}\label{derivation Q transp}
    \begin{split}\langle\mathfrak{Q}\nu,\zeta\rangle=\E\!\Big[\int_0^T\int_0^t\nu_s\,\ds&\,\zeta_t\,\dt\Big]=\E\!\Big[\int_0^T\nu_s\int_s^T\zeta_t\,\dt\,\ds\Big]=\int_0^T\E\!\Big[\nu_s\int_s^T\zeta_t\,\dt\Big]\,\ds
    \\&=\int_0^T\E\!\Big[\E\!\Big[\left.\nu_s\int_s^T\zeta_t\,\dt\,\right|\,\F_s\Big]\Big]\,\ds=\int_0^T\E\!\Big[\nu_s\,\E\!\Big[\left.\int_s^T\zeta_t\,\dt\,\right|\,\F_s\Big]\Big]\,\ds.
    \end{split}
\end{align}
Thus, $\langle\mathfrak{Q}\nu,\zeta\rangle=\E\!\Big[\int_0^T\nu_s\,\E\!\Big[\left.\int_s^T\zeta_t\,\dt\,\right|\,\F_s\Big]\,\ds\Big]\,.$ 

\noindent 
Conclude that  $\mathfrak{Q}^\top$ is given by $(\mathfrak{Q}^\top\zeta)_t=\E\!\Big[\left.\int_t^T\zeta_s\,\ds\,\right|\,\F_t\Big]=\E\!\Big[\left.\int_0^T\zeta_s\,\ds\,\right|\,\F_t\Big]-\int_0^t\zeta_s\,\ds,$ 
where the in the last expression, the martingale term is c\`adl\`ag, so the entire process is c\`adl\`ag and thus progressively measurable.    It follows from \eqref{eqn:DJHilbertelement no transient} that $\D J[\nu]_t$ is given by \eqref{eq:gateauxder no transient} as desired.
\qed

\subsection{Proof of Theorem \ref{theorem optimal speed no transient}}\phantomsection\label{proof:theorem optimal speed no transient}

\vspace{.3cm}\noindent
Suppose $\D J[\nu^\star]=0$ for some $\nu^\star\in\A_2$. Then by Proposition~\ref{proposition strict concavity no transient} we have
\begin{align}
    2\,\eta\,\nu^\star_{t}&=\E\!\Big[\left.\int_{t}^{T}\left(A_{s}\,F_s-\phi\left(Y_{s}+Q_{s}^{\nu^\star}\right)\right)\,\ds\,\right|\,\F_t\Big]
        \\
        &=\E\!\Big[\left.\int_{0}^{T}\left(A_{s}\,F_s-\phi\left(Y_{s}+Q_{s}^{\nu^\star}\right)\right)\,\ds\,\right|\,\F_t\Big]-\int_{0}^{t}\left(A_{s}\,F_s-\phi\left(Y_{s}+Q_{s}^{\nu^\star}\right)\right)\,\ds\,.
\end{align}
Note that the process $M$, defined by
    \begin{equation}
        M_t=\E\!\Big[\left.\int_0^T\left(A_s\,F_s-\phi\,\left(Y_s+Q^{\nu^\star}_s\right)\right)\,\diff s\,\right|\,\F_t\Big]\,,\label{def mart m}
    \end{equation}
    is a martingale with $M_T\in L^2(\Omega)$. It follows that
    \begin{align}
        2\,\eta\,\nu^\star_s&=M_t-\int_0^t\big(A_s\,F_s-\phi\,\big(Y_s+Q^{\nu^\star}_s\big)\big)\,\diff s
        \\
        &=M_t-\int_0^T\big(A_s\,F_s-\phi\,\big(Y_s+Q^{\nu^\star}_s\big)\big)\,\diff s+\int_t^T\big(A_s\,F_s-\phi\,\big(Y_s+Q^{\nu^\star}_s\big)\big)\,\diff s
        \\
        &=-\int_t^T\big(-A_s\,F_s+\phi\,\big(Y_s+Q^{\nu^\star}_s\big)\big)\,\diff s-(M_T-M_t)\,.\label{2 eta nu no transient}
    \end{align}
Thus $\nu^\star$ satisfies the FBSDE \eqref{eq:FBSDE no transient}.

\vspace{.3cm}\noindent
Conversely, assume $\nu^\star\in\A_2$ satisfies the FBSDE \eqref{eq:FBSDE no transient} for some martingale $M$ such that $M_T\in L^2(\Omega)$. By the dynamics of $\nu^\star$ and $Q$, we may write
\begin{equation}
    2\,\eta\,\nu^\star_t=\int_t^T\left(A_s\,F_s-\phi\,(Y_s+Q^{\nu^\star}_s)\right)\,\ds-M_T+M_t\,.\label{2 eta nu no transient 2}
\end{equation}
Taking conditional expectation on above equation gives $\D J[\nu^\star]_t=0$ because

\begin{equation}
    2\,\eta\,\nu^\star_t=\E\!\Big[\left.\int_t^T\left(A_s\,F_s-\phi\,\left(Y_s+Q^{\nu^\star}_s\right)\right)\,\ds\,\right|\,\F_t\Big]\,.\label{2 eta nu no transient 3}
\end{equation} 

\subsection{Proof of Proposition \ref{prop:notrasient}}\phantomsection\label{proof:prop:notrasient}

\vspace{.3cm}\noindent
The LP's optimisation problem reduces to solving the  FBSDE \eqref{eq:FBSDE no transient}. The ansatz $\nu_t=P(t)\,Q_t+\ell_t$ gives the equations
\begin{align}
P^\prime(t)&=-P(t)^2+\tfrac{\phi}{2\,\eta},\qquad P(T)=0\label{eqn:dp} \\
\diff\ell_t&=\left(-P(t)\,\ell_t+\tfrac{-A_t\,F_t+\phi\,Y_t}{2\,\eta}\right)\dt+\tfrac{1}{2\,\eta}\,\diff M_t,\quad \ell_T=0.\label{eqn:dell}
\end{align}
The solution of \eqref{eqn:dp} is $P$ in \eqref{eq:solP no transient}. To solve \eqref{eqn:dell}, we define $\tilde P(s,t)$ as in \eqref{eq:solP no transient}, 
and use generalised It\^o's formula to write
\begin{align*}
    \tilde{P}(0,t)\,\ell_t&=-\int_t^T\tilde{P}(0,s)\,P(s)\,\ell_s\,\ds-\int_t^T\tilde{P}(0,s)\,\diff\ell_s
    \\
    &=-\int_t^T\tilde{P}(0,s)\,P(s)\,\ell_s\,\ds-\int_t^T\tilde{P}(0,s)\,\left(-P(s)\,\ell_s+\tfrac{-A_s\,F_s+\phi\,Y_s}{2\,\eta}\right)\,\ds -\tfrac{1}{2\,\eta}\int_t^T\tilde{P}(0,s)\,\diff M_s
    \\
    &=\tfrac{1}{2\,\eta}\int_t^T\tilde{P}(0,s)\,(A_s\,F_s-\phi\,Y_s)\,\ds-\tfrac{1}{2\,\eta}\int_t^T\tilde{P}(0,s)\,\diff M_s,
\end{align*}
therefore, $\ell_t=\tfrac{1}{2\,\eta}\,\E\Big[\left.\int_t^T\tilde{P}(t,s)\,(A_s\,F_s-\phi\,Y_s)\,\ds\,\right|\,\F_t\Big].$ Similarly, $Q$ is obtained by solving the equation $\diff Q_t=(P(t)\,Q_t+\ell_t)\,\dt,$ whose solution is
\begin{align}\label{eq:Qtgeneral}
    Q_t=Q_0\,\tilde{P}(0,t)+\int_0^t\tilde{P}(s,t)\,\ell_s\,\ds.
\end{align}
Finally, $\nu_t=P(t)\,Q_t+\ell_t=P(t)\,\left(Q_0\,\tilde{P}(0,t)+\int_0^t\tilde{P}(s,t)\,\ell_s\,\ds\right)+\ell_t.$
\qed

\subsection{Proof of Proposition \ref{prop:valuefndfd no transient}}\phantomsection\label{proof:prop:valuefndfd no transient}

\vspace{.3cm}\noindent
Let $\kappa>0$. Applying It\^o's formula to \eqref{eqn:dyn Y} and \eqref{eq: dyn F}, we write $Y_T\,F_T$ as  
\begin{equation}\label{ito YF}
Y_0\,F_0+\int_0^T\big\{\left(G_t+\sigma^2\,\partial_1h(F_t,\kappa)\right)\,F_t^2+A_t\,F_t\,Y_t\big\}\dt+\sigma\int_0^TY_t\,F_t\diff W_t+\sigma\int_0^T\partial_1h(F_t,\kappa)F_t^2\diff W_t\,.
\end{equation}
Recall that $h(\cdot,\kappa)$ is the inverse of $-\partial_1\varphi(\cdot,\kappa)$, so
\begin{equation}
    -\partial_{11}\varphi(h(x,\kappa),\kappa)=\partial_1h(x,\kappa)^{-1}\,,\quad\forall x>0\,.\label{eqn:d_11varphi}
\end{equation}
and using $X_T=\varphi(Y_T,\kappa)$:
\begin{align}
    X_T-X_0&=\int_0^T\partial_1\varphi(Y_t,\kappa)\,\diff Y_t+\frac{1}{2}\int_0^T\partial_{11}\varphi(Y_t,\kappa)\,\diff\langle Y\rangle_t
    =-\int_0^TF_t\,\diff Y_t-\frac{1}{2}\int_0^T\frac{1}{\partial_1h(F_t,\kappa)}\,\diff\langle Y\rangle_t
    \\
    &=-\int_0^T\big(G_t+\tfrac{\sigma^2}{2}\,\partial_1h(F_t,\kappa)\big)\,F_t^2\,\dt-\sigma\int_0^T\partial_1h(F_t,\kappa)\,F_t^2\,\diff W_t\,.
\end{align}
Moreover, \eqref{eqn:d_11varphi} implies $\Pi(F_{t},\kappa)=\dfrac{\lambda\,\pi\,(v-\pi)\,F_{t}^{2}}{\partial_{11}\varphi\left(h(F_{t},\kappa),\kappa\right)}=\lambda\,\pi\,(\pi-v)\,\partial_1h(F_t,\kappa)\,F_{t}^{2}\,.$ Combining these:
\begin{align*}
   & \int_0^T\Pi(F_t,\kappa)\,\dt+X_T -X_0+ Y_T\, F_T\\ & =Y_0\,F_0+\int_0^T\left\{\left(\tfrac{\sigma^2}{2}+\lambda\,\pi\,(\pi-v)\right)\,\partial_1h(F_t,\kappa)\,F_t^2+A_t\,F_t\,Y_t\right\}\,\dt+\sigma\int_0^TY_t\,F_t\,\diff W_t
\end{align*}
Because the processes $\partial_1h(F,\kappa)$, $F^2$, $A\,F$, and $Y^2$ are all in $\A_2$
we obtain that
\begin{equation}
    \E\!\Big[\int_0^T\big\{\big(\frac{\sigma^2}{2}+\lambda\,\pi\,(\pi-v)\big)\,\partial_1h(F_t,\kappa)\,F_t^2+A_t\,F_t\,Y_t\big\}\dt\Big]=\big(\tfrac{\sigma^2}{2}+\lambda\pi(\pi-v)\big)\langle\partial_1h(F,\kappa),F^2\rangle+\langle A\,F,Y\rangle
\end{equation}
is well-defined and $\E\!\Big[\int_0^TY_t^2\,F_t^2\,\dt\Big]\leq\left\Vert Y^2\right\Vert\,\left\Vert F^2\right\Vert<\infty\,.$ Thus, $\E\!\Big[\int_0^TY_t\,F_t\,\diff W_t\Big]=0\,.$ These and Lemma~\ref{lem:perfcrit no transient} imply \eqref{eqn:valuefn no transient} is well-defined and can be written as $J[\nu^\star]+\frac{\phi}{2}\,\Vert\mathfrak{Q}\nu^\star+Y-Y_0\Vert^2+\hat{H}\,,$ where
\begin{equation}
    \hat{H}=\E\!\Big[\int_{0}^{T}\left\{\left(\frac{\sigma^2}{2}+\lambda\,\pi\,(\pi-v)\right)\,\partial_1h(F_t,\kappa)\,F_t^2+A_t\,F_t\,(Y_t-Y_0)-\frac{\phi}{2}\,(Y_t-Y_0)^2\right\}\,\dt\Big]\,.
\end{equation}

\noindent
Next, we show $\hat{H}$ and $J[\nu^\star]+\frac{\phi}{2}\,\Vert\mathfrak{Q}\nu^\star+Y-Y_0\Vert^2$ are both continuous in $\kappa$. To that end, fix $\kappa_n\to\kappa$. Because $|Y_t(\kappa_n)-Y_t(\kappa)|=|h(F_t,\kappa_n)-h(F_t,\kappa)|\leq\left(F_t^{\mathfrak{p}}+F_t^{\mathfrak{q}}\right)\,|\mathfrak{C}(\kappa_n)-\mathfrak{C}(\kappa)|\,,$ we have
\begin{align}
    \Vert Y(\kappa_n)-Y(\kappa)\Vert=\E\!\Big[\int_0^T|Y_t(\kappa_n)-Y_t(\kappa)|^2\,\dt\Big]^{1/2}&\leq|\mathfrak{C}(\kappa_n)-\mathfrak{C}(\kappa)|\,\E\!\Big[\int_0^T\left(F_t^{\mathfrak{p}}+F_t^{\mathfrak{q}}\right)^2\,\dt\Big]^{1/2}
    \\
    &\leq|\mathfrak{C}(\kappa_n)-\mathfrak{C}(\kappa)|\,\left(\left\Vert F^\mathfrak{p}\right\Vert+\left\Vert F^{\mathfrak{q}}\right\Vert\right)
\end{align}
and $\Vert Y_0(\kappa_n)-Y_0(\kappa)\Vert\leq|\mathfrak{C}(\kappa_n)-\mathfrak{C}(\kappa)|\,\left(\left\Vert F_0^\mathfrak{p}\right\Vert+\left\Vert F_0^{\mathfrak{q}}\right\Vert\right).$ Thus, the map $\kappa\mapsto Y(\kappa)-Y_0(\kappa)$ from $(0,\infty)$ to $\A_2$ is continuous. It follows that $\kappa\mapsto\hat{H}(\kappa)$ is continuous as the composition of
\begin{equation}
    \zeta\mapsto\left(\frac{\sigma^2}{2}+\lambda\,\pi\,(\pi-v)\right)\,\left\langle\partial_1h(F,\kappa),F^2\right\rangle+\langle A\,F,\zeta\rangle-\frac{\phi}{2}\,\Vert\zeta\Vert^2
\end{equation}
with $\kappa\mapsto Y(\kappa)-Y_0(\kappa)$. Next, we consider $J[\nu^\star]+\frac{\phi}{2}\,\Vert\mathfrak{Q}\nu^\star+Y-Y_0\Vert^2$. Recall the bounded linear operator $\Lambda=2\,\eta+\phi\,\mathfrak{Q}^\top\mathfrak{Q}$ in Lemma~\ref{lem:perfcrit no transient}. Use $\langle\Lambda\nu,\nu\rangle=2\,\eta\Vert\nu\Vert^2+\phi\Vert\mathfrak{Q}\nu\Vert^2\geq2\,\eta\Vert\nu\Vert^2\,,$ to obtain that  $\Lambda$ is coercive. By Lax-Milgram lemma, $\Lambda$ has an inverse $\Lambda^{-1}$, which is a bounded linear operator on $\A_2$, and $\nu^\star$ is given by $\nu^\star=\Lambda^{-1}b$, so
\begin{align}
    J[\nu^\star]+\frac{\phi}{2}\,\Vert\mathfrak{Q}\nu^\star+Y-Y_0\Vert^2&=-\frac{1}{2}\,\left\langle\Lambda\Lambda^{-1}b,\Lambda^{-1}b\right\rangle+\left\langle b,\Lambda^{-1}b\right\rangle+\frac{\phi}{2}\,\Vert\mathfrak{Q}\Lambda^{-1}b+Y-Y_0\Vert^2
    \\
    &=\frac{1}{2}\,\left\langle b,\Lambda^{-1}b\right\rangle+\frac{\phi}{2}\,\Vert\mathfrak{Q}\Lambda^{-1}b+Y-Y_0\Vert^2\,,
\end{align}
where $b=\mathfrak{Q}^\top(A\,F-\phi\,(Y-Y_0))\,.$ 
Because $\kappa\mapsto b(\kappa)=\mathfrak{Q}^\top(A\,F-\phi\,(Y(\kappa)-Y_0(\kappa)))$ is continuous, then $\kappa\mapsto J[\nu^\star](\kappa)$ is continuous. \qed

\subsection{Proof of Proposition \ref{prop:liq no hedge}}\phantomsection\label{proof:prop:liq no hedge}
\noindent
Recall that the stage-three trading volumes generate fee revenue \eqref{eq:Pi}. In the case of a CPM, these write $\Pi(F_t,\kappa)=\gamma\, \kappa\,\sqrt{F_{t}}\,,$ where we define $\gamma$ as in \eqref{eq:gamma def}. The solutions in~\eqref{eq:nugeneral2}--\eqref{eq:Qtgeneral} become
$Q^\star_t=\kappa\, C^Q_t+D^Q_t$ and $\nu^\star_t=\kappa\,C^\nu_t+D^\nu_t.$
In the case of CPMs, the profitability function is \eqref{eq:Pi CPM}, and the initial DEX reserves in asset $X$ for a given liquidity supply $\kappa$ are $X_0 = \kappa\,F_0^{1/2},$ and the terminal holdings in the DEX are $X_T + Y_T\,F_T = 2\,\kappa\,F_T^{1/2}$. Thus, when $\nu=\nu^\star$,  the value function \eqref{eqn:valuefn no transient} is
\begin{align*}
    &\E\!\Big[\int_0^T\gamma\,F_t^{1/2}\,\dt+2\,F_T^{1/2}-F_0^{1/2}\Big]\kappa+\E\!\Big[Q^\star_T\,F_T-\int_{0}^{T}\left(F_t+\eta\,\nu^\star_{t}\right)\,\nu^\star_{t}\,\dt\Big]
    \\
    &=\E\!\Big[\int_0^T\gamma\,F_t^{1/2}\,\dt+2\,F_T^{1/2}-F_0^{1/2}\Big]\kappa
    +\E\!\Big[\left(C^Q_T\,\kappa+D^Q_T\right)\,F_T
    \\
    &\qquad\ -\int_{0}^{T}\left(F_t+\eta\,\left(C^\nu_t\,\kappa+D^\nu_t\right)\right)\left(C^\nu_t\,\kappa+D^\nu_t\right)\,\dt\Big]
    \\
    &=-\E\!\Big[\int_0^T\eta\,\left(C^\nu_t\right)^2\,\dt\Big]\,\kappa^2+\E\!\Big[2\,F_T^{1/2}-F_0^{1/2}+C^Q_T\,F_T+\int_0^T\left(\gamma\,F_t^{1/2}-C^\nu_t\,F_t-2\,\eta\,C^\nu_t\,D^\nu_t\right)\,\dt\Big]\kappa
    \\
    &\quad\ +\E\!\Big[D^Q_T\,F_T-\int_0^T\left(F_t\,D^\nu_t+\eta\,\left(D^\nu_t\right)^2\right)\,\dt\Big]\,.
\end{align*}
In this case the optimal level of liquidity $\kappa$ is
\begin{align}\label{eq:kappahedge}
    \kappa^{\star}=\frac{\E\!\Big[2\,F_T^{1/2}-F_0^{1/2}+C^Q_T\,F_T+\int_0^T\left(\gamma\,F_t^{1/2}-C^\nu_t\,F_t-2\,\eta\,C^\nu_t\,D^\nu_t\right)\,\dt\Big]}{2\,\eta\,\E\!\Big[\int_0^T\left(C^\nu_t\right)^2\,\dt\Big]}
\end{align}
By It\^o's formula, $2\,F_T^{1/2}=2\,F_0^{1/2}+\int_0^T(A_t-\frac{\sigma^2}{4})\,F_t^{1/2}\,\dt+\sigma\int_0^TF_t^{1/2}\,\diff W_t,$ and $C^Q_T\,F_T=-F_0^{1/2}+\int_0^T(C^\nu_t\,F_t+C^Q_t\,A_t\,F_t)\,\dt+\sigma\int_0^TC^Q_t\,F_t\,\diff W_t\,.$ Therefore, $\kappa^\star$ is as in \eqref{eq:kappahedgeA} where the terms $\E[\int_0^TF_t^{1/2}\,\diff W_t]$ and $\E[\int_0^TC^Q_t\,F_t\,\diff W_t]$ vanish because $\E[\int_0^TF_t\,\dt]<\infty$ and
\begin{align}
\E&\Big[\int_0^T\big|C^Q_t\big|^2\,F_t^2\,\dt\Big]\leq\E\Big[\int_0^T\big|C^Q_t\big|^4\,\dt\Big]^{1/2}\,\E\Big[\int_0^TF_t^4\,\dt\Big]^{1/2}\\
&=\E\Big[\int_0^T\big|\int_0^t\tilde{P}(s,t)\,C^\ell_s\,\ds-F_0^{-1/2}\,\tilde{P}(0,t)\big|^4\,\dt\Big]^{1/2}\,\E\Big[\int_0^TF_t^4\,\dt\Big]^{1/2}\\
&\lesssim\Big(\E\Big[\int_0^T\big|C^\ell_t\big|^4\,\dt\Big]+F_0^{-2}\Big)^{1/2}\,\E\Big[\int_0^TF_t^4\,\dt\Big]^{1/2}\\
&\leq\Big(\int_0^T\E\Big[\big|\int_t^T\tilde{P}(t,s)\,F_s^{-1/2}\,\ds\big|^4\Big]\,\dt+F_0^{-2}\Big)^{1/2}\,\E\Big[\int_0^TF_t^4\,\dt\Big]^{1/2}
\\&\lesssim\Big(\E\Big[\int_0^TF_t^{-2}\,\dt\Big]+F_0^{-2}\Big)^{1/2}\,\E\Big[\int_0^TF_t^4\,\dt\Big]^{1/2}\ \ <\infty\,.
\end{align}
\qed

\section{DEX liquidity with CEX transient price impact}\label{sec:transient}

In the main model, we assumed that the LP's trading activity in the CEX does not affect prices and model trading costs through a quadratic trading penalty. This assumption is appropriate when  LPs are small relative to CEX liquidity.

However, when the DEX is large and LPs rely on the centralised venue to manage their risk, their trading activity affects CEX prices, even though the DEX itself remains a secondary market. In this appendix, we relax our model assumption and allow LPs' trades to generate transient price impact in the CEX. 

The stage-three execution problem of LTs in the DEX, described in Section~3 of the main paper, depends only on the pool's reserves, the fundamental price, and the fee structure. In particular, it does not depend on the LP's trading activity in the CEX. As a result, the analysis of stage~three is unchanged, and we take the solution from Section~3 of the main paper as given. Below, we solve stages~two and~one in the presence of transient price impact.

The risky asset's mid-price $S^\nu = (S_t^\nu)_{t \in [0, T]}$  in the CEX now has two components: the fundamental price $F$ and a transient market impact $I^\nu = (I_t^\nu)_{t \in [0, T]}$ induced by the LP's trades in the CEX. Formally,
\begin{equation}
S_t^\nu = F_t + I_t^\nu, \qquad t \in [0, T].
\end{equation}
The transient impact process $I^\nu$ satisfies
\begin{equation}\label{eq: dyn I}
I_t^\nu = \int_0^t \left(c\, \nu_s - \psi\,I^\nu_s\right) \,\diff s\,\quad \implies\quad
I^\nu_t=c\int_0^te^{\psi\,(s-t)}\,\nu_s\,\diff s\,.
\end{equation}

The transient impact term models the fact that LP trades in the CEX push prices temporarily in the direction of their trading activity: buying pressure increases prices, while selling pressure decreases them. When trading slows or stops, this price distortion decays over time as liquidity replenishes. Here, the parameter $c > 0$ measures the linear price pressure of the LP's trades, and the resilience parameter $\beta > 0$ governs the decay of transient impact.

In the presence of transient price impact, the LP's wealth in the pool $(L_t^{\nu})_{t\in[0,T]}$ is indirectly affected by trading activity $\nu$   through the cumulative effect of price impact. We 
write
\begin{equation}
L_t^{\nu}
:= \int_{0}^{t} \Pi(F_u,\kappa)\,du + X_t + Y_t\,S_t^{\nu}.
\end{equation}
As before, the first term represents cumulative fee revenue from noise LTs, while the second and third terms correspond to the marked-to-market value of the LP's liquidity position, now evaluated at the impacted CEX price.

\subsection{Stage two: trading in the CEX}

\paragraph{The performance criterion.} As in the main model, the LP takes the liquidity supply $\kappa$ of stage one as given. The LP holds reserves $\{X_t, Y_t\}$ in the DEX and inventory  $\left(Q_t^\nu\right)_{t \in [0, T]}$ in the CEX. The LP's terminal holdings in the CEX are $Q_T^\nu$, which she values at the terminal CEX price $S_T^\nu$. The LP maximises her terminal wealth subject to penalties for deviating from a perfect replication strategy, so the LP's performance criterion, when employing the admissible strategy $\nu$ is 
\begin{equation}\label{eqn:criterion2 transient}\tag{PT}
    \E\bigg[\underbrace{\left(Y_{T}+Q_{T}^{\nu}\right)\,S_{T}^{\nu}}_\text{combined CEX-DEX position}-\underbrace{\int_{0}^{T}\left(S_{t}^{\nu}+\eta\,\nu_{t}\right)\,\nu_{t}\,\dt}_\text{risk offsetting}-\underbrace{\dfrac{\phi}{2}\int_{0}^{T}\left(Q_{t}^{\nu}+Y_{t}\right)^{2}\,\dt\bigg]}_\text{deviation penalty}\,.
\end{equation}

Define the two bounded linear operators $\mathfrak{Q}\,,\mathfrak{I}:\A_2\to\A_2$ given by\footnote{The operators are well defined using the inequalities
\begin{align}
    \E\!\left[\int_0^T\left|\int_0^t\nu_s\,\ds\right|^2\,\dt\right]&\leq\E\!\left[\int_0^Tt\int_0^t|\nu_s|^2\,\ds\,\dt\right]\leq T^2\;\E\!\left[\int_0^T|\nu_t|^2\,\dt\right]
\end{align}
\begin{align}
    \text{and }\qquad \E\!\left[\int_0^T\left|\int_0^te^{\beta\,(s-t)}\nu_s\,\ds\right|^2\,\dt\right]&\leq\E\!\left[\int_0^Tt\int_0^t|\nu_s|^2\,\ds\,\dt\right]\leq T^2\;\E\!\left[\int_0^T|\nu_t|^2\,\dt\right]\,.
\end{align}}
\begin{equation}
    (\mathfrak{Q}\nu)_t=\int_0^t\nu_s\,\ds\qquad\text{and}\qquad(\mathfrak{I}\nu)_t=c\int_0^te^{\beta(s-t)}\,\nu_s\,\ds\,.
\end{equation}
Notice that $Q^\nu=Q_0+\mathfrak{Q}\nu$ and $I^\nu=\mathfrak{I}\nu$. The following result shows that the performance criterion \eqref{eqn:criterion2 transient} is a real-valued functional on $\A_2$. 

\begin{lemma}\label{lem:perfcrit transient}
Let $G$ be defined as $G_t := \partial_1 h(F_t,\kappa)\,A_t
+ \dfrac{\sigma^2}{2}\,\partial_{11}h(F_t,\kappa)\,F_t .$ The performance criterion \eqref{eqn:criterion2 transient} can be written as $J[\nu]+H$, where $H$ is the following well-defined real number that does not depend on $\nu$:
\begin{equation}
    H=(Y_0+Q_0)\,F_0+\left[\int_{0}^{T}\left\{\left(G_t+\sigma^2\,\partial_1h(F_t,\kappa)\right)\,F_t^2+(Y_t+Q_0)\,A_t\,F_t-\dfrac{\phi}{2}\,(Y_t+Q_0)^2\right\}\, \dt\right]
\end{equation}
and $J$ takes on the bounded linear-quadratic form:
\begin{equation}\label{eq:functional J tiS}
    J[\nu]=-\frac{1}{2}\,\langle\Lambda\nu,\nu\rangle+\langle b,\nu\rangle\,,
\end{equation}
where $\Lambda$ is a symmetric bounded linear operator on $\A_2$ and $b$ is an element of $\A_2$, given by
\begin{equation}
    \Lambda=2\,\eta+\beta\,(\mathfrak{I}^\top\mathfrak{Q}+\mathfrak{Q}^\top\mathfrak{I})-c\,(\mathfrak{Q}+\mathfrak{Q}^\top)+\phi\,\mathfrak{Q}^\top\mathfrak{Q}\,,\label{defeq:Lambda}
\end{equation}
\begin{equation}
    b=\mathfrak{I}^\top(G\,F)+(c-\beta\,\mathfrak{I}^\top)(Y+Q_0)+\mathfrak{Q}^\top(A\,F-\phi\,(Y+Q_0))\,.\label{defeq:b}
\end{equation}
\end{lemma}
\begin{proof}
    See Appendix \ref{proof:lem:perfcrit}. 
\end{proof}

\paragraph{The optimal risk-offseting strategy.} 
In this section, we make the following standing assumption.
\begin{assume}\label{assume:2}
    $c < \sqrt{2 \,\eta\, \phi}$.
\end{assume}
This assumption bounds the instantaneous impact of the LP's trades on CEX prices and ensures that such impacts are offset by sufficiently high trading costs and deviation penalty. This prevents degenerate strategies that would otherwise push prices to infinity. Assumption~\ref{assume:2} is not very restrictive, as the parameter $\phi$ is typically large to reflect the LP's preference for strategies that closely replicate the LP's position in the DEX. Moreover, trading costs $\eta$ associated with immediate execution costs in the CEX are typically of a larger order of magnitude than the impact parameter~$c$.

To solve the problem, it is convenient to write the objective in a quadratic form on the Hilbert space $\mathcal A_2.$  Next, we adapt our results to the case of transient price impact.

\begin{proposition}\label{proposition strict concavity} 
The objective $J$ defined in Lemma~\ref{lem:perfcrit transient} is strictly concave. Moreover, $J$ defined in Lemma~\ref{lem:perfcrit transient} is G\^ateaux differentiable, and its G\^ateaux derivative $\D J[\nu]$ at $\nu\in\A_2$ is an element of $\A_2$, given by
\begin{equation}\label{eq:gateauxder}
\begin{split}
    \D J[\nu]_t&=-2\,\eta\,\nu_{t}+c\,\left(Y_{t}+Q_{t}^{\nu}\right)+\E\left[\left.\int_{t}^{T}\left(A_{s}\,F_s+c\,\nu_{s}-\beta \,I_{s}^{\nu}-\phi\left(Y_{s}+Q_{s}^{\nu}\right)\right)\,\ds\,\right|\,\F_t\right]
    \\
    &\quad\ +c\,e^{t\,\beta}\,\E\left[\left.\int_{t}^{T}e^{-s\,\beta}\left(G_{s}\,F_s-\beta\,\left(Y_{s}+Q_{s}^{\nu}\right)\right)\,\ds\,\right|\,\F_t\right]\,.
\end{split}
\end{equation}
\end{proposition}

\begin{theorem} \label{theorem optimal speed}
The Gâteaux derivative \eqref{eq:gateauxder} vanishes at $\nu^\star \in \mathcal A_2$ if and only if $\nu^\star$ solves the FBSDE
\begin{align}\label{eq:FBSDE}
\begin{split}
\left\{
\begin{array}{rlrl}
   2\,\eta\,\diff\nu^\star_{t}  & =\left(-A_t\,F_t+\beta\,I_{t}+(\phi+c\,\beta)\,\left(Y_t+Q_t\right)+c\,\beta\,Z_t\right)\,\dt+\diff M_{t}, & 2\,\eta\,\nu^\star_{T} & =c\,\left(Y_T+Q_T\right)\,, 
   \\
   \diff Z_{t}  & =\left(\beta\,\left(Z_t+Y_t+Q_t\right)-G_t\,F_t\right)\,\dt+\diff N_t, & Z_{T} & =0\,, 
   \\
   \diff I_{t}  & =\left(c\,\nu^\star_{t}-\beta\,I_{t}\right)\,\dt, & I_0 & =0\,, 
   \\
   \diff Q_{t}  & =\nu^\star_t\,\dt\,, &
\end{array}
\right.
\end{split}
\end{align}
for some $\mathbb F$-martingales $M$ and $N$ such that $M_T,N_T\in L^2(\Omega)$.
\end{theorem}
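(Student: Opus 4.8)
The plan is to treat the claim as a dictionary between the integral (conditional-expectation) form of the first-order condition and its differential (FBSDE) form, passing from one to the other by differentiation and integration respectively. Abbreviate $Q=Q^{\nu^\star}$ and $I=I^{\nu^\star}$, and let $P_t=\E\big[\int_t^T(A_s F_s+c\,\nu^\star_s-\beta I_s-\phi(Y_s+Q_s))\,\ds\,\big|\,\F_t\big]$ and $Z_t=e^{t\beta}\,\E\big[\int_t^T e^{-s\beta}(G_s F_s-\beta(Y_s+Q_s))\,\ds\,\big|\,\F_t\big]$ be the first and ($e^{t\beta}$-weighted) second conditional expectations in \eqref{eq:gateauxder}. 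Lemmas~\ref{lem:FYG} and~\ref{lem:FhA_2}, together with Assumption~\ref{assume:A and h}, ensure the integrands are integrable with enough square-integrability that $P$ and $Z$ are well defined and arise from $L^2$-bounded martingales. With this notation, $\D J[\nu^\star]=0$ in $\A_2$ is equivalent to the pointwise identity
\[
2\,\eta\,\nu^\star_t = c\,(Y_t+Q_t)+P_t+c\,Z_t\qquad\diff\mathbb P\otimes\dt\text{-a.e.}
\]

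For the ``only if'' direction I would differentiate this identity. Writing $P_t=\widetilde M_t-\int_0^t(\cdots)\,\ds$ for the associated martingale $\widetilde M$, and applying the product rule to $e^{-t\beta}\,(e^{t\beta}Z_t)$, one reads off $\diff P_t$ and the linear SDE $\diff Z_t=(\beta(Z_t+Y_t+Q_t)-G_t F_t)\,\dt+\diff N_t$ with $Z_T=0$, which is the second line of \eqref{eq:FBSDE}. Substituting $\diff Q_t=\nu^\star_t\,\dt$, the dynamics~\eqref{eqn:dyn Y} of $Y$, and the expressions for $\diff P_t$ and $\diff Z_t$ into the differential of the identity, two cancellations occur: the $c\,\nu^\star_t\,\dt$ from $c\,\diff Q_t$ cancels the $-c\,\nu^\star_t\,\dt$ inside $\diff P_t$, and the $c\,G_t F_t\,\dt$ from $c\,\diff Y_t$ cancels the $-c\,G_t F_t\,\dt$ inside $c\,\diff Z_t$. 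What survives is exactly the drift of the first line of \eqref{eq:FBSDE}, while the leftover martingale increments $c\,\sigma\,\partial_1 h(F_t,\kappa)\,F_t\,\diff W_t+\diff\widetilde M_t+c\,\diff N_t$ define $M$. Evaluating the identity at $t=T$ (where $P_T=Z_T=0$) gives the terminal condition $2\,\eta\,\nu^\star_T=c(Y_T+Q_T)$, and the $I$- and $Q$-equations hold by definition in \eqref{eq: dyn I} and \eqref{eqn: dyn Q}; $M_T,N_T\in L^2(\Omega)$ follows from the integrability lemmas.

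For the ``if'' direction I would integrate. Solving the $Z$-equation backward with integrating factor $e^{-\beta s}$ on $[t,T]$, using $Z_T=0$ and the martingale property of $N$ (this is where $N_T\in L^2$ lets $\E[\cdot\,|\,\F_t]$ pass through the stochastic integral), recovers the representation of $Z_t$ above, i.e.\ the second conditional-expectation term of \eqref{eq:gateauxder}. Integrating the first line of \eqref{eq:FBSDE} from $t$ to $T$, taking $\E[\cdot\,|\,\F_t]$, and inserting the terminal condition yields $2\,\eta\,\nu^\star_t=c\,\E[Y_T+Q_T\,|\,\F_t]-\E\big[\int_t^T(-A_s F_s+\beta I_s+(\phi+c\beta)(Y_s+Q_s)+c\beta Z_s)\,\ds\,\big|\,\F_t\big]$. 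Expanding $Y_T+Q_T=(Y_t+Q_t)+\int_t^T(G_s F_s+\nu^\star_s)\,\ds+\int_t^T\sigma\,\partial_1 h(F_s,\kappa)\,F_s\,\diff W_s$ — the last integral a true martingale by Lemma~\ref{lem:FhA_2} — and combining integrals, the same two cancellations reduce the claim $\D J[\nu^\star]_t=0$ to the single identity $\E\big[\int_t^T(G_s F_s-\beta(Y_s+Q_s)-\beta Z_s)\,\ds\,\big|\,\F_t\big]=Z_t$. This is immediate from the $Z$-dynamics: $(G_s F_s-\beta(Y_s+Q_s)-\beta Z_s)\,\ds=\diff N_s-\diff Z_s$, so the integral equals $(N_T-N_t)+Z_t$, whose conditional expectation is $Z_t$.

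The algebraic content — the two cancellations and the closing identity — is light; the main obstacle is the measure-theoretic bookkeeping, namely verifying that every stochastic integral in play (the $\diff W$-integral inside $\diff Y$, and the martingale parts $\widetilde M$, $N$, $M$) is a genuine martingale rather than merely a local one, so that conditional expectations of its increments vanish. This is precisely where the integrability from Lemmas~\ref{lem:FYG} and~\ref{lem:FhA_2}, Assumption~\ref{assume:A and h}, and the hypothesis $M_T,N_T\in L^2(\Omega)$ are indispensable; granting these, both implications follow from the manipulations above.
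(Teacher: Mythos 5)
Your proposal is correct and takes essentially the same route as the paper's proof: both rest on identifying the two conditional-expectation terms in \eqref{eq:gateauxder} with $L^2$-bounded martingales, applying the integrating factor $e^{\beta t}$ to produce the $Z$-equation, exploiting the same two cancellations (of $c\,\nu^\star_t$ and of $c\,G_tF_t$), and verifying that $\int_0^{\cdot}\partial_1h(F_s,\kappa)\,F_s\,\diff W_s$ is a true martingale so its conditional increments vanish. The only difference is presentational: you differentiate the first-order condition in the forward direction where the paper rewrites it in integral form, which changes nothing of substance.
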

\begin{proof}
See Appendix \ref{proof:proposition strict concavity} 
    and \ref{proof:theorem optimal speed}.
\end{proof}

The next result shows that the solution to the LP's problem reduces to the solution of a differential Riccati equation, whose solution exists, is unique, and can be obtained efficiently numerically. 
\begin{proposition}\label{prop:FBSDEtoDRE}
Let
\[
\begin{aligned}
B_{11}&=\begin{pmatrix}-\beta & 0 \\ 0 & 0\end{pmatrix}, \quad
B_{12}=\begin{pmatrix}c & 0 \\ 1 & 0\end{pmatrix}, \quad
B_{21}=\frac{1}{2\,\eta}\begin{pmatrix}{\beta} & {\phi+c\beta} \\ 0 & 2\,\eta\,\beta\end{pmatrix}, \quad
B_{22}=\frac{1}{2\,\eta}\begin{pmatrix}0 & c\,\beta \\ 0 & 2\,\eta\,\beta\end{pmatrix}, \\[0.4em]
b_t&=\frac{1}{2\,\eta}\begin{pmatrix}-A_tF_t+(\phi+c\beta)Y_t \\[0.2em] 2\,\eta\left(\beta \,Y_t-G_t\,F_t\right)\end{pmatrix}, \quad
G=\frac{1}{2\,\eta}\begin{pmatrix}0 & c \\ 0 & 0\end{pmatrix}, 
\quad
K=\begin{pmatrix}0 \\ Q_0\end{pmatrix},
\quad
L=\frac{1}{2\,\eta}\begin{pmatrix}c\,Y_T \\ 0\end{pmatrix}.
\end{aligned}
\]
Suppose there exists a solution $P$, which is an $\mathbb R^{2\times2}$-valued $C^1$ function, to the DRE
\begin{equation}\label{eq:dreP}
P^\prime(t)+P(t)\,B_{11}+P(t)\,B_{12}\,P(t)-B_{21}-B_{22}\,P(t)=0\,,
\end{equation}
with terminal condition $P(T)=G$. Define the $\mathbb R^2$-valued processes $\ell$, $\Psi$, and $\Phi$ as follows:
\begin{equation}
\begin{cases}
   \ell_t&=e^{-\int_0^t\left(P(u)\,B_{12}-B_{22}\right)\,\du}\,\E\!\left[\left.L-\int_t^Te^{\int_0^s\left(P(u)\,B_{12}-B_{22}\right)\,\du}\,b_s\,\ds\,\right|\,\F_t\right]\,,\\ 
   \Phi_t&=e^{\int_0^t\left(B_{12}\,P(u)+B_{11}\right)\,\du}\,\left(K+\int_0^te^{-\int_0^s\left(B_{12}\,P(u)+B_{11}\right)\,\du}\,B_{12}\,\ell_s\,\ds\right)\,,\\
   \Psi(t)&=P(t)\,\Phi_t+\ell_t\,.
\end{cases}  
\end{equation}
Then $(\Phi,\Psi)$ is a solution to the FBSDE~\eqref{eq:FBSDE} with
\[
\Psi_t=\begin{pmatrix}\nu_t^\star \\ Z_t\end{pmatrix}\,,\quad
\Phi_t=\begin{pmatrix}I_t \\ Q_t\end{pmatrix}\,.
\]
\noindent
Moreover, under Assumption \ref{assume:2}, the DRE \eqref{eq:dreP} admits a unique solution.
\end{proposition}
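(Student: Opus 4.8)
The plan is to recognise that \eqref{eq:FBSDE} is an affine-linear FBSDE in stacked variables and to solve it by the classical decoupling-field (Riccati) ansatz, reserving the analytic work for the solvability of the resulting differential Riccati equation. First I would stack the forward variables as $\Phi_t=(I_t,Q_t)^\top$ and the backward variables as $\Psi_t=(\nu^\star_t,Z_t)^\top$, and verify by direct coefficient matching that \eqref{eq:FBSDE} is precisely the system $\diff\Phi_t=(B_{11}\Phi_t+B_{12}\Psi_t)\,\dt$ with $\Phi_0=K$, together with $\diff\Psi_t=(B_{21}\Phi_t+B_{22}\Psi_t+b_t)\,\dt+\diff\mathbb M_t$ and $\Psi_T=G\,\Phi_T+L$, for the square-integrable $\mathbb F$-martingale $\mathbb M=(\tfrac1{2\eta}M,\,N)^\top$. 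Reading off the blocks reproduces $B_{11},\dots,B_{22},b,G,K,L$ as stated; in particular $I_0=0$ gives $\Phi_0=K$, and the terminal data $2\eta\,\nu^\star_T=c(Y_T+Q_T)$, $Z_T=0$ become $\Psi_T=G\Phi_T+L$. This step is routine but pins down the correspondence used below.

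Next I would posit the affine ansatz $\Psi_t=P(t)\,\Phi_t+\ell_t$ with $P$ deterministic and $\ell$ adapted. Differentiating and substituting the forward dynamics yields $\diff\Psi_t=\big(P'(t)+P(t)B_{11}+P(t)B_{12}P(t)\big)\Phi_t\,\dt+P(t)B_{12}\ell_t\,\dt+\diff\ell_t$; equating with $(B_{21}\Phi_t+B_{22}\Psi_t+b_t)\,\dt+\diff\mathbb M_t$ and matching the terms proportional to $\Phi_t$ gives exactly the DRE \eqref{eq:dreP}, while $P(T)=G$ comes from the terminal condition. The remaining affine terms give the linear BSDE $\diff\ell_t=\big((B_{22}-P(t)B_{12})\ell_t+b_t\big)\,\dt+\diff\mathbb M_t$ with $\ell_T=L$, which I solve by an integrating factor built from the fundamental matrix of the linear flow generated by $t\mapsto P(t)B_{12}-B_{22}$; taking conditional expectations yields the stated formula for $\ell$. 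Substituting $\Psi=P\Phi+\ell$ back into the forward equation closes it into the linear ODE $\diff\Phi_t=\big((B_{11}+B_{12}P(t))\Phi_t+B_{12}\ell_t\big)\,\dt$, $\Phi_0=K$, whose variation-of-constants solution is the stated $\Phi$; reading off components of $\Psi=P\Phi+\ell$ recovers $(\nu^\star,Z,I,Q)$. Finally I would confirm $\mathbb M_T\in L^2(\Omega)$: its terminal value is an affine functional of $Y_T$ and of $\int_0^T(A_sF_s,\,Y_s,\,G_sF_s)\,\ds$, all square-integrable by Lemmas~\ref{lem:FYG}--\ref{lem:FhA_2}, so the constructed $(\Phi,\Psi)$ is an admissible solution of \eqref{eq:FBSDE}.

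For the well-posedness claim, local existence and uniqueness of $P$ on a left-neighbourhood of $T$ follow from Cauchy--Lipschitz, since the right-hand side of \eqref{eq:dreP} is a locally Lipschitz quadratic polynomial in $P$; uniqueness on all of $[0,T]$ then follows because the difference of any two solutions solves a linear matrix ODE with terminal value $0$, forcing them to coincide. The substantive issue is global existence, i.e.\ ruling out finite-time blow-up. Here I would linearise \eqref{eq:dreP} through the associated $4\times4$ Hamiltonian system with generator $\mathcal H=\bigl(\begin{smallmatrix}B_{11}&B_{12}\\ B_{21}&B_{22}\end{smallmatrix}\bigr)$, writing $P=V\,U^{-1}$ where $(U,V)$ solves the linear ODE with terminal data $U(T)=\mathrm{Id}$, $V(T)=G$. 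Since a linear system never explodes, $(U,V)$ is defined on all of $[0,T]$, and $P$ exists exactly where $U(t)$ is invertible; global solvability thus reduces to showing $U(t)$ never becomes singular.

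The main obstacle is establishing this invertibility, and it is precisely where Assumption~\ref{assume:2} enters: $c<\sqrt{2\,\eta\,\phi}$ is the condition under which $\Lambda$ is coercive and $J$ is strictly concave (Proposition~\ref{proposition strict concavity}). The plan is to argue that a singular $U(t_0)$ would produce a conjugate point for the family of tail quadratic problems on $[t_0,T]$, equivalently a nonzero direction along which the uniformly strictly concave functional $J$ restricted to $[t_0,T]$ would fail to be definite, contradicting the uniform coercivity inherited from Assumption~\ref{assume:2}. Hence $U(t)$ stays invertible, $P$ extends to all of $[0,T]$, and the local-Lipschitz uniqueness makes the global solution unique, which closes the proof via Theorem~\ref{theorem optimal speed}. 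I expect the delicate points to be (i) phrasing this no-conjugate-point argument for the \emph{non-symmetric} Riccati block created by the transient-impact costate $Z$, and (ii) tracking the correct multiplication order in the matrix exponentials underlying the formulas for $\ell$ and $\Phi$.
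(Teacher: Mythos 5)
Your first half — stacking $(I,Q)$ and $(\nu^\star,Z)$, matching the blocks $B_{11},\dots,B_{22},b,G,K,L$, positing $\Psi=P\Phi+\ell$, deriving the DRE and the linear BSDE for $\ell$, and closing the forward equation by variation of constants — is correct and is essentially the paper's construction (the paper runs the verification in the reverse direction, differentiating the stated formulas for $\ell$, $\Phi$, $\Psi$ and checking they satisfy the FBSDE, and then bounds $\E\!\left[\int_0^T|\ell_t|^2\,\dt\right]$, $\E\!\left[\int_0^T|\Phi_t|^2\,\dt\right]$, $\E\!\left[\int_0^T|\Psi_t|^2\,\dt\right]$ to conclude $\nu^\star\in\A_2$, which matches your $L^2$ check).

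The genuine gap is in your treatment of global existence for the DRE. Your plan — linearise via the Hamiltonian system, write $P=V\,U^{-1}$, and rule out singularity of $U$ by a no-conjugate-point argument tied to the coercivity of $\Lambda$ — is only asserted, and the step you yourself flag as delicate is exactly the one that does not go through by the classical route. The classical equivalence between coercivity of the cost functional and absence of conjugate points applies to the \emph{symmetric} Riccati equation of a value function; here $P$ is a non-symmetric $2\times 2$ decoupling field (note $G=\tfrac{1}{2\eta}\bigl(\begin{smallmatrix}0&c\\0&0\end{smallmatrix}\bigr)$ is not symmetric, and the costate pair $(\nu^\star,Z)$ is not the gradient of a value function in the state $(I,Q)$). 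Turning "the tail functional $J$ restricted to $[t_0,T]$ is uniformly coercive" into "$U(t_0)$ is invertible" requires a solvability-implies-decoupling argument for linear FBSDEs with arbitrary initial data at every intermediate time, which is nontrivial for non-symmetric systems and is not supplied. The paper avoids this entirely: it verifies the hypotheses of Theorem 2.3 in \cite{freiling2000} (an existence and uniqueness theorem for non-symmetric matrix Riccati equations) by explicitly constructing auxiliary matrices $C$ and $D$ built from a scalar $z\in\bigl(-\sqrt{\phi\,c^2\,\eta/(2\beta^2)},\,-c^2/(2\beta)\bigr)$, and checking that $\mathcal{L}+\mathcal{L}^\top\preceq 0$ and $C+D\,G+G^\top D^\top\succ 0$; Assumption~\ref{assume:2} enters through the elementary inequalities that make this interval for $z$ nonempty and the definiteness conditions hold. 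The paper also treats $c=0$ separately via the explicit solution in Proposition~\ref{prop:notrasient}, whereas your argument would need to cover that case too. To complete your proof you would either need to carry out the FBSDE-solvability-to-$U$-invertibility argument in the non-symmetric setting or, as the paper does, invoke a comparison-type existence theorem for non-symmetric DREs and verify its algebraic hypotheses.
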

\begin{proof}
    See Appendix \ref{proof:prop:FBSDEtoDRE}.
\end{proof}

\subsection{Stage one: liquidity supply}\label{sec:liqprovuniswap transient}

Assume that the LP starts with a CEX position $Q_0 = -Y_0 = -h(F_0,\kappa)$. Let $S_t^\star$, $Q_t^\star$, and $L_t^\star$ be the CEX price, inventory, and DEX wealth when the LP executes the optimal strategy $\nu_t^\star$ in the CEX, where
$$
L_t^\star:=\int_0^t\Pi(F_u,\kappa)\,\du+X_t + Y_t\, S^\star_t\,,
$$
and $\Pi$ is defined as
\begin{equation}\label{eq:Pi2}
\Pi_t := \Pi(F_{t},\kappa)
:= \frac{\lambda\,\pi\,(v-\pi)\,F_{t}^{2}}
{\partial_{11}\varphi\left(h(F_{t},\kappa),\kappa\right)}\,.    
\end{equation}
In the general case, the optimisation problem of stage one is
\begin{align}\label{eqn:valuefn transient}\tag{KT}
\sup_{\kappa\in[\underline{\kappa},\overline\kappa]}\mathbb{E}\!\left[
-X_0+L_T^{\star}
+ Q_T^{\star}\,S_T^{\star}
- \int_0^T \left(S_t^{\star} + \eta\, \nu^\star_t\right)\, \nu^\star_t \, \diff t
\right]\,,
\end{align}
where $\overline{\kappa}$ denotes the maximum admissible liquidity depth implied by the LP's budget constraint.  The next results show that the LP's objective is well defined and establish mild conditions under which it is continuous and therefore attains its maximum over the compact set $[\underline{\kappa},\overline{\kappa}]$.

\begin{proposition}\label{prop:valuefndfd}
    The objective in  \eqref{eqn:valuefn transient} is well-defined for all $\kappa>0$. Moreover, suppose there exist $\mathfrak{p},\mathfrak{q}\in\mathbb{R}$ and a continuous function $\mathfrak{C}:(0,\infty)\to(0,\infty)$ such that
    \begin{equation}
        |h(x,\kappa)-h(x,\kappa^\prime)|+|\partial_1h(x,\kappa)-\partial_1h(x,\kappa^\prime)|+|\partial_{11}h(x,\kappa)-\partial_{11}h(x,\kappa^\prime)|\leq \left(x^{\mathfrak{p}}+x^{\mathfrak{q}}\right)\,|\mathfrak{C}(\kappa)-\mathfrak{C}(\kappa^\prime)|
    \end{equation}
    for all $x,\kappa,\kappa^\prime>0$. Then the LP's objective in \eqref{eqn:valuefn transient} is continuous in $\kappa$ and therefore attains its maximum over the compact set $[\underline{\kappa},\overline{\kappa}]$.
\end{proposition}
\begin{proof}
    See Appendix~\ref{proof:prop:valuefndfd}.
\end{proof}

\allowdisplaybreaks

\subsection{Proofs}\phantomsection\label{apx:proofs2}

\subsubsection{Proof of Lemma \ref{lem:perfcrit transient}}\phantomsection \label{proof:lem:perfcrit}

\noindent
Take $\nu\in\A_2$. Comparing to those in Lemma~\ref{lem:perfcrit no transient} of the main paper, in Lemma~\ref{lem:perfcrit transient}, \eqref{eqn:criterion2 transient} gets the extra term
\begin{equation}
    \mathcal{E}_1=\E\!\left[Y_T\,F_T+Y_T\,I^\nu_T+Q^\nu_T\,I^\nu_T-\int_0^TI^\nu_t\,\nu_t\,\dt\right]\,,
\end{equation}
$H$ gets the extra term
\begin{equation}
    \mathcal{E}_2=Y_0\,F_0+\E\!\left[\int_0^T\left\{\left(G_t+\sigma^2\,\partial_1h(F_t,\kappa)\right)\,F_t^2+Y_t\,A_t\,F_t\right\}\,\dt\right]\,,
\end{equation}
and
$J[\nu]$ gets the extra term
\begin{equation}
    \mathcal{E}_3=-\frac{1}{2}\,\left\langle\left(\beta\,\left(\mathfrak{I}^\top\mathfrak{Q}+\mathfrak{Q}^\top\mathfrak{I}\right)-c\,\left(\mathfrak{Q}+\mathfrak{Q}^\top\right)\right)\nu,\nu\right\rangle+\left\langle\mathfrak{I}^\top(G\,F)+\left(c-\beta\,\mathfrak{I}^\top\right)(Y+Q_0),\nu\right\rangle
\end{equation}
Therefore, it suffices to show that all three extra terms are well-defined, $\mathcal{E}_1$ and $\mathcal{E}_3$ are continuous in $\nu$, and $\mathcal{E}_1=\mathcal{E}_2+\mathcal{E}_3$.

\vspace{.3cm}\noindent
\textbf{Step 1.} First, we show that $\mathcal{E}_1$ is well-defined and continuous. We know from \eqref{ineq:Q_t} that $Q^\nu_T\in L^2(\Omega)$. We also have $I^\nu_T\in L^2(\Omega)$:
\begin{align}
    \E\!\left[\left|I^\nu_T\right|^2\right]=\E\!\left[\left|c\int_0^Te^{\beta\,(t-T)}\,\nu_t\,\dt\right|^2\right]\leq c^2\,T\,\E\!\left[\int_0^T|\nu_t|^2\,\dt\right]<\infty\,.\label{ineq:I_t}
\end{align}
These estimates together with Lemma~\ref{lem:FhA_2} and Cauchy-Schwarz inequality imply $\E\!\left[Y_T\,F_T\right]$, $\E\!\left[Y_T\,I^\nu_T\right]$, $\E\!\left[Q^\nu_T\,I^\nu_T\right]$ are well-defined. Take $\nu^{(n)}\to\nu$ in $\A_2$. Then
\begin{align}
    \left|\E\left[Y_T\,\left(I^{\nu^{(n)}}_T-I^\nu_T\right)\right]\right|&\leq\E\left[|Y_T|^2\right]^{1/2}\,\E\left[\left|I^{\nu^{(n)}}_T-I^\nu_T\right|^2\right]^{1/2}
    \leq c\,\sqrt{T}\,\E\left[|Y_T|\right]^{1/2}\,\E\left[\int_0^T\left|\nu^{(n)}_t-\nu_t\right|^2\,\dt\right]^{1/2}\,,
    \\
    \left|\E\left[F_T\,\left(Q^{\nu^{(n)}}_T-Q^\nu_T\right)\right]\right|&\leq\E\left[|F_T|^2\right]^{1/2}\,\E\left[\left|Q^{\nu^{(n)}}_T-Q^\nu_T\right|^2\right]^{1/2}
    \leq \sqrt{2\,T}\,\E\left[|F_T|\right]^{1/2}\,\E\left[\int_0^T\left|\nu^{(n)}_t-\nu_t\right|^2\,\dt\right]^{1/2}\,,
\end{align}
and, by Minkwoski's inequality
\begin{align}
    & \hspace*{-4em}\left|\E\left[Q^{\nu^{(n)}}_T\,I^{\nu^{(n)}}_T-Q^{\nu}_T\,I^{\nu}_T\right]\right|
    \\ =&\;\left|\E\left[Q^{\nu^{(n)}}_T\,\left(I^{\nu^{(n)}}_T-I^{\nu}_T\right)+\left(Q^{\nu^{(n)}}_T-Q^{\nu}_T\right)\,I^{\nu}_T\right]\right|
    \\
    \leq&\;\E\left[\left|Q^{\nu^{(n)}}_T\right|^2\right]^{1/2}\,\E\left[\left|I^{\nu^{(n)}}_T-I^{\nu}_T\right|^2\right]^{1/2}+\E\left[\left|Q^{\nu^{(n)}}_T-Q^{\nu}_T\right|^2\right]^{1/2}\,\E\left[\left|I^{\nu}_T\right|^2\right]^{1/2}
    \\
    \leq&\left(\E\left[\left|Q^{\nu^{(n)}}_T-Q^{\nu}_T\right|^2\right]^{1/2}+\E\left[\left|Q^{\nu}_T\right|^2\right]^{1/2}\right)\,\E\left[\left|I^{\nu^{(n)}}_T-I^{\nu}_T\right|^2\right]^{1/2}
    \\
    &\quad\ +\E\left[\left|Q^{\nu^{(n)}}_T-Q^{\nu}_T\right|^2\right]^{1/2}\,\E\left[\left|I^{\nu}_T\right|^2\right]^{1/2}
    \\
    \leq&\sqrt{2}\,c\,T\,\E\left[\int_0^T\left|\nu^{(n)}_t-\nu_t\right|^2\,\dt\right]+\,c\,\sqrt{T}\,\E\left[\left|Q^{\nu}_T\right|^2\right]^{1/2}\,\E\left[\int_0^T\left|\nu^{(n)}_t-\nu_t\right|^2\,\dt\right]^{1/2}
    \\
    &\quad\ +\sqrt{2\,T}\,\E\left[\int_0^T\left|\nu^{(n)}_t-\nu_t\right|^2\,\dt\right]^{1/2}\,\E\left[\left|I^{\nu}_T\right|^2\right]^{1/2}\,.
\end{align}
These estimates imply
\begin{equation}
    \mathcal{E}_1=\E\!\left[Y_T\,F_T+Y_T\,I^\nu_T+Q^\nu_T\,I^\nu_T\right]-\langle\mathfrak{I}\nu,\nu\rangle
\end{equation}
is well-defined and continuous in $\nu$, as desired.

\vspace{.3cm}\noindent
\textbf{Step 2.} Next, we show that $\mathcal{E}_2$ is well-defined. This follows immediately from the proof of Proposition~\ref{prop:valuefndfd no transient} in Appendix~\ref{proof:prop:valuefndfd no transient} and the fact that $G,F^2\in\A_2$.

\vspace{.3cm}\noindent
\textbf{Step 3.} Next, we show that $\mathcal{E}_3$ is well-defined and continuous. Since $A\,F$ is in $\A_2$, the only part that needs verification is $G\,F\in\A_2$:
\begin{align}
    \E\!\left[\int_0^T\left|G_t\,F_t\right|^2\,\dt\right]&\leq\E\!\left[\int_0^T\left|G_t\right|^q\,\dt\right]^{\dfrac{2}{q}}\,\E\!\left[\int_0^TF_t^{\dfrac{2\,q}{q-2}}\,\dt\right]^{\dfrac{q-2}{q}}<\infty
\end{align}
by Lemma~\ref{lem:FhA_2}.

\vspace{.3cm}\noindent
\textbf{Step 3.} Next, we show that $\mathcal{E}_1=\mathcal{E}_2+\mathcal{E}_3$. By density, we may assume that $|\nu|\leq N$ for some constant $N$. Then
\begin{equation}
    \left|I^\nu_t\right|=\left|c\,\int_0^te^{\beta(s-t)}\,\nu_s\,\ds\right|\leq c\,T\,N\,.
\end{equation}
By It\^o's formula, we have
\begin{align}
Y_T\,I^\nu_T&=\int_0^T\left(Y_t\,(c\,\nu_t-\beta\,I^\nu_t)+I^\nu_t\,G_t\,F_t\right)\,\dt+\sigma\int_0^TF_t\,\partial_1h(F_t,\kappa)\,I^\nu_t\,\diff W_t\,,
\end{align}
and
\begin{align}
Q^\nu_T\,I^\nu_T&=\int_0^T\left(Q^\nu_t\,(c\,\nu_t-\beta\,I^\nu_t)+I^\nu_t\,\nu_t\right)\,\dt\,,
\end{align}
These together with \eqref{ito YF} imply
\begin{align}\label{e123 no exp}
\begin{split}
Y_T\,F_T+Y_T\,I^\nu_T+Q^\nu_T\,I^\nu_T-\int_{0}^{T}I^{\nu}_{t}\,\nu_{t}\,\dt&=Y_{0}\,F_{0}+\int_{0}^{T}\left\{ \left(G_{t}+\sigma^{2}\,\partial_{1}h(F_{t},\kappa)\right)\,F_{t}^{2}+Y_{t}\,A_{t}\,F_{t}\right\} \,\dt
\\
&\quad+\int_{0}^{T}\left\{ I_{t}^{\nu}\,G_{t}\,F_{t}+\left(Y_{t}+Q_{t}^{\nu}\right)\,\left(c\,\nu_{t}-\beta\,I_{t}^{\nu}\right)\right\} \,\dt
\\
&\quad+\sigma\int_{0}^{T}F_{t}\,\left[Y_{t}+\partial_{1}h(F_{t},\kappa)\,(F_{t}+I_{t}^{\nu})\right]\,\diff W_{t}\,.
\end{split}
\end{align}
Since 
\begin{align}
    \E\left[\int_0^TF_t^2\,\left|Y_t+\partial_1h(F_t,\kappa)\,(F_t+I^\nu_t)\right|^2\,\dt\right]&\lesssim\E\left[\int_0^TF_t^2\,\left(Y_t^2+|\partial_1h(F_t,\kappa)|^2\,\left(F_t^2+|I^\nu_t|^2\right)\right)\,\dt\right]
    \\
    &\lesssim\E\left[\int_0^TF_t^2\,\left(Y_t^2+\left(F_t^{2\,q_\kappa}+F_t^{2\,p_\kappa}\right)\,\left(F_t^2+c^2\,T^2\,N^2\right)\right)\,\dt\right]
    <\infty\,,
\end{align}
the process
\begin{equation}
    \int_0^tF_t\,\left[Y_t+\partial_1h(F_t,\kappa)\,(F_t+I^\nu_t)\right]\,\diff W_t\,,\quad 0\leq t\leq T\,,
\end{equation}
is a martingale, so
\begin{equation}
    \E\left[\int_0^TF_t\,\left[Y_t+\partial_1h(F_t,\kappa)\,(F_t+I^\nu_t)\right]\,\diff W_t\right]=0\,.
\end{equation}
Taking expectation on \eqref{e123 no exp} gives
\begin{align}
\mathcal{E}_1&=\mathcal{E}_2+\langle G\,F,\mathfrak{I}\nu\rangle+\langle Y+Q_0+\mathfrak{Q}\nu,c\,\nu-\beta\,\mathfrak{I}\nu\rangle
\\
&=\mathcal{E}_2+\langle \mathfrak{I}^\top(G\,F)+(c-\beta\,\mathfrak{I}^\top)(Y+Q_0),\nu\rangle+c\,\langle \mathfrak{Q}\nu,\nu\rangle-\beta\,\langle \mathfrak{I}^\top\mathfrak{Q}\nu,\nu\rangle
\\
&=\mathcal{E}_2+\langle \mathfrak{I}^\top(G\,F)+(c-\beta\,\mathfrak{I}^\top)(Y+Q_0),\nu\rangle+\frac{1}{2}\left(c\,\langle \mathfrak{Q}\nu,\nu\rangle+c\,\langle \nu,\mathfrak{Q}\nu\rangle-\beta\,\langle \mathfrak{I}^\top\mathfrak{Q}\nu,\nu\rangle-\beta\,\langle \nu,\mathfrak{I}^\top\mathfrak{Q}\nu\rangle\right)
\\
&=\mathcal{E}_2+\langle \mathfrak{I}^\top(G\,F)+(c-\beta\,\mathfrak{I}^\top)(Y+Q_0),\nu\rangle+\frac{1}{2}\left(c\,\langle \mathfrak{Q}\nu,\nu\rangle+c\,\langle \mathfrak{Q}^\top\nu,\nu\rangle-\beta\,\langle \mathfrak{I}^\top\mathfrak{Q}\nu,\nu\rangle-\beta\,\langle \mathfrak{Q}^\top\mathfrak{I}\nu,\nu\rangle\right)
\\
&=\mathcal{E}_2+\mathcal{E}_3\,.
\end{align}

\subsubsection{Proof of Proposition \ref{proposition strict concavity}}\phantomsection\label{proof:proposition strict concavity}

\vspace{.3cm}\noindent
First, we show $J$ is strictly concave. Take $\nu,\zeta\in\A_2$ and $\rho\in(0,1)$. We may reuse the proof of Proposition~\ref{proposition strict concavity no transient} in Appendix~\ref{proof:proposition strict concavity no transient} to write
\begin{equation}\label{convex comb J}
    J[\rho\,\nu+(1-\rho)\,\zeta]=\frac{1}{2}\,\rho\,(1-\rho)\,\langle\Lambda(\nu-\zeta),\nu-\zeta\rangle+\rho\,J[\nu]+(1-\rho)\,J[\zeta]\,.
\end{equation}
Consider
\begin{align}
    \langle\Lambda\nu,\nu\rangle&=\left\langle \left(2\,\eta+\beta\,(\mathfrak{I}^\top\mathfrak{Q}+\mathfrak{Q}^\top\mathfrak{I})-c\,(\mathfrak{Q}+\mathfrak{Q}^\top)+\phi\,\mathfrak{Q}^\top\mathfrak{Q}\right)\nu,\nu\right\rangle
    \\
    &=2\,\eta\,\Vert\nu\Vert^2-2\,c\,\langle \mathfrak{Q}\nu,\nu\rangle+\phi\,\Vert \mathfrak{Q}\nu\Vert^2+2\,\beta\langle \mathfrak{Q}\nu,\mathfrak{I}\nu\rangle
\end{align}
By integration by parts, we have
\begin{align}
    \langle\mathfrak{Q}\nu,\nu\rangle=\E\!\left[\int_0^T\int_0^t\nu_s\,\ds\,\nu_t\,\dt\right]&=\E\!\left[\left(\int_0^T\nu_t\,\dt\right)^2-\int_0^T\int_0^t\nu_s\,\ds\,\nu_t\,\dt\right]\,,
\end{align}
so
\begin{equation}
    \langle\mathfrak{Q}\nu,\nu\rangle=\frac{1}{2}\,\E\!\left[\left(\int_0^T\nu_t\,\dt\right)^2\right]\geq 0\,.\label{ineq:Qnunu>=0}
\end{equation}
Let
\begin{equation}
    \tilde{\mathfrak{I}}_t\coloneqq\int_0^te^{\beta\,(s-t)}\,\nu_s\,\ds\,.
\end{equation}
The dynamics \eqref{eq: dyn I} implies
\begin{align*}
    c\,\tilde{\mathfrak{I}}_t=I^\nu_t=c\int_0^t\nu_s\,\diff s-\beta\int_0^tI^\nu_s\,\diff s=c\,(\mathfrak{Q}\nu)_t-\beta\int_0^tc\,\tilde{\mathfrak{I}}_s\,\diff s\,,
\end{align*}
so
\begin{align}
    c\,(\mathfrak{Q}\nu)_t=c\,\left(\tilde{\mathfrak{I}}_t+\beta\int_0^t\tilde{\mathfrak{I}}_s\,\ds\right)\,.\label{eqn: Q frakI}
\end{align}
Therefore,
\begin{align*}
    \langle\mathfrak{Q}\nu,\mathfrak{I}\nu\rangle=\E\!\left[\int_0^T(\mathfrak{Q}\nu)_t\,(\mathfrak{I}\nu)_t\,\dt\right]&=\E\!\left[\int_0^Tc\,(\mathfrak{Q}\nu)_t\,\tilde{\mathfrak{I}}_t\,\dt\right]
    \\
    &=c\,\E\!\left[\int_0^T\left(\tilde{\mathfrak{I}}_t+\beta\int_0^t\tilde{\mathfrak{I}}_s\,\ds\right)\,\tilde{\mathfrak{I}}_t\,\dt\right]
    \\&=c\,\E\!\left[\int_0^T\tilde{\mathfrak{I}}_t^2\,\dt+\beta\int_0^T\tilde{\mathfrak{I}}_t\int_0^t\tilde{\mathfrak{I}}_s\,\diff s\,\dt\right]
    \\
    &=c\,\left(\Vert\tilde{\mathfrak{I}}\Vert^2+\beta\,\left\langle\mathfrak{Q}\tilde{\mathfrak{I}},\tilde{\mathfrak{I}}\right\rangle\right)
    \;\geq 0
\end{align*}
due to \eqref{ineq:Qnunu>=0}. It follows that (recall Assumption~\ref{assume:2})
\begin{align}
    \langle\Lambda\nu,\nu\rangle&=2\,\eta\,\Vert\nu\Vert^2-2\,c\,\langle \mathfrak{Q}\nu,\nu\rangle+\phi\,\Vert \mathfrak{Q}\nu\Vert^2+2\,\beta\langle \mathfrak{Q}\nu,\mathfrak{I}\nu\rangle
    \\
    &\geq2\,\eta\,\Vert\nu\Vert^2-2\,\sqrt{2\,\eta\,\phi}\,\langle \mathfrak{Q}\nu,\nu\rangle+\phi\,\Vert \mathfrak{Q}\nu\Vert^2
    =\left\Vert\sqrt{2\,\eta}\,\nu-\sqrt{\phi}\,\mathfrak{Q}\nu\right\Vert^2\,.
\end{align}
Consider the bounded linear operator $\mathfrak{V}:L^2[0,T]\to L^2[0,T]$ defined by
\begin{align}
    (\mathfrak{V}f)(t)=\sqrt{2\,\eta}f(t)-\sqrt{\phi}\int_0^t f(s)\,\ds\,,
\end{align}
whose inverse is also a bounded linear operator on $L^2[0,T]$ and is given by
\begin{align}
    (\mathfrak{V}^{-1}f)(t)=\frac{1}{\sqrt{2\,\eta}}\,f(t)+\frac{\sqrt{\phi}}{2\,\eta}\,\int_0^te^{\sqrt{\dfrac{\phi}{2\,\eta}}\,(t-s)}\,f(s)\,\ds\,.
\end{align}
Since $\nu(\omega)\in L^2[0,T]$ for $\mathbb P$-a.e. $\omega$, we have
\begin{align}
    \Vert\nu\Vert^2=\int_\Omega\Vert\nu(\omega)\Vert_{L^2[0,T]}^2\,\diff\mathbb P(\omega)&=\int_\Omega\left\Vert \mathfrak{V}^{-1}\mathfrak{V}(\nu(\omega))\right\Vert_{L^2[0,T]}^2\,\diff\mathbb P(\omega)
    \\
    &\leq\int_\Omega\left\Vert \mathfrak{V}^{-1}\right\Vert_{\operatorname{op}}^2\,\left\Vert \mathfrak{V}(\nu(\omega))\right\Vert_{L^2[0,T]}^2\,\diff\mathbb P(\omega)
    =\left\Vert \mathfrak{V}^{-1}\right\Vert_{\operatorname{op}}^2\,\left\Vert\sqrt{2\,\eta}\,\nu-\sqrt{\phi}\,\mathfrak{Q}\nu\right\Vert^2\,.
\end{align}
Therefore,
\begin{equation}\label{coercive Lambda}
    \langle\Lambda\nu,\nu\rangle\geq \left\Vert \mathfrak{V}^{-1}\right\Vert_{\operatorname{op}}^{-2}\,\Vert\nu\Vert^2\,.
\end{equation}
\eqref{coercive Lambda} and \eqref{convex comb J} imply
\begin{align}
    J[\rho\,\nu+(1-\rho)\,\zeta]\geq\frac{1}{2}\,\rho\,(1-\rho)\,\left\Vert \mathfrak{V}^{-1}\right\Vert_{\operatorname{op}}^{-2}\,\Vert\nu-\zeta\Vert^2+\rho\,J[\nu]+(1-\rho)\,J[\zeta]
    \;\geq\rho\,J[\nu]+(1-\rho)\,J[\zeta]\,,
\end{align}
with equality if and only if $\nu=\zeta$. Hence, $J$ is strictly concave.

Next, we show differentiability of $J$. The same calculation in Appendix~\ref{proof:proposition strict concavity no transient} gives 
\begin{equation}\label{eqn:DJHilbertelement}
    \D J[\nu]=-\Lambda\nu+b\,.
\end{equation}
Similarly as \eqref{derivation Q transp}, $\mathfrak{I}^\top$ is given by
    \begin{equation}
        (\mathfrak{I}^\top\zeta)_t=c\,\E\!\left[\left.\int_t^Te^{\beta\,(t-s)}\,\zeta_s\,\ds\,\right|\,\F_t\right]\,.
    \end{equation}
    It follows from \eqref{eqn:DJHilbertelement} that
    \begin{align}
        \D J[\nu]_t&=-2\,\eta\,\nu+c\,(Y+Q_0+\mathfrak{Q}\nu)+\mathfrak{Q}^T(A\,F-\beta\,\mathfrak{I}\nu+c\,\nu-\phi\,(Y+Q_0+\mathfrak{Q}\nu))
        \\
        &\quad\ +\mathfrak{I}^\top(G\,F-\beta\,(Y+Q_0+\mathfrak{Q}\nu))
        \\&=-2\,\eta\,\nu_t+c\,\left(Y_t+Q^{\nu}_t\right)+\E\left[\left.\int_t^T\left(A_s\,F_s+c\,\nu_s-\beta\,I^\nu_s-\phi\,\left(Y_s+Q^{\nu}_s\right)\right)\,\ds\,\right|\,\F_t\right]
        \\
        &\quad+c\,e^{t\,\beta}\,\E\left[\left.\int_t^Te^{-s\,\beta}\,\left(G_s\,F_s-\beta\,\left(Y_s+Q^\nu_s\right)\right)\,\ds\,\right|\,\F_t\right]\,.
    \end{align}
\qed

\subsubsection{Proof of Theorem \ref{theorem optimal speed}}\phantomsection\label{proof:theorem optimal speed}

\vspace{.3cm}\noindent
Suppose $\D J[\nu^\star]=0$ for some $\nu^\star\in\A_2$. Then by Proposition~\ref{proposition strict concavity} we have
\begin{align}
    2\,\eta\,\nu^\star_{t}&=\E\!\left[\left.c\,\left(Y_{t}+Q_{t}^{\nu^\star}\right)+\int_{t}^{T}\left(A_{s}\,F_s+c\,\nu^\star_{s}-\beta \,I_{s}^{\nu^\star}-\phi\left(Y_{s}+Q_{s}^{\nu^\star}\right)\right)\,\ds\,\right|\,\F_t\right]
        \\
        &\quad\ +c\,e^{t\,\beta}\,\E\!\left[\left.\int_{t}^{T}e^{-s\,\beta}\left(G_{s}\,F_s-\beta\,\left(Y_{s}+Q_{s}^{\nu^\star}\right)\right)\,\ds\,\right|\,\F_t\right]\,.
\end{align}
Comparing with Appendix~\ref{proof:theorem optimal speed no transient}, the right-hand side of above identity gets extra term
\begin{align}
    &\E\!\left[\left.c\,\left(Y_{t}+Q_{t}^{\nu^\star}\right)+\int_{t}^{T}\left(c\,\nu^\star_{s}-\beta \,I_{s}^{\nu^\star}\right)\,\ds\,\right|\,\F_t\right]+c\,e^{t\,\beta}\,\E\!\left[\left.\int_{t}^{T}e^{-s\,\beta}\left(G_{s}\,F_s-\beta\,\left(Y_{s}+Q_{s}^{\nu^\star}\right)\right)\,\ds\,\right|\,\F_t\right]
    \\
    &=\E\!\left[\left.c\,\left(Y_{T}+Q_{T}^{\nu^\star}\right)+\int_{0}^{T}\left(-c\,G_s\,F_s-\beta \,I_{s}^{\nu^\star}\right)\,\ds\,\right|\,\F_t\right]-\int_{0}^{t}\left(-c\,G_s\,F_s-\beta \,I_{s}^{\nu^\star}\right)\,\ds
        \\
        & \quad \ +c\,e^{t\,\beta}\,\E\!\left[\left.\int_{0}^{T}e^{-s\,\beta}\left(G_{s}\,F_s-\beta\,\left(Y_{s}+Q_{s}^{\nu^\star}\right)\right)\,\ds\,\right|\,\F_t\right]-c\,e^{t\,\beta}\int_{0}^{t}e^{-s\,\beta}\left(G_{s}\,F_s-\beta\,\left(Y_{s}+Q_{s}^{\nu^\star}\right)\right)\,\ds
        \\ & 
        \quad \  -c\,\sigma\,\E\!\left[\left.\int_{t}^{T}\partial_1h(F_s,\kappa)\,F_s\,\diff W_s\,\right|\,\F_t\right]\,.
\end{align}
    Since
    \begin{align}
        \E\left[\int_0^T\left|\partial_1h(F_t,\kappa)\right|^2\,F_t^2\,\dt\right]\lesssim\E\left[\int_0^T\left(F_t^{2\,q_\kappa+2}+F_t^{2\,p_\kappa+2}\right)\,\dt\right]<\infty\,,
    \end{align}
    the process
    \begin{equation}\label{mart:dhFdW}
        \int_{0}^{t}\partial_1h(F_s,\kappa)\,F_s\,\diff W_s\,,\quad 0\leq t\leq T\,,
    \end{equation}
    is a martingale, so
    \begin{equation}
        \E\left[\left.\int_{t}^{T}\partial_1h(F_s,\kappa)\,F_s\,\diff W_s\,\right|\,\F_t\right]=0\,.
    \end{equation}
    Define process $\tilde{N}$ by
    \begin{align*}
        \tilde{N}_t=\E\left[\left.\int_{0}^{T}e^{-s\,\beta}\left(G_{s}\,F_s-\beta\,\left(Y_{s}+Q_{s}^{\nu^\star}\right)\right)\,\ds\,\right|\,\F_t\right]\,.
    \end{align*}
    Then $\tilde{N}$ is a martingale with
    \begin{align}
        \E\left[|\tilde{N}_T|^2\right]&\leq\E\left[\left|\int_{0}^{T}e^{-s\,\beta}\left(G_{s}\,F_s-\beta\,\left(Y_{s}+Q_{s}^{\nu^\star}\right)\right)\,\ds\right|^2\right]
        \\
        &\lesssim\E\left[\int_0^T\left|G_s\,F_s\right|^2\,\ds\right]+\E\left[\int_0^T\left|Y_s\right|^2\,\ds\right]+\E\left[\int_0^T\left|Q^{\nu^\star}_s\right|^2\,\ds\right]
        \\
        &\lesssim\E\left[\int_0^T\left|G_s\right|^q\,\ds\right]^{2/q}\,\E\left[\int_0^T\left|F_s\right|^r\,\ds\right]^{2/r}+\E\left[\int_0^T\left|Y_s\right|^2\,\ds\right]+\E\left[\int_0^T\left|Q^{\nu^\star}_s\right|^2\,\ds\right]<\infty
    \end{align}
    for some $q\in(2,p)$ and $r>2$ such that $\dfrac{1}{q}+\dfrac{1}{r}=\dfrac{1}{2}$ due to Lemma~\ref{lem:FhA_2}. Define process $Z$ by
    \begin{align*}
        Z_t=e^{t\,\beta}\,\left(\tilde{N}_t-\int_{0}^{t}e^{-s\,\beta}\left(G_{s}\,F_s-\beta\,\left(Y_{s}+Q_{s}^{\nu^\star}\right)\right)\,\ds\right)\,.
    \end{align*}
    Then $Z_T=0$, and generalised It\^o's formula (note $\tilde{N}$ is c\`adl\`ag but not necessarily continuous) implies
    \begin{align} 
        -Z_t&=\int_t^T\beta\,e^{s\,\beta}\left(\tilde{N}_{s-}-\int_{0}^{s}e^{-u\,\beta}\left(G_{u}\,F_u-\beta\,\left(Y_{u}+Q_{u}^{\nu^\star}\right)\right)\,\du\right)\,\diff s+\int_t^Te^{s\,\beta}\,\diff\tilde{N}_s
        \\
        &\quad\ -\int_t^T\left(G_{s}\,F_s-\beta\,\left(Y_{s}+Q_{s}^{\nu^\star}\right)\right)\,\ds+\sum_{t<s\leq T}\left[e^{s\,\beta}\,\tilde{N}_s-e^{s\,\beta}\,\tilde{N}_{s-}-e^{s\,\beta}\,\Delta\tilde{N}_s\right]
        \\
        &=\int_t^T\beta\,e^{s\,\beta}\left(\tilde{N}_{s}-\int_{0}^{s}e^{-u\,\beta}\left(G_{u}\,F_u-\beta\,\left(Y_{u}+Q_{u}^{\nu^\star}\right)\right)\,\du\right)\,\diff s+\int_t^Te^{s\,\beta}\,\diff\tilde{N}_s
        \\
        &\quad\ -\int_t^T\left(G_{s}\,F_s-\beta\,\left(Y_{s}+Q_{s}^{\nu^\star}\right)\right)\,\ds
        \\
        &=\int_t^T\left(\beta\,\left(Z_s+Y_s+Q^{\nu^\star}_s\right)-G_s\,F_s\right)\,\diff s+\int_t^Te^{s\,\beta}\,\diff\tilde{N}_s\,,\label{dyn Z}
    \end{align}
    where the second equality is because a c\`adl\`ag path has at most countably many jumps, which form a Lebesgue measure zero set. Define process $N$ by
    \begin{align*}
        N_t=\int_0^te^{s\,\beta}\,\diff\tilde{N}_s\,,\quad 0\leq t\leq T\,.
    \end{align*}
    Since
    \begin{align}
        \E\left[\int_0^Te^{2\,s\,\eta}\,\diff[\tilde{N}]_s\right]\leq e^{2\,T\,\eta}\,\E\left[[\tilde{N}]_T\right]\leq e^{2\,T\,\eta}\,\E\left[|\tilde{N}_T|^2\right]<\infty\,,
    \end{align}
    $N$ is a martingale with $N_T\in L^2(\Omega)$.     Moreover, the process $M$, obtained by adding
    \begin{align*}
        \E\!\left[\left.c\,\left(Y_T+Q^{\nu^\star}_T\right)+\int_0^T\left(-c\,G_s\,F_s-\beta\,I^{\nu^\star}_s\right)\,\diff s\,\right|\,\F_t\right]+c\,N_t
    \end{align*}
    to \eqref{def mart m}, is also a martingale with $M_T\in L^2(\Omega)$. Consequently, the right-hand side of \eqref{2 eta nu no transient} gets extra term
    \begin{equation}
        c\,\left(Y_T+Q^{\nu^\star}_T\right)-\int_t^T\left(\beta\,I^{\nu^\star}_s+c\,\beta\,\left(Y_s+Q^{\nu^\star}_s\right)+c\,\beta\,Z_s\right)\,\diff s\,.
    \end{equation}
Thus $\nu^\star$ satisfies FBSDE \eqref{eq:FBSDE}, which is obtained from FBSDE \eqref{eq:FBSDE no transient} by adding $\left(\beta\,I_t+c\,\beta\,\left(Y_s+Q_t+Z_t\right)\right)\,\dt$ to $2\,\eta\,\diff\nu^\star_t$, adding $c\,\left(Y_T+Q_T\right)$ to $2\,\eta\,\nu^\star_T$, and including the dynamics of $Z$ (which is \eqref{dyn Z}) and $I$.

\vspace{.3cm}\noindent
Conversely, assume $\nu^\star\in\A_2$ satisfies FBSDE \eqref{eq:FBSDE} for some martingales $M$ and $N$ such that $M_T,N_T\in L^2(\Omega)$. By integrating  $\nu^\star$ and $Z$ and using the terminal conditions, we may write
\begin{equation}
    2\,\eta\,\nu^\star_t=c\,\left(Y_T+Q^{\nu^\star}_T\right)+\int_t^T\left(A_s\,F_s-\beta\,I^{\nu^\star}_s-(\phi+c\,\beta)\,\left(Y_s+Q^{\nu^\star}_s\right)-c\,\beta\,Z_s\right)\,\ds-M_T+M_t\label{2 eta nu 2}
\end{equation}
and
\begin{align}
    Z_t=\int_t^T\left(-\beta\,\left(Z_s+Y_s+Q^{\nu^\star}_s\right)+G_s\,F_s\right)\,\ds-N_T+N_t
\end{align}
Comparing with \eqref{2 eta nu no transient 2}, the right-hand side of \eqref{2 eta nu 2} gets extra term
\begin{align}
    &c\,\left(Y_T+Q^{\nu^\star}_T\right)+\int_t^T\left(-\beta\,I^{\nu^\star}_s-c\,\beta\,\left(Y_s+Q^{\nu^\star}_s\right)-c\,\beta\,Z_s\right)\,\ds
    \\
    &=c\,\left(Y_t+Q^{\nu^\star}_t\right)+\int_t^Tc\,G_s\,F_s\,\ds+\int_t^Tc\,\sigma\,\partial_1h(F_s,\kappa)\,F_s\,\diff W_s+\int_t^Tc\,\nu^\star_s\,\ds
    \\
    &\quad\ +\int_t^T\left(-\beta\,I^{\nu^\star}_s-c\,\beta\,\left(Y_s+Q^{\nu^\star}_s\right)\right)\,\ds+c\,Z_t+\int_t^T\left(c\,\beta\,\left(Y_s+Q^{\nu^\star}_s\right)-c\,G_s\,F_s\right)\,\ds+c\,N_T-c\,N_t
    \\
    &=c\,\left(Y_t+Q^{\nu^\star}_t\right)+\int_t^T\left(c\,\nu^\star_s-\beta\,I^{\nu^\star}_s\right)\,\ds+c\,Z_t+\int_t^Tc\,\sigma\,\partial_1h(F_s,\kappa)\,F_s\,\diff W_s+c\,N_T-c\,N_t\,.\label{eqn:2knustar}
\end{align}
To solve for $Z$, we use generalised It\^o's formula and the dynamics and terminal condition of $Z$ to write
\begin{align}
    Z_t& =e^{t\,\beta}\,e^{-t\,\beta}\,Z_t
    \\ 
    &=e^{t\,\beta}\,\left(\int_t^T\beta\,e^{-s\,\beta}\,Z_s\,\ds-\int_t^Te^{-s\,\beta}\,\left(\beta\,\left(Z_s+Y_s+Q^{\nu^\star}_s\right)-G_s\,F_s\right)\,\ds-\int_t^Te^{-s\,\beta}\,\diff N_s\right)
    \\
    &=-e^{t\,\beta}\,\left(\int_t^Te^{-s\,\beta}\,\left(\beta\,\left(Y_s+Q^{\nu^\star}_s\right)-G_s\,F_s\right)\,\ds+\int_t^Te^{-s\,\beta}\,\diff N_s\right)\,.
\end{align}
Since
    \begin{align}
        \E\!\left[\int_0^Te^{-2\,s\,\eta}\,\diff[ N]_s\right]\leq\E\left[[ N]_T\right]\leq \E\left[|N_T|^2\right]<\infty\,,
    \end{align}
    the process
    \begin{equation}
        \int_0^te^{-s\,\beta}\,\diff N_s\,,\quad 0\leq t\leq T\,,
    \end{equation}
    is a martingale. Therefore,
    \begin{equation}
        \E[Z_t\,|\,\F_t]=e^{t\,\beta}\,\E\!\left[\left.\int_t^Te^{-s\,\beta}\,\left(G_s\,F_s-\beta\,\left(Y_s+Q^{\nu^\star}_s\right)\right)\,\ds\,\right|\,\F_t\right]\,.
    \end{equation}
    Plugging this into \eqref{eqn:2knustar} and taking conditional expectation gives (recall that the process in \eqref{mart:dhFdW} is a martingale)
\begin{equation}
    c\,\left(Y_t+Q^{\nu^\star}_t\right)+\E\!\left[\left.\int_t^T\left(c\,\nu^\star_s-\beta\,I^{\nu^\star}_s\right)\,\ds\,\right|\,\F_t\right]+c\,e^{t\,\beta}\,\E\!\left[\left.\int_t^Te^{-s\,\beta}\,\left(G_s\,F_s-\beta\,\left(Y_s+Q^{\nu^\star}_s\right)\right)\,\ds\,\right|\,\F_t\right]
\end{equation}
Adding above term to the right-hand side of \eqref{2 eta nu no transient 3} gives
    \begin{align}
        2\,\eta\,\nu^\star_t=c\,\left(Y_t+Q^{\nu^\star}_t\right)&+\E\!\left[\left.\int_t^T\left(A_s\,F_s+c\,\nu^\star_s-\beta\,I^{\nu^\star}_s-\phi\,\left(Y_s+Q^{\nu^\star}_s\right)\right)\,\ds\,\right|\,\F_t\right]
        \\
        & +c\,e^{t\,\beta}\,\E\!\left[\left.\int_t^Te^{-s\,\beta}\,\left(G_s\,F_s-\beta\,\left(Y_s+Q^{\nu^\star}_s\right)\right)\,\ds\,\right|\,\F_t\right]\,,
    \end{align}
    that is, $\D J[\nu^\star]_t=0$ by Proposition~\ref{proposition strict concavity}.

\subsubsection{Proof of Proposition \ref{prop:FBSDEtoDRE}}\phantomsection\label{proof:prop:FBSDEtoDRE}
First, we show we may construct a solution of the FBSDE from a solution of the DRE. Suppose $P$ is a solution to the DRE \eqref{eq:dreP} and the processes $\ell$, $\Phi$, $\Psi$ are defined as stated in the proposition. Let us differentiate these processes. For $\Phi$, we have
\begin{align}
    \diff\Phi_t&=\left(B_{12}\,P(t)+B_{11}\right)\,\Phi_t\,\dt+e^{\int_0^t\left(B_{12}\,P(u)+B_{11}\right)\,\du}\,e^{-\int_0^t\left(B_{12}\,P(u)+B_{11}\right)\,\du}\,B_{12}\,\ell_t\,\dt
    \\
    &=\left(B_{12}\,\left(P(t)\,\Phi_t+\ell_t\right)+B_{11}\,\Phi_t\right)\,\dt
    \\
    &=\left(B_{12}\,\Psi_t+B_{11}\,\Phi_t\right)\,\dt\,.
\end{align}
For $\ell$, we have
\begin{align}
    \ell_t&=e^{-\int_0^t\left(P(u)\,B_{12}-B_{22}\right)\,\du}\,\E\!\left[\left.L-\int_t^Te^{\int_0^s\left(P(u)\,B_{12}-B_{22}\right)\,\du}\,b_s\,\ds\,\right|\,\F_t\right]
    \\
    &=e^{-\int_0^t\left(P(u)\,B_{12}-B_{22}\right)\,\du}\,\left(\E\!\left[\left.L-\int_0^Te^{\int_0^s\left(P(u)\,B_{12}-B_{22}\right)\,\du}\,b_s\,\ds\,\right|\,\F_t\right]+\int_0^te^{\int_0^s\left(P(u)\,B_{12}-B_{22}\right)\,\du}\,b_s\,\ds\right)\,.
\end{align}
Let
\begin{equation}
    \tilde{\mathcal{M}}_t=\E\!\left[\left.L-\int_0^Te^{\int_0^s\left(P(u)\,B_{12}-B_{22}\right)\,\du}\,b_s\,\ds\,\right|\,\F_t\right]\,,
\end{equation}
then $\tilde{M}$ is an $\mathbb R^2$-valued martingale. By Lemma~\ref{lem:FhA_2}, we have
\begin{align}
\begin{split}
    \hspace*{-1em}\E\!\left[\int_0^T|b_t|^2\,\dt\right]
    &\lesssim\E\!\left[\int_0^T\left(|A_t\,F_t|^2+|Y_t|^2+|G_t\,F_t|^2\right)\,\dt\right]
    \\
    &\leq\E\!\left[\int_0^T|A_t|^p\,\dt\right]^{\dfrac{2}{p}}\,\E\!\left[\int_0^T|F_t|^{\dfrac{2\,p}{p-2}}\,\dt\right]^{\dfrac{p-2}{p}}
    \\
    &\quad +\E\!\left[\int_0^T|F_t|^2\,\dt\right]+\E\!\left[\int_0^T|G_t|^q\,\dt\right]^{\dfrac{2}{q}}\,\E\!\left[\int_0^T|F_t|^{\dfrac{2\,q}{q-2}}\,\dt\right]^{\dfrac{q-2}{q}}
    <\infty
\end{split}
\label{ineq:bbound}
\end{align}
for some $q\in(2,p)$, and thus
\begin{align}
    \E\!\left[\left|\tilde{\mathcal{M}}_T\right|^2\right]&\leq\E\!\left[\left|L-\int_0^Te^{\int_0^s\left(P(u)\,B_{12}-B_{22}\right)\,\du}\,b_s\,\ds\right|^2\right]\lesssim\E[Y_T^2]+\E\!\left[\int_0^T|b_s|^2\,\ds\right]<\infty\,.\label{ineq:tildecalMbound}
\end{align}
By generalised It\^o's formula,
\begin{align}
    \diff\ell_t&=-\left(P(t)\,B_{12}-B_{22}\right)\,\ell_t\,\dt+e^{-\int_0^t\left(P(u)\,B_{12}-B_{22}\right)\,\du}\,\left(\diff\tilde{\mathcal{M}}_t+e^{\int_0^t\left(P(u)\,B_{12}-B_{22}\right)\,\du}\,b_t\,\dt\right)
    \\
    &=\left(\left(-P(t)\,B_{12}+B_{22}\right)\,\ell_t+b_t\right)\,\dt+e^{-\int_0^t\left(P(u)\,B_{12}-B_{22}\right)\,\du}\,\diff\tilde{\mathcal{M}}_t\,.
\end{align}
Let
\begin{equation}
    \mathcal{M}_t=\int_0^te^{-\int_0^s\left(P(u)\,B_{12}-B_{22}\right)\,\du}\,\diff\tilde{\mathcal{M}}_s\,.
\end{equation}
Since the integrand is deterministic and differentiable and because of \eqref{ineq:tildecalMbound}, $\mathcal{M}$ is an $\mathbb R^2$-valued martingale with $\E[|\mathcal{M}_T|]^2<\infty$. For $\Psi$, we have
\begin{align*}
    \diff\Psi_t&=P^\prime(t)\,\Phi_t\,\dt+P(t)\,\diff\Phi_t+\diff\ell_t
    \\
    &=P^\prime(t)\,\Phi_t\,\dt+P(t)\,(B_{11}\,\Phi_t+B_{12}\,(P(t)\,\Phi_t+\ell_t))\,\dt+\diff\ell_t
    \\
    &=(P^\prime(t)+P(t)\,B_{11}+P(t)\,B_{12}\,P(t))\,\Phi_t\,\dt+P(t)\,B_{12}\,\ell_t\,\dt+\diff\ell_t
    \\
    &=(B_{21}+B_{22}\,P(t))\,\Phi_t\,\dt+P(t)\,B_{12}\,\ell_t\,\dt+((-P(t)\,B_{12}+B_{22})\,\ell_t+b_t)\,\dt+\diff\mathcal{M}_t
    \\
    &=(B_{21}\,\Phi_t+B_{22}\,(P(t)\,\Phi_t+\ell_t)+b_t)\,\dt+\diff\mathcal{M}_t
    \\
    &=(B_{21}\,\Phi_t+B_{22}\,\Psi_t+b_t)\,\dt+\diff\mathcal{M}_t\,.
\end{align*}
Thus we obtain the FBSDE
\begin{align*}
    \left\{
    \begin{array}{rlrl}
        \diff\Phi_t & = \left(B_{11}\,\Phi_t+B_{12}\,\Psi_t\right)\,\dt\,, & \Phi_0 & = K
        \\~\\
        \diff\Psi_t & = \left(B_{21}\,\Phi_t+B_{22}\,\Psi_t+b_t\right)\,\dt + \diff\mathcal{M}_t\,, & \Psi_T & = G\,\Phi_T + L
    \end{array}
    \right.\,,
\end{align*}
which is precisely FBSDE \eqref{eq:FBSDE} written in vectorial form, provided we identify
\[
\Psi_t=\begin{pmatrix}\nu_t^\star \\ Z_t\end{pmatrix}\,,\quad
\Phi_t=\begin{pmatrix}I_t \\ Q_t\end{pmatrix}\,,\quad\mathcal{M}_t=\begin{pmatrix}\dfrac{1}{2\,\eta}\,M_t \\ N_t\end{pmatrix}\,.
\]

\vspace{.3cm}\noindent
Moreover, due to \eqref{ineq:bbound} and \eqref{ineq:tildecalMbound}, we obtain the three inequalities
\begin{align}
    \E\!\left[\int_0^T|\ell_t|^2\,\dt\right]&=\E\!\left[\int_0^T\left|e^{-\int_0^t\left(P(u)\,B_{12}-B_{22}\right)\,\du}\,\left(\tilde{\mathcal{M}}_t+\int_0^te^{\int_0^s\left(P(u)\,B_{12}-B_{22}\right)\,\du}\,b_s\,\ds\right)\right|^2\,\dt\right]
    \\
    &\lesssim\E\!\left[\int_0^T\left(\left|\tilde{\mathcal{M}}_t\right|^2+\int_0^t|b_s|^2\,\ds\right)\,\dt\right]
    \\
    &\lesssim\E\!\left[\left|\tilde{\mathcal{M}_T}\right|^2\right]+\E\!\left[\int_0^T|b_t|^2\,\dt\right]
    <\infty\,,
\end{align}
\begin{align}
    \E\!\left[\int_0^T|\Phi_t|^2\,\dt\right]&=\E\!\left[\int_0^T\left|e^{\int_0^t\left(B_{12}\,P(u)+B_{11}\right)\,\du}\,\left(K+\int_0^te^{-\int_0^s\left(B_{12}\,P(u)+B_{11}\right)\,\du}\,B_{12}\,\ell_s\,\ds\right)\right|^2\,\dt\right]
    \\
    &\lesssim Q_0^2+\E\!\left[\int_0^T\int_0^t|\ell_s|^2\,\ds\,\dt\right]
    \\
    &\lesssim Q_0^2+\E\!\left[\int_0^T|\ell_t|^2\,\dt\right] <\infty\,,
\end{align}
and
\begin{equation}
    \E\!\left[\int_0^T|\Psi_t|^2\,\dt\right]=\E\!\left[\int_0^T\left|P(t)\,\Phi_t+\ell_t\right|^2\,\dt\right]
    \lesssim\E\!\left[\int_0^T|\Phi_t|^2\,\dt\right]+\E\!\left[\int_0^T|\ell_t|^2\,\dt\right]
    <\infty\,,
\end{equation}
which implies $\nu^\star\in\A_2$.

\vspace{.3cm}\noindent
Next, we show the DRE \eqref{eq:dreP} admits a unique solution under Assumption \ref{assume:2}, that is, $c^2<2\,\eta\,\phi$. Here we only consider the case where $c>0$. The $c=0$ case is addressed in Proposition \ref{prop:notrasient}, where we derive an explicit solution of \eqref{eq:dreP}. Let
\begin{equation}
    z=-\frac{1}{2}\left(\frac{c^2}{2\,\beta}+\sqrt{\frac{\phi\,c^2\,\eta}{2\,\beta^2}}\right)<0
\end{equation}
and
\begin{equation}
    w = \frac{2\,\beta\,z^2}{c\,\eta}\,.
\end{equation}
Since
\begin{equation}
    \sqrt{\frac{\phi\,c^2\,\eta}{2\,\beta^2}}>\sqrt{\frac{c^4}{4\,\beta^2}}=\frac{c^2}{2\,\beta}\,,
\end{equation}
we have
\begin{equation}
    -\sqrt{\frac{\phi\,c^2\,\eta}{2\,\beta^2}}<z<-\frac{c^2}{2\,\beta}\,.\label{range:z}
\end{equation}
Let
\begin{equation}
    C = \begin{pmatrix} 1 & 0 \\ 0 & w \end{pmatrix}\,,
    \quad
    D = \begin{pmatrix} 0 & \dfrac{z}{2\,\eta} \\ z & -\dfrac{c\,z}{2\,\eta} \end{pmatrix}\,,
\end{equation}
and
\begin{equation}
    \mathcal{L}=\begin{pmatrix}
        C\,B_{11}+D\,B_{21} & C\,B_{12}+B_{11}^\top\,D+D\,B_{22}
        \\
        0 & B_{12}^\top\,D
    \end{pmatrix}\,.
\end{equation}
Consider
\begin{equation}
    \mathcal{L}+\mathcal{L}^\top=\begin{pmatrix}
        \mathcal{K}_{11} & \mathcal{K}_{12}
        \\
        \mathcal{K}_{12}^\top & \mathcal{K}_{22}
    \end{pmatrix}\,,
\end{equation}
where
\begin{align}
    \mathcal{K}_{11}&=C\,B_{11}+(C\,B_{11})^\top+D\,B_{21}+(D\,B_{21})^\top
    =\begin{pmatrix}
            -2\,\beta & \dfrac{\beta\,z}{k} \\
            \dfrac{\beta\,z}{k} & \dfrac{\phi\,z}{k}
        \end{pmatrix}\,,
        \\
    \mathcal{K}_{12}&=C\,B_{12}+B_{11}^\top\,D+D\,B_{22}=\begin{pmatrix}
            c & 0 \\
            w & 0
        \end{pmatrix}\,, 
    \\
    \mathcal{K}_{22}&=B_{12}^\top\,D+D^\top\,B_{12}=\begin{pmatrix}
            2\,z & 0 \\
            0 & 0
        \end{pmatrix}\,.
\end{align}
We have
\begin{equation}
    \mathcal{K}_{22}\preceq 0\label{Lpsd:1}
\end{equation}
and
\begin{equation}
    (I-\mathcal{K}_{22}\,\mathcal{K}_{22}^\dagger)\,\mathcal{K}_{12}^\top=\begin{pmatrix}
        0 & 0 \\
        0 & 1
    \end{pmatrix}\,\begin{pmatrix}
        c & w \\
            0 & 0
    \end{pmatrix}=\begin{pmatrix}
        0 & 0 \\
        0 & 0
    \end{pmatrix}\,.\label{Lpsd:2}
\end{equation}
Also, consider
\begin{align}
    \mathcal{K}_{11}-\mathcal{K}_{12}\,\mathcal{K}_{22}^\dagger\,\mathcal{K}_{12}^\top &=\begin{pmatrix}
            -2\,\beta & \dfrac{\beta\,z}{k} \\
            \dfrac{\beta\,z}{k} & \dfrac{\phi\,z}{k}
        \end{pmatrix}-\begin{pmatrix}
            c & 0 \\
            w & 0
        \end{pmatrix}\,\begin{pmatrix}
            \dfrac{1}{2\,z} & 0 \\
            0 & 0
        \end{pmatrix}\,\begin{pmatrix}
            c & w \\
            0 & 0
        \end{pmatrix} 
        \\ & 
        =\begin{pmatrix}
            -2\,\beta-\dfrac{c^2}{2\,z} & 0 \\
            0 & \dfrac{\phi\,z}{k}-\dfrac{w^2}{2\,z}
        \end{pmatrix}\,.
\end{align}
Due to \eqref{range:z}, we have
\begin{equation}
    -2\,\beta-\frac{c^2}{2\,z}<-2\,\beta+\frac{c^2}{2}\cdot\frac{2\,\beta}{c^2}=-\beta<0
\end{equation}
and
\begin{align}
    \frac{\phi\,z}{k}-\frac{w^2}{2\,z}=\frac{\phi\,z}{k}-\frac{2\,\beta^2\,z^3}{c^2\,\eta^2}=\frac{z}{k}\left(\phi-\frac{2\,\beta^2\,z^2}{c^2\,\eta}\right)<\frac{z}{k}\left(\phi-\frac{2\,\beta^2}{c^2\,\eta}\cdot\frac{\phi\,c^2\,\eta}{2\,\beta^2}\right)=0,
\end{align}
so
\begin{equation}
    \mathcal{K}_{11}-\mathcal{K}_{12}\,\mathcal{K}_{22}^\dagger\,\mathcal{K}_{12}^\top\prec 0\,.\label{Lpsd:3}
\end{equation}
Combining \eqref{Lpsd:1}, \eqref{Lpsd:2}, and \eqref{Lpsd:3}, we conclude
\begin{equation}
    \mathcal{L}+\mathcal{L}^\top\preceq0\,.
\end{equation}
Moreover, 
\begin{equation}
    C+D\,G+G^\top\,D^\top=\begin{pmatrix}
       1  & 0 \\
        0 & w+\dfrac{c\,z}{k}
    \end{pmatrix}\succ0\,,
\end{equation}
since \eqref{range:z} implies
\begin{equation}
    w+\frac{c\,z}{k}=\frac{2\,\beta\,z^2}{c\,\eta}+\frac{c\,z}{k}=z\,\left(\frac{2\,\beta\,z}{c\,\eta}+\frac{c}{k}\right)>z\,\left(-\frac{2\,\beta}{c\,\eta}\cdot\frac{c^2}{2\,\beta}+\frac{c}{k}\right)=0\,.
\end{equation}
By Theorem 2.3 in \cite{freiling2000}, DRE \eqref{eq:dreP} has a unique solution.

\qed

\subsubsection{Proof of Proposition \ref{prop:valuefndfd}}\phantomsection\label{proof:prop:valuefndfd}

\vspace{.3cm}\noindent
This proof follows the same steps as in the proof of Proposition~\ref{prop:valuefndfd no transient} in Appendix~\ref{proof:prop:valuefndfd no transient}, except that we need to show the modified $J[\nu^\star]$ is continuous in $\kappa$.

To that end, fix $\kappa_n\to\kappa$. Due to \eqref{coercive Lambda}, $\Lambda$ is coercive, so Lax-Milgram lemma implies $\Lambda$ has an inverse $\Lambda^{-1}$, which is a bounded linear operator on $\A_2$, and $\nu^\star$ is given by $\nu^\star=\Lambda^{-1}b$, so
\begin{equation}
    J[\nu^\star]=-\frac{1}{2}\,\left\langle\Lambda\Lambda^{-1}b,\Lambda^{-1}b\right\rangle+\left\langle b,\Lambda^{-1}b\right\rangle=\frac{1}{2}\,\left\langle b,\Lambda^{-1}b\right\rangle\,,
\end{equation}
where
\begin{equation}
    b=\mathfrak{I}^\top(F\,G)+(c-\beta\,\mathfrak{I}^\top-\phi\,\mathfrak{Q}^\top)(Y-Y_0)+\mathfrak{Q}^\top(A\,F)\,.
\end{equation}
Since
\begin{align}
    |G_t(\kappa_n)-G_t(\kappa)|&\leq\;|A_t|\,|\partial_1 h(F_t,\kappa_n)-\partial_1 h(F_t,\kappa)|+\frac{\sigma^2}{2}F_t\,|\partial_{11}h(F_t,\kappa_n)-\partial_{11}h(F_t,\kappa)|
    \\
    &\leq\;\left(|A_t|+\frac{\sigma^2}{2}\,F_t\right)\,\left(F_t^{\mathfrak{p}}+F_t^{\mathfrak{q}}\right)\,|\mathfrak{C}(\kappa_n)-\mathfrak{C}(\kappa)|\,,
\end{align}
we have
\begin{align}
    &\hspace*{-2em}\Vert \mathfrak{I}^\top(F\,G(\kappa_n))-\mathfrak{I}^\top(F\,G(\kappa))\Vert
    \\
    &\leq \left\Vert\mathfrak{I}^\top\right\Vert_{\operatorname{op}}\,\E\!\left[\int_0^TF_t^2\,|G_t(\kappa_n)-G_t(\kappa)|^2\,\dt\right]^{\tfrac{1}{2}}
    \\
    &\leq \left\Vert\mathfrak{I}^\top\right\Vert_{\operatorname{op}}\,\E\!\left[\int_0^T\left(|A_t|+\frac{\sigma^2}{2}\,F_t\right)^2\,\left(F_t^{\mathfrak{p}+1}+F_t^{\mathfrak{q}+1}\right)^2\,|\mathfrak{C}(\kappa_n)-\mathfrak{C}(\kappa)|^2\,\dt\right]^{\tfrac{1}{2}}
    \\
    &\leq\left\Vert \mathfrak{I}^{\top}\right\Vert _{\operatorname{op}}\,|\mathfrak{C}(\kappa_{n})-\mathfrak{C}(\kappa)|\\&\quad\left\{ \E\,\left[\int_{0}^{T}|A_{t}|^{p}\,\dt\right]^{\tfrac{1}{p}}\,\left(\left\Vert F^{\tfrac{(\mathfrak{p}+1)\,p}{p-2}}\right\Vert ^{\tfrac{p-2}{p}}+\left\Vert F^{\tfrac{(\mathfrak{q}+1)\,p}{p-2}}\right\Vert ^{\tfrac{p-2}{p}}\right)+\frac{\sigma^{2}}{2}\,\left\Vert F^{\mathfrak{p}+2}+F^{\mathfrak{q}+2}\right\Vert \right\} \,,
\end{align}
so $\kappa\mapsto\mathfrak{I}^\top(F\,G(\kappa))$ is continuous. Since we've already known $\kappa\mapsto Y(\kappa)-Y_0(\kappa)$ from $(0,\infty)$ to $\A_2$ is continuous,
\begin{equation}
    \kappa\mapsto b(\kappa)=\mathfrak{I}^\top(F\,G(\kappa))+(c-\beta\,\mathfrak{I}^\top-\phi\,\mathfrak{Q}^\top)(Y(\kappa)-Y_0(\kappa))+\mathfrak{Q}^\top(A\,F)
\end{equation}
is continuous. It follows that $\kappa\mapsto J[\nu^\star](\kappa)$ is continuous.
\qed

\clearpage

\bibliographystyle{plainnat}
\bibliography{references}

\end{document}